\newtheorem{theorem}{Theorem}
\pgfplotsset{width=2cm,compat=1.5}
\theoremstyle{definition}
\newtheorem{definition}{Definition}
\newcommand{\revison}[1]{\definecolor{sky}{RGB}{255,255,255}\sethlcolor{sky}\hl{#1}}
\journal{Journal of Parallel and Distributed Computing}
\begin{document}

\pgfplotstableread[row sep=\\,col sep=&]{
	noofApp&  HeRAFC&  DRACO&  HFCCI\\ 
	5   &      35 &      35 &      33\\
	5.5 &      38 &      36 &      35\\
	6   &      42 &      36 &      36\\
	6.5 &      45 &      37 &      37\\
	7   &      52 &      42 &      41\\
	7.5 &      56 &      46 &      43\\
	8   &      60 &      55 &      45\\
	8.5 &      61 &      57 &      50\\
	9   &      64 &      58 &      53\\
	9.5 &      65.6 &      59 &      55\\
	10  &      66 &      60 &      56\\
}\oneacompCPUUtilF
\pgfplotstableread[row sep=\\,col sep=&]{
	noofApp &   HeRAFC &   DRACO &   HFCCI\\
        5   &	  26 &	  24 &	  24\\
	5.5 &	  28 &	  27 &	  28\\
	6   &	  36 &	  38 &	  34\\
	6.5 &	  44 &	  47 &	  40\\
	7   &	  50 &	  49 &	  42\\
	7.5 &	  52 &	  50 &	  44\\
	8   &	  58 &	  54 &	  48\\
	8.5 &	  62 &	  57 &	  53\\
	9   &	  64 &	  59 &	  54\\
	9.5 &	  66 &	  61 &	  57\\
	10  &	  67 &	  60 &	  58\\
}\onebcompMemUtilF
\pgfplotstableread[row sep=\\,col sep=&]{
	noofApp &  HeRAFC &  DRACO &  HFCCI\\ 
	5   &	  42 &	  34 &	  33\\
	5.5 &	  46 &	  35 &	  34\\
	6   &	  57 &	  39 &	  36\\
	6.5 &	  63 &	  45 &	  38\\
	7   &	  64 &	  50 &	  38\\
	7.5 &	  67 &	  50 &	  43\\
	8   &	  69 &	  52 &	  45\\
	8.5 &	  71 &	  57 &	  46\\
	9   &	  73 &	  59 &	  48\\
	9.5 &	  77 &	  59 &	  51\\
	10  &	  78 &	  62 &	  53\\ 
}\oneccompBWUtilF
\pgfplotstableread[row sep=\\,col sep=&]{
noofApp &  HeRAFC &  DRACO &  HFCCI \\ 
    5 &	  28 &	  28 &	  27\\
    5.5 & 30 &	  31 &	  34\\
    6 &	  31 &	  34 &	  39\\
    6.5 & 34 &	  40 &	  43\\
    7 &	  39 &	  43 &	  48\\
    7.5 & 41 &	  44 &	  50\\
    8 &	  47 &	  50 &	  59\\
    8.5 & 49 &	  58 &	  62\\
    9 &	  54 &	  59 &	  66\\
    9.5 & 56 &	  59 &	  67\\
    10 &  59 &	  61 &	  68\\
}\twoacompCPUUtilC
\pgfplotstableread[row sep=\\,col sep=&]{
 noofApp&  HeRAFC&  DRACO&  HFCCI\\ 
    5 &	      18 &	  20 &	  21\\
    5.5 &	  22 &	  21 &	  24\\
    6 &	      26 &	  25 &	  30\\
    6.5 &	  30 &	  29 &	  33\\
    7 &	      31 &	  32 &	  35\\
    7.5 &	  32 &	  33 &	  40\\
    8 &	      35 &	  38 &	  50\\
    8.5 &	  42 &	  45 &	  52\\
    9 &	      44 &	  54 &	  58\\
    9.5 &	  47 &	  57 &	  61\\
    10 &	  48 &	  58 &	  62\\ 
}\twobcompMemUtilC
\pgfplotstableread[row sep=\\,col sep=&]{
noofApp& HeRAFC& DRACO& HFCCI\\ 
    5 &	      23 &	  23 &	  26\\
    5.5 &	  23 &	  26 &	  29\\
    6 &	      23 &	  30 &	  31\\
    6.5 &	  25 &	  32 &	  37\\
    7 &	      27 &	  34 &	  40\\
    7.5 &	  27 &	  35 &	  40\\
    8 &	      32 &	  42 &	  45\\
    8.5 &	  32 &	  44 &	  46\\
    9 &	      36 &	  44 &	  47\\
    9.5 &	  37 &	  47 &	  50\\
    10 &	  40 &	  47 &	  52\\
}\twoccompBWUtilC

\pgfplotstableread[row sep=\\,col sep=&]{
noofTasks&  Priority5&  Priority4&  Priority3&  Priority2&  Priority1\\ 
100	&	25.0	&	23.5	&	20		&	17		&	15\\ 
200	&	24.6	&	24		&	19.5	&	16.5	&	15.4\\
300	&	25.2	&	24		&	19.2	&	16		&	15.7\\ 
400	&	24.3	&	23.15	&	18		&	17.8	&	17\\ 
500	&	22.2	&	21.8	&	18.8	&	19		&	18.2\\ 
600	&	21.7	&	21.5	&	19.1	&	19.1	&	18.9\\ 
700	&	20.9	&	20		&	19.3	&	19.3	&	20.6\\ 
800	&	20.1	&	20.1	&	19.5	&	19.8	&	20.6\\ 
900	&	20.9	&	20		&	18.9	&	19.8	&	20.4\\ 
1000&	21.2	&	20.4	&	18.8	&	19.9	&	19.7\\
}\fivePrtyDisttask

\begin{frontmatter}

\title{HeRAFC: Heuristic Resource Allocation and Optimization in MultiFog-Cloud Environment}

\author[label1]{Chinmaya~Kumar~Dehury}
    \ead{chinmaya.dehury@ut.ee}

\author[label2]{Bharadwaj~Veeravalli}
    \ead{elebv@nus.edu.sg}
    
\author[label3]{Satish~Narayana~Srirama \corref{cor}}
    \cortext[cor]{Corresponding author.}
    \ead{satish.srirama@uohyd.ac.in}

\address[label1]{Mobile \& Cloud Lab, Institute of Computer Science, University of Tartu, Tartu 50090, Estonia}

\address[label2]{Department of Electrical and Computer Engineering, National University of Singapore, Singapore.}

\address[label3]{School of Computer and Information Sciences, University of Hyderabad, Gachibowli, Telangana, India.}

\begin{abstract}
\revison{By bringing computing capacity from a remote cloud environment closer to the user, fog computing is introduced. As a result, users can access the services from more nearby computing environments, resulting in better quality of service and lower latency on the network. From the service providers' point of view, this addresses the network latency and congestion issues. This is achieved by deploying the services in cloud and fog computing environments. The responsibility of service providers is to manage the heterogeneous resources available in both computing environments. In recent years, resource management strategies have made it possible to efficiently allocate resources from nearby fog and clouds to users' applications. Unfortunately, these existing resource management strategies fail to give the desired result when the service providers have the opportunity to allocate the resources to the users' application from fog nodes that are at a multi-hop distance from the nearby fog node. The complexity of this resource management problem drastically increases in a MultiFog-Cloud environment. This problem motivates us to revisit and present a novel Heuristic Resource Allocation and Optimization algorithm in a MultiFog-Cloud (HeRAFC) environment. Taking users' application priority, execution time, and communication latency into account, HeRAFC optimizes resource utilization and minimizes cloud load. The proposed algorithm is evaluated and compared with related algorithms. The simulation results show the efficiency of the proposed HeRAFC over other algorithms.}

\end{abstract}
\begin{keyword}
Fog computing, cloud computing, resource management, resource allocation, MultiFog-Cloud, heuristic algorithm
\end{keyword}



\end{frontmatter}

\section{Introduction}


In a bottom-up approach, the Fog Node (FN) receives the users' requests consisting of multiple tasks. FN may process some tasks of a request and offload others to the cloud environment based on the resource availability at the FN, the Service Level Agreement (SLA), the priority of the user or request, and other Quality of Service (QoS) requirements. However, in a top-down approach, the resource availability in the cloud environment is first checked to execute the tasks. Following this, the rest of the tasks of a user's request are assigned to the available resources at FN \cite{kumar_dehury_efficient_2020, 8014362}. Both approaches differ based on what environment is given higher preference while allocating the resource to the users' request. Resource in this context refers to CPU, memory, storage, processing capability, etc., of the servers and bandwidth and transmission latency of the network.

The current architectures consist of multiple users at the bottom layer, served by single FN at the fog layer (middle layer) and a single cloud at the top of the hierarchy \cite{MASS2019100051, comp1, comp2}. The FNs are consisting of less amount of computing and storage resource. On the other hand, from the user's perspective, the resource availability at the cloud is generally considered as infinite, thanks to the underlined virtualization technology and other resource management tools. Management of the entire lifecycle of such diverse computing environments with different sets of resource capacity, configurations and characteristics is refereed to as heterogeneous resource management \cite{delimitrou_quasar_2014}. The heterogeneous resource management for single FN and single cloud environments is crucial especially while hosting the user's request. Inefficient management may lead to resource unavailability at FN, prohibiting the task from being deployed on FN. Several approaches, such as Integer Linear Programming (ILP) \cite{comp1}, heuristic approach, genetic algorithm, game theory \cite{comp2}, etc., have been investigated recently to optimize resource utilization. This resource management problem becomes more complex when the multiple interconnected FNs are introduced at the middle layer of the hierarchy, as shown in Figure \ref{fig:mfcenv} \cite{IndieFog}. This paper addresses the resource allocation and optimization problem in the above-mentioned multiFog-Cloud environment. 

\begin{figure*}[htbp]
	\centering
	\subfloat[Abstract view of MultiFog-Cloud (MFC) environment.]
	{
		\includegraphics[width=0.35\linewidth]{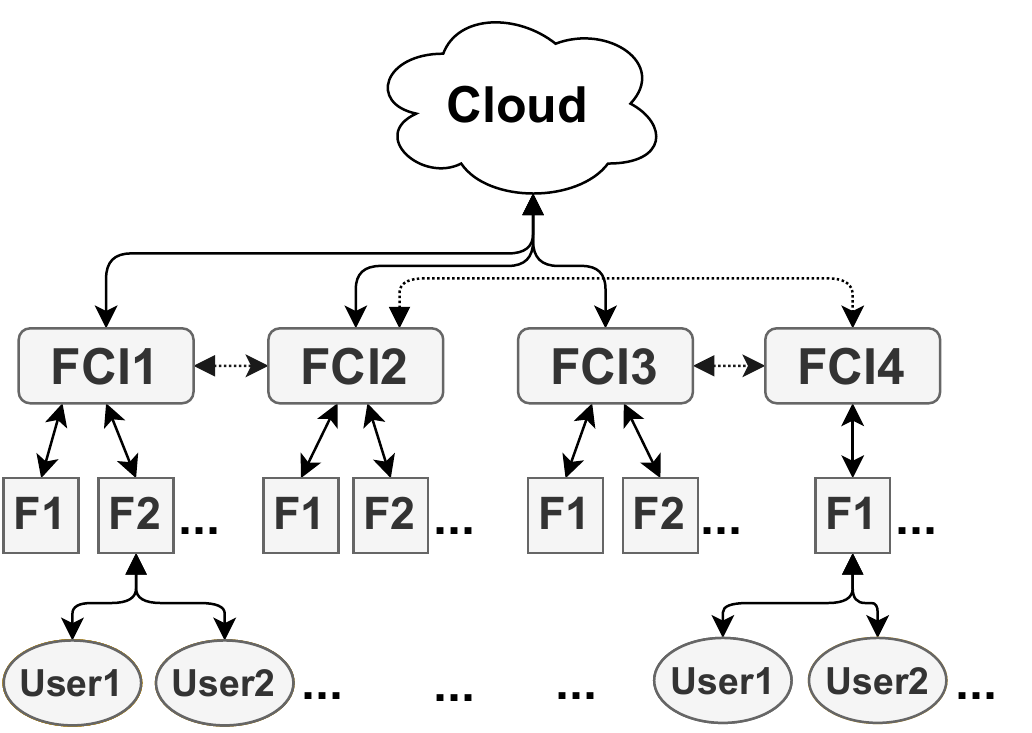}
		\label{fig:mfcenv}
	}\hspace{15mm} 
	\subfloat[An example of user's application (in DAG form).]
	{
		\includegraphics[width=0.15\linewidth]{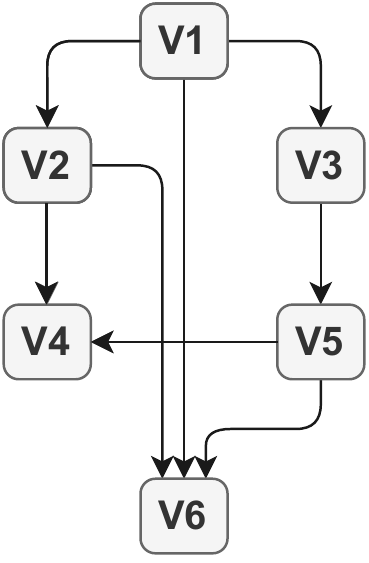}
		\label{fig:userApp}
	}
	\caption{Example of MFC and user's application.}
	\label{fig:mfcenv_userApp}
\end{figure*}

\subsection{Motivation and Contribution}
The existing mechanisms classify the application tasks into two groups: tasks for fog environment and tasks for the cloud environment, based on the resource demand, task type, priority, deadline of the application task, etc. As per the classification result, tasks are deployed and the resources are allocated to the tasks either from nearby fog or the cloud system. For example, in the case of real-time video streaming analysis in smart city, the raw video is preprocessed in the nearby fog computing node. 
Further, the preprocessed video is forwarded to cloud for in-detail processing, indexing and time-stamping of the video, which can be stored and accessed in future across the globe. In this fog-cloud architecture, single FN and one or more cloud providers are involved. 

The above fog-cloud architecture works efficiently as the multiple FNs across the street are deployed to handle the growing demand. 
However, due to uneven workload demand among the FNs in the city, some tasks may get rejected by the FNs at the city center and the request may get forwarded to the cloud. In such case, it is hard to maintain the QoS and the service provider may violate the Service Level Agreement (SLA). 
On the other hand, the resources of the FNs that receive less number of  demand, remain underutilized, resulting in lesser revenue for the service provider. Here, revenue refers to the monetary income generated from the service provider's business. In order to maintain the higher revenue, the service provider may increase the service cost affecting the users' QoE. This motivates us to revisit the resource allocation problem in a fog-cloud environment with the goal to maximize the resource utilization of the FNs. 

In order to address above research issues and be able to provide services in a real-time manner by the fog-cloud computing environment, in this paper, we attempt to modify the existing fog-cloud computing model. This is done via establishing the communication among the FNs through Fog-Cloud Interface (FCI), as shown in Figure \ref{fig:mfcenv}. 
The scope of this paper is to meet the following main goals which are summarized as follows:
\begin{itemize}
	\item Design a novel resource allocation strategy for MFC environment and multi-task user application.
	\item To optimize the resource utilization by allowing the tasks to be placed onto nearby FNs that are at multi-hop distance.
	\item To design and evaluate the performance of an algorithm to decide the order of processing and assigning the tasks onto MFC environment by considering the tasks' topology and their resource demand. 
\end{itemize}
Based on the motivation and the goals mentioned above, the main contributions in this paper can be summarized as follows:
\begin{itemize}
	\item An architecture for the multifog-cloud is designed that allows application tasks to be forwarded to other FNs at multi hop distance, thus reducing the workload on the cloud.
	\item The problem of resource allocation in multifog-cloud scenario is formulated as Integer Linear Programming (ILP) model. FNs located at multi-hop distances can also handle the users' requests, which complicates the problem when the different FCIs have different communication capabilities. 
	\item The physical infrastructure of multifog-cloud is modeled using graph theory. 
	\item A novel dedicated heuristic algorithm is designed for placement of the application tasks in multifog-cloud scenario.
	\item We exhibit the performance of proposed algorithms in improving fog and cloud computing resource utilization, network utilization, latency improvement, and other system parameters through extensive simulation.  
\end{itemize}

The rest of the paper is organized as follows. 
In Section \ref{sec:relWorks}, a brief summary of recent related articles is presented. 
In Section \ref{sec:symodel}, the system model on the users' application and physical infrastructure is presented. 
Section \ref{sec:probFormulation} presents the formulation of resource allocation problem in single-Fog single-cloud environment, where the user's tasks are deployed either in nearby FN or in cloud node.
This followed by Section \ref{sec:MFCmodel}, formulation of resource allocation problem in Multi-Fog Cloud environment, where user's tasks will have more options to be deployed. 
Considering the model, a heuristic resource allocation strategy is presented in Section \ref{sec:sol} followed by the performance evaluation of the proposed algorithm in Section \ref{sec:perfEval}. 
The concluding remarks and the scope of the future work are presented in Section \ref{sec:concls}. 
The list of acronyms that are used in this article is given in Table \ref{table:acrnm}.

\begin{table}
 	\caption{List of Acronyms}
 	\footnotesize
	\centering
	\begin{tabular}{|l|l||l|l|}
        \hline
            \textbf{Acronym } & \textbf{ Description } & \textbf{Acronym } & \textbf{ Description } \\ \hline
            CPS  &  Cyber Physical System   & QoS  &  Quality of Service  \\ \hline
            DAG  &  Directed Acyclic Graph   & SFC  &  Service Function Chain   \\ \hline
            FCI  &  Fog-Cloud Interface   & SLA  &  Service Level Agreement  \\ \hline
            FN  &  Fog Node  & VM  &  Virtual Machine  \\ \hline
            ILP  &  Integer Linear Programming   & WMD  &  Weighted Multi-Dimensional   \\ \hline
            IoT  &  Internet of Things   & LIFO  &  Last-In-First-Out    \\ \hline
            MFC  &  MultiFog-Cloud  & FCFS  &  First Come, Fist Served  \\ \hline
            QoE  &  Quality of Experience  & ~ & ~ \\ \hline
        \end{tabular}\label{table:acrnm}	
\end{table}

\section{Related Works}\label{sec:relWorks}
In this section, the recent related research articles are reviewed and presented taking different aspects into account. The resource management problem of fog and cloud computing is addressed in general, such as in \cite{tran-dang_frato_2021,kumar_dehury_efficient_2020,MISHRA2019217_2019_j7, 8456519_2018_j4, LI202224, 8014362}.

\subsection{Resource management in fog-cloud}
Authors in \cite{comp1, comp2} address the resource allocation problem with the objective to reduce the computation energy and the delay. The proposed computation offloading scheme in \cite{comp1} allows multiple end-users to transmit the data to the same nearby FN. 
However, the proposed algorithm may not be suitable when the cloud and fog environments are combined.
Similarly, in \cite{comp2} the game theory approach is followed to allocate the fog-cloud resources among IoT users with the goal to maximize the number of users, energy cost and the overall delay. The same resource allocation problem can also be seen and approached by using priced timed Petri nets method, as in \cite{comp3}, which predicts the task completion time and proactively allocate the resource in a dynamic manner considering both the monetary cost and task completion cost. The major pitfall of above resource allocation mechanisms is their inability to handle multi fog scenario. 
At a higher level of resource management problem, authors in \cite{8449762_2018_c1} proposed a resource identity management strategy considering fog and cloud computing environment. This would help the system administrator to manage not only the cloud resource but also the fog resources that are provided by the end-users. The limitation with this approach lies within single fog to cloud communication, which may not fit to the today's scenarios.

The proposed placement strategy in \cite{SOUZA20181_2018_j5} uses First-Fit approach to find the suitable FN for each parallel service modules. 
Extending the proposed scheme to fit the multi-fog scenario could further increase the concurrency of the service execution. Similarly, in \cite{MAHMUD2018_2018_j7}, authors proposed the application placement strategy in fog computing environment taking the Quality of Experience (QoE) into account. The proposed mechanism prioritizes the users' applications based on their requirements, intention,  execution platform and other users' expectation parameters, which may slightly deviate from the QoS parameters. Authors in \cite{8456519_2018_j4} investigated the resource allocation problem from the security point of view and proposed privacy-preserving resource allocation mechanisms for fog environments. The proposed mechanism mainly revolves around the fog computing environment and may not give efficient result in case of cloud environment. 

Introducing the fog computing environment between the industrial cloud and the terminal devices, authors in \cite{8399557_2018_j3} proposed a task scheduling algorithm, that ensures the task completion time and optimizes the execution concurrency. However, the proposed algorithm lacks the ability to utilize the computing resource available at the terminal of the nearby fog devices. 
Similarly, considering the heterogeneous FNs, a scalable and decentralized scheduling algorithm is presented by authors in \cite{8556474_2019_j3} with the goal to minimize the service delay by efficiently managing computation and communication resources of FNs.

\subsection{Energy optimization}
In order to offload the computation, while minimizing the energy consumption, authors in \cite{8673721_2019_j1, ADHIKARI2019100053_2019_j4} proposed different strategies, that mitigate and handle the problem of growing resource demand of IoT users. In \cite{8673721_2019_j1}, authors optimize the three major parameters: transmission power of the diverse application, computation resource distribution among those application and offloading decision. Similarly, authors in \cite{ADHIKARI2019100053_2019_j4} optimize two QoS parameters: energy consumption and computational time in sustainable fog-cloud infrastructure. To meet the requirement of multi-fog and cloud environment the above proposed mechanisms need to be extended significantly.

Based on the cost efficiency, authors in \cite{8437204_2018_j1} investigated the cloud and fog resource management problem. The problem is formulated and presented in a three-layer computing environment. A double two-sided matching optimization model is designed keeping the high cost-efficiency performance in mind. Similarly, with the min-max fairness guarantee, authors in \cite{8240666_2018_j2} proposed a sub-optimal resource allocation strategy that offloads the computational task to minimize the energy consumption and delay cost. The strategy is to divide the entire problem into offloading decision-making problem and resource allocation problem, which are solved by semidefinite relaxation and randomization method and Lagrangian dual decomposition method, respectively.

\subsection{Real-world application specific}
 The resource management problem is also addressed considering diverse practical real-life applications such as vehicular network \cite{07448886_2019_j2, comp1}, smart grid \cite{8616989_2018_c2, aranda_context-aware_2022}, smart Buildings \cite{8450331_2018_c3, 8450422_2018_c4}, smart manufacturing \cite{8399557_2018_j3, ROSENDO202271}, smart city \cite{WANG201911_2019_j5, ALAMGIRHOSSAIN2018226}. Authors in \cite{07448886_2019_j2} presented an adaptive resource management algorithm for vehicular networks with the goal to minimize the transmission rate, delay-jitter and the upper-bound of delay. A model for the integration of fog and cloud with smart grid is presented in \cite{8616989_2018_c2}, where the data flow and the request forwarding for electricity to micro-grid are handled by FNs. Fog and cloud computing environments are used for management of the smart building resources through different load balancing algorithms in \cite{8450331_2018_c3, 8450422_2018_c4, 8450410_2018_c5,8449762_2018_c1}. From the application point of view, authors in \cite{WANG201911_2019_j5} show how the smart city resources can be managed by taking advantage of fog computing. The fog computing model acts as the buffer and controller between cyber-physical world and the cloud computing environment with the goal to reduce the coupling in computing and maximize the utilization of resources.

\section{System Model}
\label{sec:symodel}

The users' request consists of multiple inter-connected tasks. An example of user's request consisting of six tasks is shown in Figure \ref{fig:userApp}. Each task is further associated with a certain amount of resource demand, fulfilled by the servers at fog and cloud environments. Figure \ref{fig:mfcenv} shows the  proposed hierarchy of users base at the bottom, FNs and cloud environment. 
Users access the services from the nearby FNs at the fog layer. For example, \emph{User1} and \emph{User2} access the requested service from the FN \emph{F1}. Between the fog layer and the cloud environment, a Fog-Cloud Interface (FCI) is introduced. FCI is responsible for (1) establishing a communication bridge between FNs and cloud and (2) establishing communication interface among FNs. It is also assumed that the FCIs may be interconnected among themselves. The communication among the FCIs enables the communication among FNs. As presented in Figure \ref{fig:mfcenv}, FNs that are connected to \emph{FCI1} may communicate with the FNs that are connected to \emph{FCI4} through \emph{FCI2}. This allows the users, connected to FN \emph{F1} under \emph{FCI4}, to access the cloud resource through \emph{FCI2}.

\begin{table}[htb]
\caption{List of key notations and description}
\resizebox{\textwidth}{!}{
\footnotesize
\begin{tabular}{l l}
	\hline 
	\textbf{Notation} & \textbf{Description} \\ 
	\hline
	$G$& $=\left( V, E\right) $ Resource graph  \\ 
	
	$V$ & $=\{C, F_1, F_2, \dots \}$ in MFC environment \\ 
	
	$C$ & Set of servers in cloud environment  $ = \{s_c^1, s_c^2, s_c^3, .., s_c^{n_c}\}$ \\
	
	$F_i$ & FN $i$ in MFC environment\\
	$s^i_{f_j}$ & Server $i$ in FN $F_j$ \\
	
	
	$\vec{P}_{ij}$ & Physical path between server $s^{m_i} \in F_k$ and $s^{m_j} \in F_l, k \ne l$ \\
	
	$\vec{H}_{ij}$ & Number of hops present in between the server $s^{m_i} \in F_k$ and $s^{m_j} \in F_l, k \ne l$ \\
	$R^x(s^{m_1})$&  Remaining resource of type $x$ available at server $s^{m_1} \in V$  \\ 
	
	$\alpha(s^{m_1},s^{m_2})$ & Available bandwidth available between server $s^{m_1} \in V$ and $s^{m_2} \in V$ \\
	
	$\beta(s^{m_1},s^{m_2})$ & Network latency on the physical link between server  $s^{m_1} \in V$ and $s^{m_2} \in V$ \\
	
	$G'$& $=\left( V', E'\right) $ User's application \\ 
	$V'$ & Set of tasks in user's application. $V'= \{v_1,v_2, ..., v_{n'}\}$ \\
	$E'$ & Set of edges between users' tasks \\
	$\hat{e}(v_i,v_j)$ & The boolean variable indicating if there is an edge existing from task $v_i$ to task $v_j$ \\
	$\bar{R}^x(v_i)$& Resource demand of type $x$ by task $v_i \in V'$  \\ 
	$\alpha'(v_i,v_j)$ & Bandwidth demand between  task $v_i$ and $v_j$ \\
	$\beta'(v_i,v_j)$ & Maximum network latency allowed from task $v_i$ to $v_j$ \\
	$\vec{\beta}(s^{m_i}, s^{m_j})$ & Network latency between the server $s^{m_i} \in F_k$ and $s^{m_j} \in F_l, k\ne l$ in MFC environment\\
	$\kappa$ & The mean network latency between the cloud environment and the fog environment \\ 
	 $P_i$ & Priority of a task $v_i$ \\
	 $M_i$ & Makespan of a task $v_i$\\
	 $\hat{M}_i$ & Normalized makespan of a task $v_i$ \\
	 $\hat{P}_i$ & Normalized Priority of a task $v_i$ \\
	 $\hat{R}_i$ & Normalized computing resource demand of a task $v_i$ \\
	\hline 
\end{tabular} \label{table:notation}
}
\end{table}

The placement and resource allocation strategy follows a bottom-up approach. As a result, the users' tasks are first forwarded to the fog environment. 
If the fog resources are not enough to fulfill the demand, the users' tasks are further forwarded to cloud. 
Due to the proximity the network latency between user and fog server is very less compared to the network latency between user and cloud. As a result, user can experience the real-time performance of different services, for instance services such as online gaming service, online video analysis service which depends on huge data exchange between user device and cloud or FN, etc.

The relationship among the set of cloud servers, fog servers, and the users can be represented as a graph structure. The list of notations that are used through-out the paper including the problem models in Section \ref{sec:probFormulation} and \ref{sec:MFCmodel} is given in Table \ref{table:notation}.

\subsection{Resource Graph}

\begin{figure}[t]
	\centering
	\subfloat[Resource graph consisting of one FN and one cloud environments.]
	{
		\includegraphics[width=0.30\linewidth]{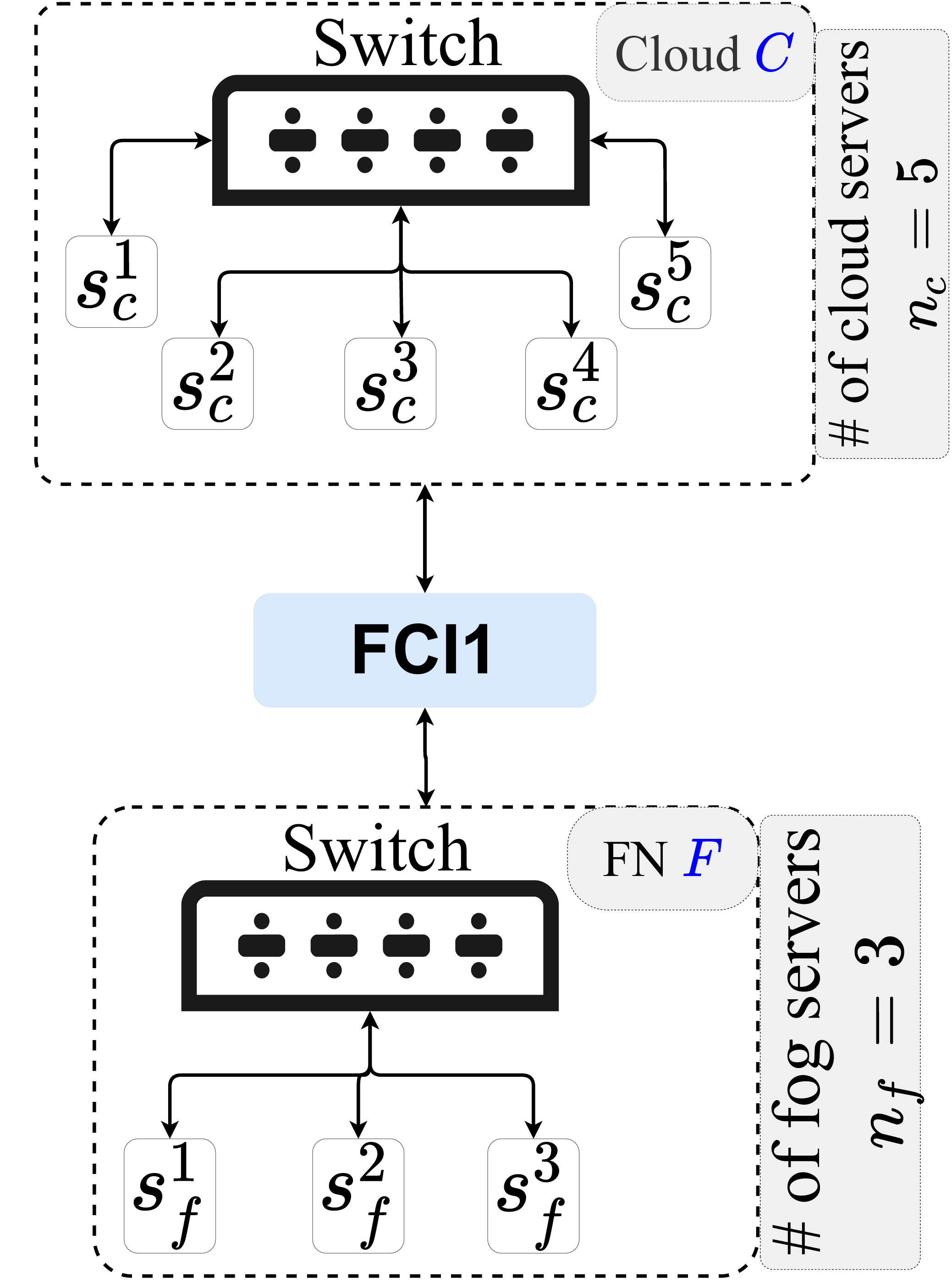}
		\label{fig:resource_graph:1fog-1cloud}
	}\hspace{10mm} 
	\subfloat[Resource graph consisting of two FNs and one cloud environments]
	{
		\includegraphics[width=0.30\linewidth]{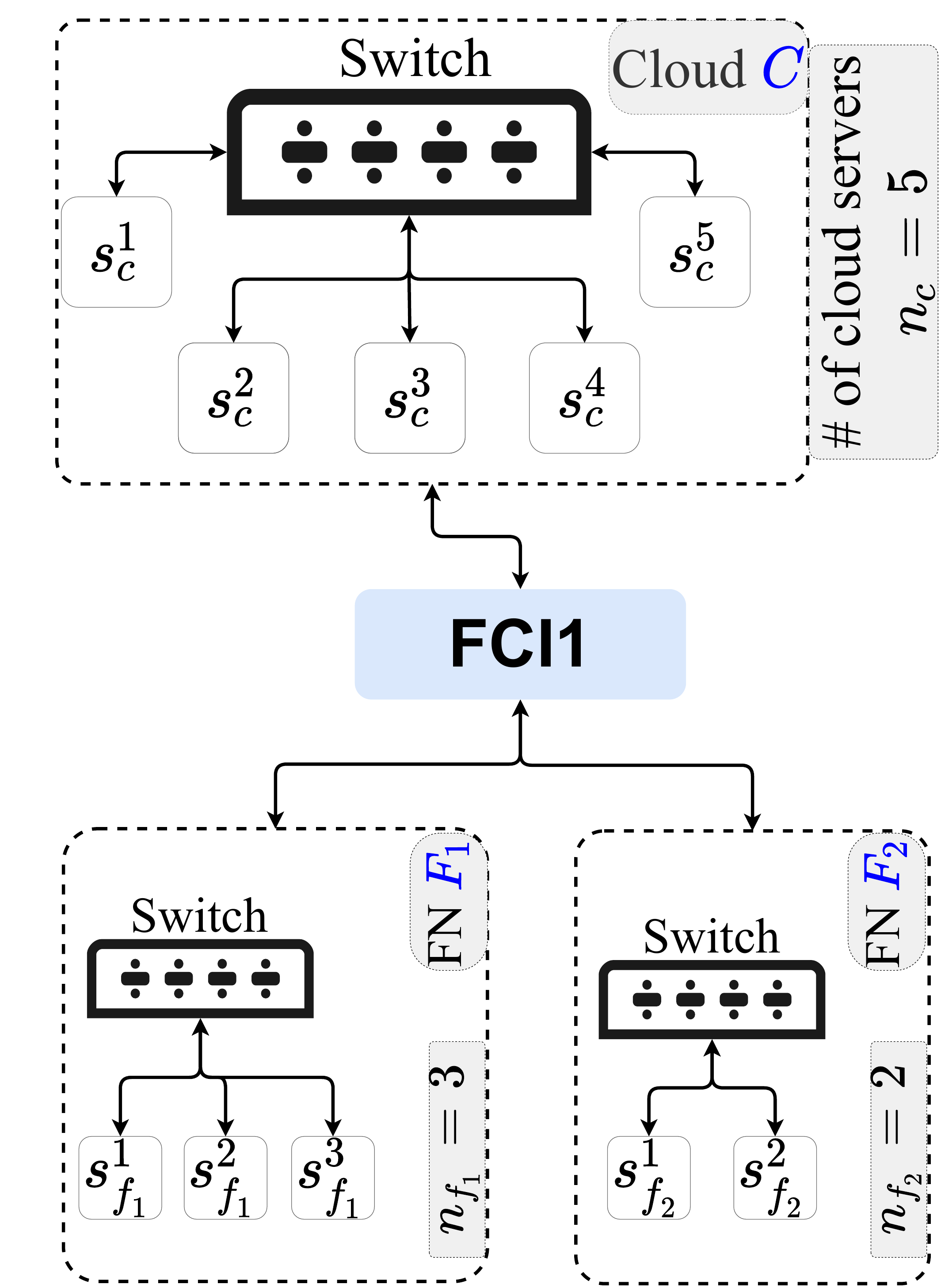}
		\label{fig:resource_graph:2fogs-1cloud}
	}
	\caption{An example of resource graph.} \vspace{-4mm}
	\label{fig:resource_graph}
\end{figure}

The resources of both cloud and fog servers are represented in a graph structure, known as resource graph. Let $G=\left( V, E\right) $ be the resource graph, where $V = \{C, F\}$ be the set of servers available in both Cloud and Fog environment and $E$ be the set of edges among those servers. An example of such resource graph can be seen in Figure \ref{fig:resource_graph:1fog-1cloud}. Further, $C=\{ s_c^1, s_c^2, s_c^3, .., s_c^{n_c} \}$ be the set of servers available in cloud environment and 
$F = \{s_f^1, s_f^2, s_f^3, .., s_f^{n_f}\}$
be the servers available in nearby fog environment. In Figure \ref{fig:resource_graph:1fog-1cloud}, the cloud environment consisting of five servers, $C=\{s_c^1, s_c^2, s_c^3, s_c^4, s_c^5\}$, and the fog environment consisting of three servers, $F = \{s_f^1, s_f^2, s_f^3\}$. For each server $s^{m_1} \in V$, $R^x(s^{m_1})$ represents the remaining resource available of type $x \in \{CPU, Memory\}$.  
Available bandwidth and network latency are the two parameters that are associated with each physical edge $e(s^{m_1},s^{m_2})$. Let $\alpha(s^{m_1},s^{m_2})$ and $\beta(s^{m_1},s^{m_2})$ be the available bandwidth and network latency, respectively, on physical link between server $s^{m_1}$ and $s^{m_2}$, $(s^{m_1},s^{m_2}) \in C$ or $(s^{m_1},s^{m_2}) \in F$. The mean network latency between the cloud and the fog environment is represented by $\kappa$. Since the network latency is assumed to be time-variant, the mean value is always considered during the placement of the tasks. 
Further, it is assumed that there exists a physical link between each pair of fog servers. For example the servers in Figure \ref{fig:resource_graph:1fog-1cloud} are connected directly without any FCI in-between. Mathematically, $e(s^{m_1},s^{m_2}) = 1, \forall (s^{m_1},s^{m_2}) \in F$. In other words, the fog servers are connected to each other in a complete graph. The other form of connection i.e. incomplete graph fog servers is discussed in Section \ref{sec:MFCmodel}, where it is considered as each pair of fog servers may not be connected directly, as the servers may belong to different fog environments in different geographical locations (e.g. in Figure \ref{fig:resource_graph:2fogs-1cloud}).

\subsection{Task Graph}

The users' requests are assumed to be comprised of multiple inter-connected tasks, represented as Directed Acyclic Graph (DAG), $G'=(V', E')$. An example of user's application is presented in Figure \ref{fig:userApp}, which consists of $6$ tasks, $V1 - V6$. Let $V'=\{v_1, v_2, ..., v_{n'}\}$ be the set of tasks and $E'$ be the set of edges among tasks. Each task $v_i$ is associated with resource demand of type CPU and memory, denoted by $\bar{R}^x(v_i), x \in \{CPU, Memory\}$. Let $\hat{e}(v_i,v_j)$ be the boolean variable indicating if there is an edge existing from task $v_i$ to task $v_j$. The direction of the edge also represents the dependencies of task $v_j$ upon the task $v_i$. This dependency indicates that the dependent task $v_j$ cannot start its execution before task $v_i$, as task $v_j$ needs input from task $v_i$. Since the tasks graph is modelled as DAG, the value of $\hat{e}(v_i,v_j)+\hat{e}(v_j,v_i) \le 1$. 

Taking the network resource demand into account, let $\alpha'(v_i,v_j)$ be the bandwidth demand between the task $v_i$ and $v_j$. In order to establish the smooth communication among the tasks, let $\beta'(v_i,v_j)$ be the maximum network latency allowed from task $v_i$ to $v_j$.

\section{Problem formulation } 
\label{sec:probFormulation}
Considering the resource graph and the task graph, the main goal is to map or place the interconnected tasks onto both cloud and fog servers such that the workload distribution among the physical servers can be balanced and the network latency can be minimized. 
The fog servers are given higher preference over the cloud servers while mapping process. The mapping process comprises of two stages: Task mapping and task edge mapping. In task mapping stage, all the tasks are mapped onto multiple servers. Here, we exploit the opportunity to map multiple tasks from same task graph onto the same physical servers. As a result; the number of physical servers involved is less than the number of tasks present in the task graph. In the task edge mapping stage, all the edges between the tasks are mapped onto the physical edges considering the network bandwidth and the network latency. 

%

\subsection{Task-edge mapping}
The boolean variable $\gamma_{c,j}^i = 1$, if the task $v_i$ is placed onto cloud server $s_c^j$, $0$ otherwise. Similarly, $\gamma_{f,k}^i = 1$ if the task $v_i$ is placed onto fog server $s_f^{k}$, $0$ otherwise. 

The boolean variable $D_{m_1m_2}^{i_1i_2}$ represents if the edge between task $v_{i_1}$ and $v_{i_2}$ is mapped onto the physical edge among cloud and fog servers $s^{m_1} \in V$ and $s^{m_2} \in V$. When a task edge needs to be mapped onto the physical edge between one cloud server and one fog server, an additional network latency $\kappa$ would be considered. In order to differentiate the set of fog servers and the cloud servers, the value of $\beta(s_c^{m_1},s_f^{m_2})= \kappa, \forall s_c^{m_1} \in C, \forall s_f^{m_2} \in F$ must be larger than the network latency between any two cloud or fog servers. Mathematically,

\begin{equation}
\kappa > \max_{0 < m_1, m_2 \le n_c}\left\lbrace \beta(s_c^{m_1},s_c^{m_2})\right\rbrace  
\end{equation}
\begin{equation}
\kappa > \max_{0 < m_1, m_2 \le n_f}\left\lbrace \beta(s_f^{m_1},s_f^{m_2})\right\rbrace 
\end{equation}

A precise summary of the assumptions considered in the context of single FN and cloud environment can be made as follows. A user's request is assumed to be comprised of multiple inter-connected tasks and hence no task is isolated from the other tasks in the same request. The fog servers within a FN are connected and there exists a physical link between each pair of fog servers. It is assumed that FCIs can communicate among themselves either over wired or wireless network connection. The network connection in the whole multiFog-cloud environment is time-variant.

\subsection{Objective function}\label{sec:obj:1F1C}
The problem of mapping the users' interconnected tasks to cloud and fog environment can be formulated as mixed integer linear programming problem. The main objectives are to balance the workload by uniform distribution of the tasks and minimize the network latency. 

Hence, the main objective can be divided into two sub-objectives: mapping the tasks to the servers and virtual edges to the physical edges. The following equation represents the assignment of a task $v_{i_1} \in V'$ onto a fog or cloud server.
\begin{equation}\label{eq:obj:task}
	z^{v_{i_1}} = \sum_{\forall s^{m_1} \in V}\left( \gamma_{c,m_1}^{i_1} + \Delta * \gamma_{f,m_1}^{i_1}\right)* \frac{1}{R^x(s^{m_1})}
\end{equation}
Similarly, the mapping of a virtual edge $\hat{e}(v_{i_1},v_{i_2}) \in E'$ onto a physical edge can be represented as follows:
\begin{equation}\label{eq:obj:edge}
	z^{(i_1i_2)} = \sum_{\forall(s^{m_1},s^{m_2}) \in E} D_{m_1m_2}^{i_1i_2} * \hat{e}(v_{i_1},v_{i_2}) * \left[\beta(s^{m_1},s^{m_2}) + \frac{1}{\alpha(s^{m_1},s^{m_2})}\right]
\end{equation}

Using the Equation \ref{eq:obj:task} for task mapping and Equation \ref{eq:obj:edge} for virtual edge mapping, the primary objective function for the user's request $G'$ can be formulated as below.

\textbf{Objective:}
\begin{equation} \label{eq:objfun}
 \min Z = \sum_{\forall v_{i_1} \in V'}  z^{v_{i_1}} + \sum_{\forall (v_{i_1},v_{i_2}) \in E'} z^{(i_1i_2)}
\end{equation}

\textbf{Constraint:}
\begin{equation}\label{const:cloudORfogServer}
\gamma_{c,j}^i + \gamma_{f,k}^i = 1, \quad \forall s_c^j \in C , \forall s_f^{k} \in F
\end{equation}
\begin{equation} \label{const:resrcCnstrnt}
	\bar{R}^x(v_i) < R^x(s_c^j) * \gamma_{c,j}^i + R^x(s_f^{k}) * \gamma_{f,k}^i
\end{equation}
\begin{equation}\label{const:BwCnstrnt}
\alpha(s^{m_1},s^{m_2}) > \alpha'(v_{i_1},v_{i_2})
\end{equation}
\begin{equation}\label{const:latencyCnstrnt}
\beta(s^{m_1},s^{m_2}) < \beta'(v_{i_1},v_{i_2})
\end{equation}
\begin{equation}\label{const:fogServerCnsnt} 
0 < \Delta < 1; \quad 0 < i_1, i_2 \le n_c; \quad 0< m_1, m_2 \le n_f
\end{equation}

The objective function is of two folds: 
\begin{itemize}
	\item The first part of the function allows all the tasks to be mapped onto the physical servers available at nearby FN and cloud. $\Delta$ is used as a constant to encourage the tasks to be mapped in fog servers instead of cloud servers. Since, the objective is to minimize, the task will be mapped to the servers having maximum remaining resource. Further, if two servers (one at FN and other at cloud) having minimum resources, the task will be mapped to the fog server due the multiplication of $\Delta$ constant. 
	\item The second part of the function (derived from Equation \ref{eq:obj:edge}) minimizes network latency while mapping the virtual edges to the physical edges. $\beta(s^{m_1},s^{m_2})$ in Equation \ref{eq:obj:edge} imposes additional network latency when two servers, $s^{m_1}\text{ and }s^{m_2}$, are in different computing environments. As a result, this discourages using any physical connection between cloud and fog as this would give a larger value of $\beta$. Minimizing the term $\frac{1}{\alpha(s^{m_1},s^{m_2})}$ in Equation \ref{eq:obj:edge} infers that the physical edges with maximum bandwidth available are highly preferred for any virtual edge. 
\end{itemize} 
However, the solution must meet the following constraints.

\begin{itemize}
	\item Constraint (\ref{const:cloudORfogServer}) ensures that all the tasks are mapped onto exactly one server. The physical server must either be cloud server or fog server. 
	\item Constraint (\ref{const:resrcCnstrnt}) and (\ref{const:BwCnstrnt}), ensure that the tasks and the edges between the tasks are mapped onto the physical servers and physical edges with enough remain resources available. 
	\item Similarly, Constraint (\ref{const:latencyCnstrnt}) ensures that the required network latency is fulfilled by the physical edge. On the other hand, Constraint (\ref{const:fogServerCnsnt}) ensures that the value of $\Delta$ must lies between $1$ and $0$ to encourage the tasks to be mapped onto the fog server. 
\end{itemize}

\section{MultiFog-Cloud (MFC) Environment} \label{sec:MFCmodel}

In Section \ref{sec:probFormulation}, the system consists of one cloud and one FN. For each user, the fog servers are connected to each other forming a complete graph. However, in order to fit into the real-scenario, a MultiFog-Cloud (MFC) environment is considered, as shown in Figure \ref{fig:mfcenv}, where more than one FNs are deployed in different geographical locations. All the FNs are directly or indirectly connected to each other. As a result, the fog servers in all FNs are forming an incomplete graph and no fog server is unreachable from any other fog server.  

\subsection{Multi-Fog resource graph}
As the system model, in Section \ref{sec:symodel}, is updated, the resource graph can be represented as $V = \{C, F_1, F_2, ....\}$. For example, in Figure \ref{fig:resource_graph:2fogs-1cloud}, two fog environments: $f_1$ consisting of three servers and $f_2$ consisting of two servers are connected to the cloud environment with five servers. Each fog consists of multiple servers represented as $F_i = \{s_{f_i}^1, s_{f_i}^2, s_{f_i}^3, ...\}$. Each server is represented as $s_{f_i}^j$, which indicates server $j$ in FN $F_i$. $|F_i|$ represents the number of servers present in FN $F_i$. It is assumed that the number of servers in all FNs are not uniform. In order to get the location of a server $s^{m_j}$, the notation $L(s^{m_j})$ is used. As the resource graph is not complete, the term physical edge is different from physical path. Physical edge can be defined as the direct link between two servers. 
On the other hand, a physical path is the combination of multiple physical links connected to each other to transfer data from one end to other ends. A physical link can be between FN and FCI, two FCIs, between FCI and cloud. 
The term physical path is used when establishing a connection between two servers from different FNs. The notation $\vec{P}_{ij}$ is used to indicate the physical path between server $s^{m_i} \in F_k$ and $s^{m_j} \in F_l, k\ne l$. If both the servers belongs to same node, the path $\vec{P}_{ij}$ is same to the notation $e(s^{m_i}, s^{m_j})$. $\vec{H}_{ij}$ indicates the number of hops present in between the server $s^{m_i} \in F_k$ and server $s^{m_j} \in F_l, k\ne l$.  $E$ represents the set of all physical edges and physical paths present in the multi-fog and cloud environment. $\vec{\beta}(s^{m_i}, s^{m_j})$ represents the network latency between the server $s^{m_i} \in F_k$ and $s^{m_j} \in F_l, k\ne l$. It is assumed that network latency of any physical link is smaller than the network latency of any path. Mathematically,
\begin{align}\label{eq:ltncyCnstrnt}
\nonumber \max\{\beta(s^{m_i}, s^{m_j}) | \forall F_k, s^{m_i} \text{and} s^{m_j} \in F_k\} < \\
\min\{\vec{\beta}(s^{m_i}, s^{m_j}) | \forall F_k,F_j, s^{m_i}\in F_k, s^{m_j} \in F_l \}
\end{align}
There is no relationship between network latency of a path between FN and cloud and the network latency of fog paths.

\subsection{Resource mapping in multi-fog environment}
Considering the new multi-fog cloud model, the complexity of resource mapping problem leverages when the mapping scope stretched from single-fog single-cloud scenario to multi-fog single-cloud scenario. The user's tasks can be mapped onto nearby fog, remote fog or remote cloud computing environment. During mapping, multiple parameters need to be taken into account, such as the number of hops, network latency, workload distribution, etc. 
The boolean variable $\gamma^i_{f_j,k} = 1$, if the task is placed onto the server $k$ in the FN (except the nearby FN) $F_j$, $0$ otherwise.   
However, the boolean variable $\hat{\gamma}_{f_j,k}^i = 1$, if task $v_i$ is placed onto a nearby fog server $s_{f_j}^{k}$, $0$ otherwise.

The task edges can be mapped onto the physical edge of physical path. In order to achieve the goal of minimizing network latency, the physical edges are given higher preferences over physical path while mapping the task edge. The boolean variable $D_{m_km_l}^{ij}$ represents if the task edge between task $v_i$ and $v_j$ is mapped onto the physical edge $e(s^{m_k}, s^{m_l})$ or the physical path $\vec{P}_{kl}$. 

A precise summary of the key assumptions considered in the context of multi-fog and cloud environment (in addition to the key assumptions in Section \ref{sec:symodel}) can be made as follows. The resource configurations/capacities in all FNs are heterogeneous. 
It is also assumed that the network latency of any physical link is smaller than the network latency of any path. Hence, the latency value between any FN and the corresponding FCI is smaller than the latency value among FCIs and between FCIs and the cloud. The proposed work assumed that the user assigns a priority to each task. When tasks are being scheduled, the effect of resource availability fluctuation is negligible.

\subsection{Objective function}\label{sec:obj:multiF1C}

Extending the objective function in Section \ref{sec:obj:1F1C}, the problem of mapping the user's task graph onto multi-fog and cloud environment can be formulated as mixed integer linear programming problem with the goal to minimize the total network latency and distribute the workload among fog and cloud servers in a balanced manner. The tasks are mapped in such a way that total network latency among tasks is minimum without violating the computing resource demand.

From Equation \ref{eq:objfun}, the mathematical formulation of single task $v_{i_1} \in V'$ mapping onto a physical server available at multi-fog single-cloud environment can be represented as 
\begin{equation}\label{eq:obj:MF1C:task}
	\hat{z}^{v_{i_1}} = \sum_{\forall s^{m_1} \in V} \frac{\left( \gamma_{c,m_1}^{i_1} +  \gamma_{f_j,m_1}^{i_1} + \Delta*\hat{\gamma}_{f_k,m_1}^{i_1} \right) 
	}{R^x(s^{m_1})}
\end{equation}

For further simplification, mapping virtual to physical edges is associated with two objectives: minimizing the network latency and balancing the network workload. While mapping a virtual edge, $(v_{i_1},v_{i_2})$, to a physical edge, the network latency can be calculated as below.
\begin{align}\label{eq:obj:MF1C:edge:latency}
	\nonumber \hat{z}^{(i_1i_2)}_{latency} = \sum_{\forall(s^{m_1},s^{m_2}) \in E} \left\lbrace  \hat{e}(v_{i_1},v_{i_2}) *  \left( D_{m_1m_2}^{i_1i_2} * \right.\right.\\
	\left. \left. \beta(s^{m_1},s^{m_2}) +   D_{m_1m_2}^{i_1i_2} * \vec{\beta}(s^{m_1},s^{m_2}) + \vec{H}_{s^{m_1},s^{m_2}} \right) \right\rbrace
\end{align}
Similarly, the mathematical formulation for mapping a virtual edge, $(v_{i_1},v_{i_2})$, onto a physical edge taking bandwidth availability into account can be represented as, 
\begin{equation}\label{eq:obj:MF1C:edge:bandwidth}
	\hat{z}^{(i_1i_2)}_{bandwidth} = \sum_{\forall(s^{m_1},s^{m_2}) \in E} \frac{D_{m_1m_2}^{i_1i_2} * \hat{e}(v_{i_1},v_{i_2})}{\alpha(s^{m_1},s^{m_2})}
\end{equation}

Considering the Equations \ref{eq:obj:MF1C:task}, \ref{eq:obj:MF1C:edge:latency}, and \ref{eq:obj:MF1C:edge:bandwidth}, the previous objective function in Equation \ref{eq:objfun} can further be modified as follows.\\
\textbf{Objective:}\\
\begin{equation}\label{eq:objfun2}
\min \hat{Z} = \sum_{\forall v_{i_1} \in V'} \hat{z}^{v_{i_1}} + \sum_{\forall (v_{i_1},v_{i_2}) \in E'}   \left[ \hat{z}^{(i_1i_2)}_{latency} + \hat{z}^{(i_1i_2)}_{bandwidth} \right]
\end{equation}

\textbf{Constraint:}\\
\begin{equation}\label{const:obj2:oneTaskOneFog}
\gamma_{f_j,m_1}^i + \gamma_{f_k,m_1}^i \le 1
\end{equation}
\begin{equation}\label{const:obj2:nwltncy}
	\beta'(v_1,v_2) \ge \vec{\beta}(s^{m_1}, s^{m_2})
\end{equation}
\begin{equation}\label{const:obj2:atlstOneTaskinOneFog}
\sum_{\forall v_i \in V'}\hat{\gamma}_{f_j,k}^i \ge 1
\end{equation}

In the first part of the function, servers available at nearby FN, other FNs, and the cloud environments are considered while assigning a task. The second part of the objective function focuses on mapping the virtual edges onto the physical edges. If no suitable physical edge is available, a physical path consisting of multiple physical edges will be chosen (details in Section \ref{sec:sol}). The number of hops and the network latency are collectively considered while selecting a physical path. In addition to the constraints mentioned in Equations \ref{const:cloudORfogServer}-\ref{const:fogServerCnsnt}, the objective function in Equation \ref{eq:objfun2} must satisfy following constraints.

\begin{itemize}
	\item Constraint \ref{const:obj2:oneTaskOneFog} ensures that no task is mapped onto multiple FNs. On the other hand, Constraint \ref{const:obj2:nwltncy} ensures that the maximum allowed network latency of any task edge $\hat{e}(v_{i_1},v_{i_2})$ is smaller than the corresponding physical path $\vec{P}_{m_1m_2}$.
	\item Constraint \ref{const:obj2:atlstOneTaskinOneFog} ensure that at least one task is placed in the nearby FN. This would allow the service provider to deploy the most critical task in the task graph onto the nearby FN. This would also prohibit the service provider to place the entire task graph in a very remote FN
\end{itemize}

\section{Proposed Algorithm} \label{sec:sol}

In this section, we discuss the proposed novel Heuristic Resource Allocation and optimization in MultiFog-Cloud (HeRAFC) algorithm, which works in a non-distributed manner and the users requests are handled on FCFS manner. Figure \ref{fig:mfcenv} shows the architecture of multiple fogs and cloud and Figure \ref{fig:userApp} shows an example of users' application. As discussed in earlier section, the job of this algorithm is to distribute the application tasks among nearby FN, the FNs at multi-hop distance and the cloud. Each FN consists of multiple interconnected fog servers and network devices such as switches, routers, etc.

\begin{definition}{\textbf{1-hop distance:}}
	A FN $F_i$ is said to be at 1-hop distance from other FNs $\forall F_j, F_i \ne F_j$, if both the FNs, $F_i$ and $F_j$, share same FCI. Taking Figure \ref{fig:mfcenv} into account, for all the servers located in cloud $C$, all the FNs in $FCI1, FCI2,$ and $FCI3$ are at 1-hop distance, but not the FNs that are connected to $FCI4$. Similarly, FN $F_1$ and $F_2$ under $FCI1$ are at 1-hop distance from each other.
\end{definition}
\begin{definition}{\textbf{2-hop distance:}}
	A FN $F_i$ is said to be at 2-hop distance from other FNs $\forall F_j, F_i \ne F_j$, if the corresponding FCIs can communicate directly with each other without cloud. For example, fog $F_1$ under $FCI1$ is at 2-hop distance from all the FNs that are connected to $FCI2$. Additionally, the servers in cloud $C$ and the FNs connected to $FCI4$ are at 2-hop distance from each other.
\end{definition}
\begin{definition}{\textbf{n-hop distance:}}
	In general, two FNs are said to at n-hop distance, if $n, n > 0$ number of FCIs are involved in the physical path between corresponding FNs.  For example in Figure \ref{fig:mfcenv}, the FN under $FCI4$ is at 3-hop (here $n=3$) distance from the FN $F_2$ under $FCI1$.
\end{definition}


Three major parameters associated with the users' applications are taken into account: makespan of the task, priority value of each task, and computing resource demand that includes CPU and memory demand.
\begin{definition}{\textbf{Task makespan:}}
	Makespan of a task $v_i$, $M_i$ refers to the time required to execute the task, also referred to as the task's execution time. This excluded the response time. The makespan of the tasks can be used to derive the makespan of the application. 
\end{definition}

\begin{definition}{\textbf{Priority value of task:}}
	Priority of a task $v_i$, $P_i$, refers to how important the task is. The value of $P_i$ is decided by the user. Higher the value of priority, more important the task is. Mathematically, a task $v_1$ is said to be higher priority than the task $v_2$, if $P_1 > P_2$.
\end{definition}

Based on above-mentioned three major parameters, the critical value of each task is calculated. This work considers parameters with a wide range of values. For example, makespan of a task is in time unit and the value may range from 1 to several thousands. Similarly, memory resource demand value is in megabytes and the value may range from hundreds to thousands. The value of CPU resource demand may reach up to 100 and the priority of tasks can be of any discrete integer value. For such a diverse range of values, it is necessary to normalize the values to a range of (0, 1], i.e. greater than 0 and less than or equal to 1. Without normalization, a slight increase in one parameter may have a substantial impact on the final result. On the other hand, a significant change in another parameter may result in a slight change in the final result. The normalized makespan of a task is calculated as:
\begin{equation}\label{eq:normMkspn}
\hat{M}_i = \frac{M_i}{\max\{M_i | \forall v_i \in V\}}
\end{equation}

Similarly the normalized priority value of a task can be calculated as 
\begin{equation}\label{eq:normPriority}
\hat{P}_i = \frac{P_i}{\max\{P_i | \forall v_i \in V\}}
\end{equation}
Since computing resource refers to the CPU and memory demand which are in different units, the normalized value of each type of resource demand is calculated followed by calculating the average of both normalized values. Mathematically, the normalized CPU demand can be calculated as:
\begin{equation}\label{eq:normCPU}
\hat{R}^{CPU}(v_i) = \frac{\bar{R}^{CPU}(v_i)}{ \max \{ R^{CPU}(s^{m_1}) | \forall m_1 \in \{ C, F_1, F_2, \dots \}  \}}
\end{equation}
The normalized value of memory demand can be calculated as:
\begin{equation}\label{eq:normMem}
\hat{R}^{mem}(v_i) = \frac{\bar{R}^{mem}(v_i)}{ \max \{ R^{mem}(s^{m_1}) | \forall m_1 \in \{ C, F_1, F_2, \dots \}  \}}
\end{equation}

Taking the value calculated using Equation \ref{eq:normCPU} and \ref{eq:normMem}, the normalized average weighted computing resource demand can be calculated as follows.
\begin{equation}\label{eq:normCompRsrc}
\hat{R}_i = \frac{\Omega^c*\hat{R}^{CPU}(v_i) + \Omega^m\hat{R}^{mem}(v_i)}{2}
\end{equation}
, where $\Omega^c\text{ and }\Omega^m$ are the constants, $\Omega^c + \Omega^m = 1, 0 < \Omega^c,\Omega^m < 1 $. These constants are used to assign preference values to \textit{CPU} and \textit{mem} resource.

Considering the normalized values calculated in Equation \ref{eq:normMkspn}-\ref{eq:normCompRsrc}, the critical value of a task is calculated. Weighted Multi-Dimensional (WMD) approach is followed, where each parameter represents one dimension in a multi-dimensional space. In this context, the makespan, priority value and computing resource demand represent three dimensions of a space. In this 3-Dimentional (3D) space, a task is represented as a 3D rectangular prism object. The volume of such 3D rectangular prism object represents the critical value of that corresponding task. However, in order to allow the service provider to give higher preference to one parameter over other, a weight factor is used in the critical value calculation. The critical value (or the volume of the 3D object) of a task $v_i$ can be calculated as 
\begin{equation}\label{eq:criticalVal}
	WV(v_i) = (w_1*\hat{M}_i)*(w_2*\hat{P}_i)*(w_3*\hat{R}_i),\quad w_1+w_2+w_3 = 1
\end{equation}

\subsection{HeRAFC algorithm}\label{sec:sol:algo}
The whole process of the proposed resource allocation strategy can be divided into two phases. In the first phase, the order of tasks, based on WMD approach, for deployment is decided. In the second phase, suitable locations for each task is chosen. 

\subsubsection{Order of task}\label{sec:sol:algo:taskOrder}
The WMD-based algorithm for deciding the order of tasks is presented in Algorithm \ref{algo:HeRAFC:tskOrdr}. The user's application is taken as the input to this algorithm. The job of this algorithm is to arrange the tasks in queue, which further would be followed to assign or map to the fog-cloud environment. The detail description of this algorithm is presented in Section \ref{sec:sol:desc}. 

\begin{algorithm}[htb] \small
	 
	\KwIn{User's application; MFC environment}
	
	$ST = $\{Sort the list of tasks in ascending order based on the number of out-edges\}\;\label{algo:HeRAFC:tskOrdr:sortTask} 
	
	$ProcessQ=$\{ \} \textit{/* Order of tasks need to be deployed */} \; 
	
	$TL=$\{Extract the list of tasks from $ST$ with number of out-edges $0$\}\;\label{algo:HeRAFC:tskOrdr:extrctTaskLst}
	
	\While{True}{\label{algo:HeRAFC:tskOrdr:LoopStart}
		\ForEach{task $t \in TL$}{\label{algo:HeRAFC:tskOrdr:MCVstart}
			Calculate critical value $WV(t)$ using Equation \ref{eq:criticalVal}\;
			Calculate number of out-edges $OD(t)$\;
			$MCV(t) = WV(t)/(OD(t)+\delta)$ \textit{/*Calculate mean critical value of the task*/} \label{algo:HeRAFC:tskOrdr:MCVend} \;
		}	
        
		Sort $TL$ based on their $MCV$ in ascending order\;\label{algo:HeRAFC:tskOrdr:sortExtTskLst}	
		Append $TL$ to $ProcessQ$ \label{algo:HeRAFC:tskOrdr:append2PrcsQ}\;
		
		Append $-1$ to $ProcessQ$ \label{algo:HeRAFC:tskOrdr:apnd-1} \textit{/* to mark the tasks in level */} \;
		
		$tmp = $ \{Extract the list of parent tasks of $TL$ from $ST$\}\;\label{algo:HeRAFC:tskOrdr:extractPrntTask}
		Remove all the tasks in $TL$\;
		Add the tasks in $tmp$ to $TL$\;\label{algo:HeRAFC:tskOrdr:repeatExtrctTskLst}
	}\label{algo:HeRAFC:tskOrdr:LoopEnd}
\caption{WMD-based Task order selection algorithm}   
\label{algo:HeRAFC:tskOrdr}
\end{algorithm}

\subsubsection{Location selection}\label{sec:sol:algo:locSel}
In the second phase of the HeRAFC algorithm (as persented in Algorithm \ref{algo:HeRAFC}), the ordered task queue is followed and for each task, a suitable location is searched, known as node mapping. Location here refers to one FN or the cloud. Following this, all the edges among tasks are mapped by finding the suitable path from the corresponding source to destination environment where both the end tasks are already mapped. The detail description of this algorithm is presented in Section \ref{sec:sol:desc}.
\begin{algorithm} \small
	\KwIn{User's application; MFC environment}
	
	$ProcessQ = $Queue of tasks decided by Algorithm \ref{algo:HeRAFC:tskOrdr}\; \label{algo:HeRAFC:callAlgo1}
	Calculated the resource availability matrix $RM$\; \label{algo:HeRAFC:calcAvaMat}
	$mapQ =$ \{\} \; 
	\While{True}{ \label{algo:HeRAFC:nodeMap:start}
		$ETL = $\{Extract the list of tasks from $ProcessQ$ until $-1$ in LIFO manner.\}\;\label{algo:HeRAFC:extrctTskSet}
		\ForEach{(task $t$ in $ETL$) AND ($t$ is not scheduled)}{\label{algo:HeRAFC:nodeMap:innerLoop:start}
			$CFL = $\{get the list of fogs or cloud where the child tasks are deployed\}\;\label{algo:HeRAFC:chldFogLst}
			$nf = CFL$\;\label{algo:HeRAFC:chldNrbyFogLst}
			\If{$CFL == NULL$}{
				$nf = $Nearby FN of user's application\; \label{algo:HeRAFC:tskOrdr:} \label{algo:HeRAFC:appNrbyFN}
			}
			Check if $t$ can be deployed on any one of the fog or cloud in $nf$\; \label{algo:HeRAFC:chckNFLst:start}
			\If{$t$ deployed on $nf$}{
				Append $t$ to $mapQ$ and update $RM$ \;
				Continue with next task from Step \ref{algo:HeRAFC:nodeMap:innerLoop:start} \;\label{algo:HeRAFC:chckNFLst:end}
			}
			$oh = $\{Find the FNs or cloud that are one hop distance to the all fogs in $nf$\}\;\label{algo:HeRAFC:findOH}
			Check if $t$ can be deployed on any one of the fog or cloud in $oh$\;\label{algo:HeRAFC:chckOH:start}
			\If{$t$ deployed on $oh$}{
				Append $t$ to $mapQ$ and update $RM$ \;
				Continue with next task from Step \ref{algo:HeRAFC:nodeMap:innerLoop:start} \;\label{algo:HeRAFC:chckOH:end}
			}
			$th = $\{Find the FNs or cloud that are two hop distance to the all fogs in $nf$\} \;\label{algo:HeRAFC:findTH}
			Check if $t$ can be deployed on any one of the fog or cloud in $th$ \;\label{algo:HeRAFC:chckTH:start}
			\If{$t$ deployed on $th$}{
				Append $t$ to $mapQ$ and update $RM$\;
				Continue with next task from Step \ref{algo:HeRAFC:nodeMap:innerLoop:start} \;\label{algo:HeRAFC:chckTH:end}
			}
		}\label{algo:HeRAFC:nodeMap:innerLoop:end}
		$E'=$\{set of task edges that are connected to the tasks in $ETL$\}\;\label{algo:HeRAFC:formEdgeSet}
		Sort $E'$ in descending order based on bandwidth demand\;\label{algo:HeRAFC:EdgeSorting}
		\ForEach{edge $e \in E'$}{\label{algo:HeRAFC:EdgeMapping:start}
			
			Find source and destination task, $s$ \& $t$\; \label{algo:HeRAFC:getSrcDstTsk}
			$sh = $ Find the FN or cloud where $s$ is deployed\;
			$th = $ Find the FN or cloud where $t$ is deployed\;
			\eIf{$sh \ne NULL$ OR $th \ne NULL$}{
				Find the shortest path from $sh$ to $th$ and map the task edge\; \label{algo:HeRAFC:shrtstPath}
			}{
				Ignore the current edge $e$\; \label{algo:HeRAFC:EdgeMapping:end}
			}
		}
		Reset $RM$ and continue with next set of tasks.\; \label{alog:HeRAFC:resetRsrcMtrx}
	}
	\caption{Resource Allocation in MultiFog-Cloud algorithm}
	\label{algo:HeRAFC}
\end{algorithm}

\subsection{HeRAFC description}\label{sec:sol:desc}
User's application and the information regarding the physical environment are provided to Algorithm \ref{algo:HeRAFC} as the input. The algorithm starts by invoking Algorithm \ref{algo:HeRAFC:tskOrdr} that would decide the order of the tasks that need to be followed in the location selection phase. Algorithm \ref{algo:HeRAFC:tskOrdr} starts by sorting the list of tasks based on the number of child tasks. To calculate the number of child tasks of a parent task, the number of out-edges is calculated as in Line \ref{algo:HeRAFC:tskOrdr:sortTask}. Out-edge refers to the number of outgoing edges from a task. From the sorted task list, the tasks that are having out-edge value 0 or the task having no child are first selected, as in Line \ref{algo:HeRAFC:tskOrdr:extrctTaskLst}. This means, the algorithm gives higher priority to leaf tasks while processing. This way, this algorithm takes the tasks' dependency into consideration.

For each extracted task,the mean critical value (which eventually follows WMD approach) is calculated, in Line \ref{algo:HeRAFC:tskOrdr:MCVstart}-\ref{algo:HeRAFC:tskOrdr:MCVend}, taking the ratio of critical value to the number of out-edges. 
For the leaf node the number of out edges is $0$. To avoid any \emph{Divide by Zero error}, a small constant $\delta$ is added to $OD(t)$. 
The extracted list is further sorted based on the mean critical value and appended to the process task queue, in Line \ref{algo:HeRAFC:tskOrdr:sortExtTskLst} and \ref{algo:HeRAFC:tskOrdr:append2PrcsQ}, respectively. The same process is applied to the parent task of the currently extracted tasks, as in Line \ref{algo:HeRAFC:tskOrdr:repeatExtrctTskLst}. 
Processing the tasks level-by-level is taken into account to ensure that the dependent tasks (or the child tasks) are executed not before the execution of their predecessor tasks (or the parent tasks). The time complexity of this algorithm is $ \mathcal{O}(|V'| \log{}|V'|)$ , where $V'$ be the set of tasks (proof in Appendix \ref{appen:TimeCompAlgo1}).

The output of Algorithm \ref{algo:HeRAFC:tskOrdr}, $ProcessQ$, is further followed to map the tasks and the edges. Each set of tasks are separated by $-1$ in $ProcessQ$, as in Algorithm \ref{algo:HeRAFC:tskOrdr}, Line \ref{algo:HeRAFC:tskOrdr:apnd-1}, can be logically represents as the tasks in one level of the task graph. 
The set of tasks, $ETL$, is first extracted from the $ProcessQ$, as in Line \ref{algo:HeRAFC:extrctTskSet} in a Last-In-First-Out (LIFO) manner. Extracting in LIFO would further ensure that the dependent tasks (or child tasks) are not executed before the execution of predecessor tasks. This indicates that the root task in the DAG will be the first one to get executed. For each task in $ETL$, the set of child tasks is calculated followed by forming a list of FNs and cloud, $CFL$, that are hosting those child tasks, in Line \ref{algo:HeRAFC:chldFogLst}. The set of hosting environment, $CFL$ is considered as nearby fog list, $nf$, in Line \ref{algo:HeRAFC:chldNrbyFogLst}. However, in case of absence of no child task, the FN that is nearer to the user is considered, as in Line \ref{algo:HeRAFC:appNrbyFN}. In Line \ref{algo:HeRAFC:chckNFLst:start} - \ref{algo:HeRAFC:chckNFLst:end}, the current task is checked if this can be assigned to any one of the nearby FNs list. If the amount of resource can be fulfilled by the anyone of the FN or by the cloud, the task will be assigned to the selected FN or cloud and the same process will be applied to next task. However, if no hosting environment is found in Line \ref{algo:HeRAFC:chckNFLst:start}, the list of FNs and cloud that are at one hop distance, $oh$ from the child task is calculated, as in Line \ref{algo:HeRAFC:findOH}. On calculation of the set $oh$, in Line \ref{algo:HeRAFC:chckOH:start} - \ref{algo:HeRAFC:chckOH:end}, all the hosting environments will be checked if the current task can be deployed. In case of availability of the resources on any FN that can fulfill the demand, the current task will be assigned to the selected FN or the cloud. Further, in case of failure of resource availability for the current task, the hosting environments that are at two hops distance will be calculated in Line \ref{algo:HeRAFC:findTH} and the same process will be applied for checking the resource and assigning the task, as mentioned in Line \ref{algo:HeRAFC:chckTH:start} - \ref{algo:HeRAFC:chckTH:end}. This concludes the procedure for mapping the tasks on to different FNs and cloud. 

Upon mapping the set of tasks, $ETL$, the set of adjacent edges, $E'$ in Line \ref{algo:HeRAFC:formEdgeSet}, will be assigned, in Line \ref{algo:HeRAFC:EdgeMapping:start} - \ref{algo:HeRAFC:EdgeMapping:end}. For each edge, $e$, the information regarding the source task and the destination task are extracted from the task graph, in Line \ref{algo:HeRAFC:getSrcDstTsk}. If both the tasks are already mapped onto the physical environment, the shortest path between the corresponding hosting environment will be calculated and assigned to host the edge $e$, in Line \ref{algo:HeRAFC:shrtstPath}. Upon mapping of both the set of tasks in one level of task graph and the edges among them, the resource matrix will be reset, in Line \ref{alog:HeRAFC:resetRsrcMtrx}. This is due to the fact that one level tasks can not run in parallel to the other level of tasks (parent or child tasks) and need to be scheduled to run in a sequential manner. Hence, upon completion of execution of one level of tasks, the leased resources will be released for the set of tasks in the next level of the task graph. The time complexity of the proposed  algorithm is $\mathcal{O}(|E'|\left[\log{}|E'| + |E|log(|V|)\right])$ , where $E$ and $E'$ are the set of physical edges and task edges, respectively and $V$ is the number of cloud and FNs ( proof in \ref{appen:TimeCompAlgo2}).

\section{Performance evaluation}\label{sec:perfEval}
In this section, the performance of the proposed algorithm is discussed. Liu \emph{et al.} (DRACO) \cite{comp1} and Shah-Mansour \emph{et al.} (HFCCI) \cite{comp2} algorithms are used to compare and present the efficiency in terms of computing and network resource utilization in different hosting environment, such as fog and cloud. FNs nearby can receive data from multiple users in case of DRACO algorithm. By using game theory, HFCCI minimizes the number of users and reduces energy costs while allocating cloud resources. Various performance matrices are employed for evaluating efficiency, including the effect of tasks' order on resource utilization, a comparison of network resource utilization between fog and cloud environments, and tasks' latency. The simulation uses a rule-based approach over a data-driven ML-based approach. Rule-based approaches can be used to generate simulated environments that mimic the real-world. This uses predefined configurations and rules, and does not require any historical data to fit the environment. On the other hand, a data-driven ML-based approach needs historical data that fits the environment and the problem statement. In the context of heterogeneous fog and cloud resource allocation and optimization, the data related to historical performance of the computing and storage cluster, network resource utilization, real requests from the users are difficult to acquire. To the best of our knowledge, there are no such data available that fit the problem addressed in this paper. This is the main reason why most of the research community and we preferred a rule-based approach over a data-driven ML-based approach. In the following subsection, the detail information on simulation environment and the results obtained from the simulation are presented.

\subsection{Simulation environment}\label{sec:perfEval:env}

A Python-based discrete event simulator is used to simulate the proposed resource management algorithm. This simulation does not explicitly configure tasks to run on different CPU cores. This is entirely dependent on the number of cores available and how the native operating system behaves. The resource configuration of the FCIs, fog, cloud and the users' applications are presented in JSON file. 
The applications are created dynamically/programmatically. Separate JSON files are created for each user's app, including the tasks list, edges list, resource demand of tasks and edges, and other related information. The detailed configuration of the simulation parameters are given in Table \ref{table:simPar}.
\ref{appen:simEnv}\revison{, provides a statistical analysis of the entire environment consisting of users' applications, tasks, resource demand, FNs, FCIs, resource availability and priority, connectivity among FNs, FCI and cloud, and network resource availability. } 

\revison{The MFC environment consists of $500$ FNs that are connected to $200$ FCIs. Each FCI is connected to at least one FN, whereas a single FN is connected to precisely one FCI. For each FN, the total amount of CPU capacity ranges from 50 through 100 and the values are assigned by following a random distribution. The total memory (RAM) for each FN is assigned randomly ranging from $200 GB$ to $400 GB$, where the unit $GB$ represents GigaByte. The approximate computation power of a FN is calculated as MIPS (Million instructions per  second), which ranges from $3000$ to $5000$.} 

\revison{As discussed before, each FN is connected to exactly one FCI. The maximum network bandwidth available between a FN and the corresponding FCI ranges from $300 Mbps$ through $400 Mbps$. In this paper, it is considered that a FCI may directly communicate with another FCI. In such a situation, there exist a network bandwidth capacity among FCIs, which ranges from $400 Mbps$ through $1000 Mbps$. Similarly, the bandwidth between FCIs and the cloud range from $400 Mbps$ through $1000 Mbps$. The network bandwidth unit is in Mbps (Mega bits Per Second). It is assumed that latency value between any FN and the corresponding FCI is small than the latency among FCIs and between FCIs and cloud. Latency values are assigned to the physical links by following a uniform distribution. The latency (in millisecond) between FN and FCIs ranges from $50 ms$ through $100 ms$. Similarly, the latency values for rest of the physical links range from $100 ms$ through $200 ms$. Technically, a task can be hosted on 1-hop or at a multi-hop distance. However, in the current implementation maximum number of hops is set to $2$. This means there are a maximum of two FCIs and no cloud are present in the physical path between any two FNs that are used to host user's tasks.} 

\revison{The maximum number of applications is set to $10000$. For each application, random number of tasks are generated ranging from $4$ through $12$. However, the total maximum number of tasks for the entire simulation is set to $100,000$. For each task, the CPU and memory demand range from $1-4 CPUs$ and $100 MB - 1000 MB$, respectively. The average makespan of each task is $350 ms$ with minimum and maximum makespan of $10 ms$ and $1000 ms$, respectively. The network bandwidth demand of edges is ranging from $100 Mbps$ and $200 Mbps$. Similarly, the average latency demand among tasks ranges between $10 ms$ and $50 ms$. The unit of latency is millisecond (ms). The minimum and maximum priority values of a task are $1$ and $5$, respectively. The bandwidth and latency demand of a task ranges between $100-200 Mbps$ and $10-300 ms$, respectively. The number of edges within an application is decided by a link probability of $0.6$. This indicates the connectivity probability among two task.}

\begin{table}[]
\centering
\scriptsize
\caption{Simulation parameters for Fog-Cloud environment and users application}
    \begin{tabular}{|p{0.33\textwidth} p{0.13\textwidth} || p{0.33\textwidth} p{0.08\textwidth}|}    
    \hline
    \multicolumn{4}{|c|}{Simulation Parameters}                                                                   \\ \hline
    \multicolumn{2}{|c||}{\textbf{Fog-Cloud Environment}}       & \multicolumn{2}{c|}{\textbf{Users Application}} \\ \hline    
    Environment configuration file format & JSON &    Application configuration file format & JSON \\ 
    Total number of FNs &  500 &   Number of applications & 5000-10000  \\ 
	Number of FCIs & 200  &   Number of tasks per application & 4-12  \\  
                    &       & Maximum number of tasks in the simulation  & 100,000 \\
	CPU capacity per FN & 50-100 &   Required vCPU per task & 1-4  \\  
	RAM (in GB) capacity per FN & 100-200 &   Required RAM (in MB) per task & 100-1000       \\  
	Average MIPS per FN & 250-400  &  Makespan of each task (in ms) & 10-1000   \\  
	Bandwidth between of FN and FCI & 300-400 Mbps &   Priority of each task & 1-5  \\  
	Bandwidth between of FCI and cloud & 400-1000 Mbps &   Edge bandwidth demand (in Mbps) & 100-200  \\  
	Bandwidth among FCIs & 400-1000 Mbps &  Edge latency demand (in ms) & 10-50   \\  
	Latency between FN and FCI & 50-100 ms &   &  \\  
	Latency between FCI and Cloud & 101-200 ms  &  &   \\  
	Latency among FCIs & 101-200 ms &   &  \\  
	Maximum number of hops & 2  &   & \\ \hline
    \end{tabular} \vspace{-4mm}
\label{table:simPar}
\end{table}

\subsection{Simulation results}\label{sec:perfEval:result}
Following the discussion of simulation environment in above subsection, this subsection presents the results of the simulation in more details. Figure \ref{fig:sim:compRsrcUtil_F} and Figure \ref{fig:sim:compRsrcUtil_C} provide the resource utilization of fog and cloud environment, respectively. In this simulation implementation, resource refers to CPU, Memory and network bandwidth. Figure \ref{fig:sim:compRsrcUtil_F} shows the comparison of resource utilization of proposed HeRAFC algorithm with DRACO \cite{comp1} and HFCCI \cite{comp2} algorithms only in fog environment. Resource utilization is calculated as the ratio of amount of resources allocated and the total resource capacity. 

\begin{figure}[h!]
	\centering
\subfloat{\resizebox{50mm}{40mm}{
    \begin{tikzpicture}
        \begin{axis}[
                width=0.5\textwidth,
                height=.4\textwidth,
                legend pos=north west,
                symbolic x coords={5, 5.5, 6, 6.5, 7, 7.5, 8, 8.5, 9, 9.5,  10},
                nodes near coords align={vertical},
                ymin=20,ymax=100,xmin=5,xmax=10,
                ylabel={CPU utilization (in \%)},
                y label style={at={(-0.1,0.5)}},
                xlabel={Number of applications (in thousands)},ymajorgrids=true, 
                xmajorgrids=true, grid style=dashed,
                title style={at={(0.5,-0.24)},anchor=north},
                title = (a),
            ]
            \addplot table[x=noofApp,y=HeRAFC]{\oneacompCPUUtilF};
            \addplot table[x=noofApp,y=DRACO] {\oneacompCPUUtilF};
            \addplot table[x=noofApp,y=HFCCI] {\oneacompCPUUtilF};
            \legend{HeRAFC, DRACO, HFCCI}
        \end{axis}
    \end{tikzpicture} \label{fig:sim:compCPUUtil_F}}}
\subfloat{\resizebox{0.32\textwidth}{40mm}{
    \begin{tikzpicture}
        \begin{axis}[
                width=0.5\textwidth,
                height=.4\textwidth,
                legend pos=north west,
                symbolic x coords={5, 5.5, 6, 6.5, 7, 7.5, 8, 8.5, 9, 9.5,  10},
                nodes near coords align={vertical},
                ymin=20,ymax=100,xmin=5,xmax=10,
                ylabel={Memeory Utilization (\%)},
                y label style={at={(-0.1,0.5)}},
                xlabel={Number of applications (in thousands)},ymajorgrids=true, xmajorgrids=true, grid style=dashed,
                title style={at={(0.5,-0.24)},anchor=north},
                title = (b),
            ]
            \addplot table[x=noofApp,y=HeRAFC]{\onebcompMemUtilF};
            \addplot table[x=noofApp,y=DRACO] {\onebcompMemUtilF};
            \addplot table[x=noofApp,y=HFCCI] {\onebcompMemUtilF};
            \legend{HeRAFC, DRACO, HFCCI}
        \end{axis}
    \end{tikzpicture}\label{fig:sim:compMemUtil_F}} }
\subfloat{\resizebox{0.32\textwidth}{40mm}{
    \begin{tikzpicture}
        \begin{axis}[
                width=0.5\textwidth,
                height=.4\textwidth,
                legend pos=north west,
                symbolic x coords={5, 5.5, 6, 6.5, 7, 7.5, 8, 8.5, 9, 9.5,  10},
                nodes near coords align={vertical},
                ymin=20,ymax=100,xmin=5,xmax=10,
                ylabel={Bandwidth utilization (\%)},
                y label style={at={(-0.1,0.5)}},
                xlabel={Number of applications (in thousands)},ymajorgrids=true, xmajorgrids=true, grid style=dashed,
                title style={at={(0.5,-0.24)},anchor=north},
                title = (c),
            ]
            \addplot table[x=noofApp,y=HeRAFC]{\oneccompBWUtilF};
            \addplot table[x=noofApp,y=DRACO] {\oneccompBWUtilF};
            \addplot table[x=noofApp,y=HFCCI] {\oneccompBWUtilF};
            \legend{HeRAFC, DRACO, HFCCI}
        \end{axis}
    \end{tikzpicture}\label{fig:sim:compBWUtil_F}} }
	\caption{Comparison of only FOG resource utilization \protect \subref{fig:sim:compCPUUtil_F} CPU \protect \subref{fig:sim:compMemUtil_F} memory  \protect \subref{fig:sim:compBWUtil_F} network bandwidth.} \vspace{-4mm}
	\label{fig:sim:compRsrcUtil_F} 
\end{figure} 
\revison{Figures }\ref{fig:sim:compRsrcUtil_F}\subref{fig:sim:compCPUUtil_F}\revison{, }\ref{fig:sim:compRsrcUtil_F}\subref{fig:sim:compMemUtil_F}\revison{, and }\ref{fig:sim:compRsrcUtil_F}\subref{fig:sim:compBWUtil_F}\revison{ show the average CPU, memory, and network bandwidth utilization of FNs, respectively. X-axis represents the total number of applications ranging from $5000$ to $10000$ and Y-axis represents the average resource utilization in percentage. It can be observed that, Upon assigning $5000$ applications using the HeRAFC algorithm, the average CPU utilization of fog servers is approximately $35\%$, similar to the DRACO algorithm. However, it is observed, the CPU utilization is approximately $33\%$ in the case of the HFCCI algorithm, as shown in Figure }\ref{fig:sim:compRsrcUtil_F}\subref{fig:sim:compCPUUtil_F}\revison{. Resource utilization continuously increases when the number of applications increases to $8000$. However, when the number of applications increases beyond $8000$, the rate of increase slows down. It is observed that the average CPU utilization is $66\%, 60\%, $ and $56\%$ in case of HeRAFC, DRACO, and HFCCI algorithms, respectively, when there are a total of $10$ thousand applications. A similar pattern is also observed in memory resource utilization. In case of the proposed HeRAFC algorithm the memory utilization of the fog environment increases from approximately $26\%$ to $67\%$, when the number of applications increases to $5000$ to $10000$. Figure }\ref{fig:sim:compRsrcUtil_F}\subref{fig:sim:compMemUtil_F}\revison{ demonstrates that the proposed HeRAFC algorithm outperforms DRACO and HFCCI algorithms. The memory utilization of FNs is $60\%$ and $58\%$ when DRACO and HFCCI algorithms are applied, respectively, when the number of applications is $10$ thousand. Overall, the proposed HeRAFC algorithm performs better than the others in utilizing the CPU and memory resources. This is due to the underlined multi-fog environment where schedulers can take advantage of the fog resources and hence assign a maximum number of tasks to the fog environment instead of the cloud. However, DRACO and HFCCI do not take advantage of the nearby available fog resources.} 

\revison{Figure }\ref{fig:sim:compRsrcUtil_F}\subref{fig:sim:compBWUtil_F}\revison{, shows the average network bandwidth utilization of different algorithms with the number of applications ranging from $10$ to $100$ within fog environment. As the number of applications increases, the bandwidth utilization of all three methods also increases, but each method has a different rate of increase. It is evident from the chart that HeRAFC and DRACO have similar bandwidth utilization rates, while HFCCI has a lower bandwidth utilization rate. In case of HeRAFC, the average network utilization ranges from $42\%$ to $78\%$ when the number of applications ranges from $5$ to $10$ thousand, which is more than the network utilization obtained using other two related algorithms. This is due to the fact that HeRAFC explores the other FNs that are at multi-hop distance to fulfill the resource demand of application. As a result, the workload on the fog servers and the network among FNs are relatively higher. The average network utilization within FNs in case of DRACO and HFCCI are approximately $62\%$ and $53\%$, respectively, when the number of applications is $10$ thousand. It is to be noted that out of all the applications, some of the applications are hosted on FNs and rest are hosted on cloud environment. The resource utilization of the cloud environment is provided in Figure }\ref{fig:sim:compRsrcUtil_C}.


\begin{figure}[t]
	\centering
\subfloat{\resizebox{0.32\textwidth}{40mm}{
    \begin{tikzpicture}
        \begin{axis}[
                width=0.5\textwidth,
                height=.4\textwidth,
                legend pos=north west,
                symbolic x coords={5,  5.5,  6,  6.5,  7,  7.5,  8,  8.5,  9,  9.5,  10},
                nodes near coords align={vertical},
                ymin=20,ymax=100,xmin=5,xmax=10,
                ylabel={CPU utilization (in \%)},
                y label style={at={(-0.1,0.5)}},
                xlabel={Number of applications},ymajorgrids=true, xmajorgrids=true,  grid style=dashed,
                title style={at={(0.5,-0.22)},anchor=north},
                title = (a),
            ]
            \addplot table[x=noofApp,y=HeRAFC]{\twoacompCPUUtilC};
            \addplot table[x=noofApp,y=DRACO] {\twoacompCPUUtilC};
            \addplot table[x=noofApp,y=HFCCI] {\twoacompCPUUtilC};
            \legend{HeRAFC, DRACO, HFCCI}
        \end{axis}
    \end{tikzpicture} \label{fig:sim:compCPUUtil_C}}}
\subfloat{
    \resizebox{0.32\textwidth}{40mm}{
    \begin{tikzpicture}
        \begin{axis}[
                width=0.5\textwidth,
                height=.4\textwidth,
                legend pos=north west,
                symbolic x coords={5,  5.5,  6,  6.5,  7,  7.5,  8,  8.5,  9,  9.5,  10},
                nodes near coords align={vertical},
                ymin=20,ymax=100,xmin=5,xmax=10,
                ylabel={Memeory Utilization (\%)},
                y label style={at={(-0.1,0.5)}},
                xlabel={Number of applications},ymajorgrids=true, xmajorgrids=true, grid style=dashed,
                title style={at={(0.5,-0.22)},anchor=north},
                title = (b),
            ]
            \addplot table[x=noofApp,y=HeRAFC]{\twobcompMemUtilC};
            \addplot table[x=noofApp,y=DRACO] {\twobcompMemUtilC};
            \addplot table[x=noofApp,y=HFCCI] {\twobcompMemUtilC};
            \legend{HeRAFC, DRACO, HFCCI}
        \end{axis}
    \end{tikzpicture}\label{fig:sim:compMemUtil_C}} }
\subfloat{\resizebox{0.32\textwidth}{40mm}{
    \begin{tikzpicture}
        \begin{axis}[
                width=0.5\textwidth,
                height=.4\textwidth,
                legend pos=north west,
                symbolic x coords={5,  5.5,  6,  6.5,  7,  7.5,  8,  8.5,  9,  9.5,  10},
                nodes near coords align={vertical},
                ymin=20,ymax=100,xmin=5,xmax=10,
                ylabel={Bandwidth utilization (\%)},
                y label style={at={(-0.1,0.5)}},
                xlabel={Number of applications},ymajorgrids=true, xmajorgrids=true, grid style=dashed,
                title style={at={(0.5,-0.22)},anchor=north},
                title = (c),
            ]
            \addplot table[x=noofApp,y=HeRAFC]{\twoccompBWUtilC};
            \addplot table[x=noofApp,y=DRACO] {\twoccompBWUtilC};
            \addplot table[x=noofApp,y=HFCCI] {\twoccompBWUtilC};
            \legend{HeRAFC, DRACO, HFCCI}
        \end{axis}
    \end{tikzpicture}\label{fig:sim:compBWUtil_C}} }
    \caption{Comparison of only CLOUD resource utilization (a) CPU (b) memory  (c) network bandwidth.} \vspace{-4mm}
	\label{fig:sim:compRsrcUtil_C} 
\end{figure}
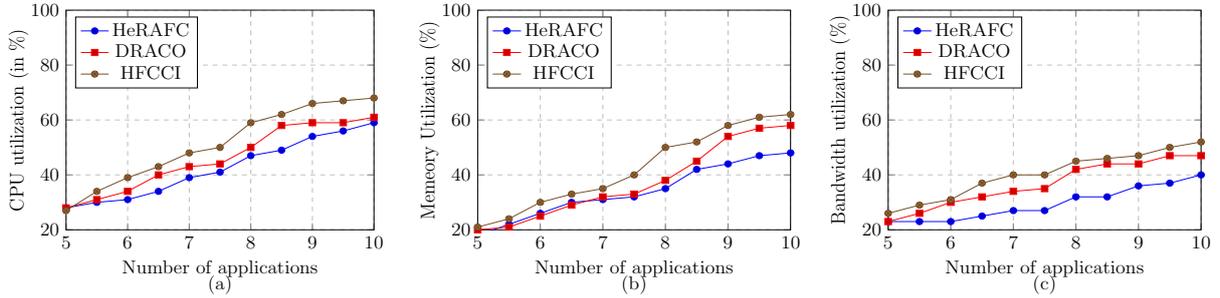
\revison{The cloud resource utilization in the case of all three algorithms is presented in Figure }\ref{fig:sim:compRsrcUtil_C}\revison{. Using HeRAFC, the average cloud CPU utilization increases from approximately $28\%$ to $59\%$ when the number of applications increases from $5000$ to $10000$, as in Figure }\ref{fig:sim:compRsrcUtil_C}\subref{fig:sim:compCPUUtil_C}\revison{. However, in the case of DRACO and HFCCI algorithms, the average CPU utilization increases from $28\%$ to $61\%$ and from approx. $27\%$ to $68\%$, respectively, with the same number of applications. Evidently, the cloud CPU utilization using HeRAFC is less than that of the other two algorithms, as the FNs (including those at a multi-hop distance) are given higher priority over the cloud while assigning the applications. A similar trend can be seen in Figure }\ref{fig:sim:compRsrcUtil_C}\subref{fig:sim:compMemUtil_C}\revison{. The average memory utilization in the cloud environment using the HeRAFC algorithm with $5000$ applications is approximately $18\%$. In contrast, the memory utilization values in the case of DRACO and HFCCI algorithms are approximately $20\%$ and $21\%$, respectively. Such a utilization pattern indicates a lesser dependency on the cloud. This is because HeRAFC gives higher priority to the nearby and other FNs, letting fewer tasks be offloaded to the cloud. The impact of such a strategy can also be seen in the significant reduction of cloud network resource utilization, as discussed below.}

\revison{The average network bandwidth utilization, as shown in Figure }\ref{fig:sim:compRsrcUtil_C}\subref{fig:sim:compBWUtil_C}\revison{, refers to the utilization of physical links that are attached to the cloud. When the number of applications increases from $5000$ to $10000$, the average network bandwidth utilization using the HeRAFC algorithm increases from $23\%$ to approximately $40\%$, which is less than that of the other two algorithms. For DRACO and HFCCI, the average network bandwidth utilization increased to $47\%$ and $52\%$ when the number of applications is increased to $10000$. The HeRAFC algorithm gives higher preferences to the nearby FNs. As a result, the maximum number of task edges are assigned to the physical path within FNs,  leading to relatively lesser bandwidth utilization in the cloud environment.}

\revison{The results in Figure }\ref{fig:sim:Ltncy_F}\revison{ and }\ref{fig:sim:Ltncy_C}\revison{ show the effect of weighted critical value ($WV$, as calculated in Equation }\ref{eq:criticalVal})\revison{ and tasks' priority on average network latency in Fog and in cloud environments, respectively. In other words, we observed the average latency of tasks with different priority and $WV$ values. The value of $WV$ depends upon three parameters: makespan of the tasks (Equation }\ref{eq:normMkspn}\revison{), priority (Equation }\ref{eq:normPriority}\revison{), and resource demand (Equation }\ref{eq:normCompRsrc}\revison{). By changing the value of $w_1$ (weight for makespan), $w_2$ (weight for priority), and $w_3$ (weight for resource demand), the value of $WV$ can be changed. Four configurations of $w_1, w_2, $and$ w_3$ are considered in the simulation. Under the first configuration, the same weightage values ($w_1=w_2=w_3$) are given to all the parameters (Figure }\ref{fig:sim:Ltncy_F}\revison{(a)). Under the second configuration, a higher weightage is given to the resource demand of the tasks ($w_1=0.2, w_2=0.3, w_3=0.5$), as shown in Figure }\ref{fig:sim:Ltncy_F}(b)\revison{. In the third and forth configurations, higher importance is given to task's priority (Figure }\ref{fig:sim:Ltncy_F}(c)\revison{) and makespan (}\ref{fig:sim:Ltncy_F}(d)\revison{), respectively. The latency value of a task is calculated by taking the average latency of out-going adjacent edges. So in case of an edge from a task with Priority 1 ($P_i=1$) to a task with priority 2 ($P_i=2$), the value of $P_i=1$ task is calculated by considering the value of latency of edge. In other words, the latency of an edge is assigned to the source task and not the destination task. In this simulation, the tasks with priority value 5 are given higher priority and the minimum priority value of a task is set to 1. The priority values are assigned to the task following a random distribution. Figure }\ref{fig:sim:Ltncy_F}\revison{ shows the latency of different priority tasks within the fog environment. The x-axis represents the total number of tasks, ranging from $50$ thousand to $100$ thousand and Y-axis represents average latency of different tasks in milliseconds (ms). It is observed that the average latency of the high priority tasks is always less than that of the tasks with low priority value, under different configurations.}

\pgfplotstableread[row sep=\\,col sep=&]{
noofTasks &	  fog_p5 &	  fog_p4 &	  fog_p3 &	  fog_p2 &	  fog_p1 &	  cloud_p5 &	  cloud_p4 &	  cloud_p3 &	  cloud_p2 &	  cloud_p1\\
50 &	  11 &	  30 &	  60 &	  93 &	  146 &	  55 &	  240 &	  480 &	  837 &	  1200\\
60 &	  19 &	  33 &	  62 &	  110 &	  150 &	  95 &	  297 &	  496 &	  880 &	  1460\\
70 &	  20 &	  43 &	  63 &	  118 &	  178 &	  140 &	  336 &	  504 &	  944 &	  1520\\
80 &	  22 &	  48 &	  76 &	  128 &	  181 &	  132 &	  384 &	  608 &	  1024 &	  1592\\
90 &	  25 &	  48 &	  80 &	  130 &	  190 &	  175 &	  387 &	  693 &	  1048 &	  1780\\
100 &	  34 &	  50 &	  99 &	  131 &	  199 &	  204 &	  450 &	  720 &	  1300 &	  1810\\
}\latencyWCVweightuniform

\pgfplotstableread[row sep=\\,col sep=&]{
noofTasks &	  fog_p5 &	  fog_p4 &	  fog_p3 &	  fog_p2 &	  fog_p1 &	  cloud_p5 &	  cloud_p4 &	  cloud_p3 &	  cloud_p2 &	  cloud_p1\\
50 &	  9 &	  25 &	  53 &	  82 &	  123 &	  61 &	  250 &	  500 &	  882 &	  1287\\
60 &	  16 &	  28 &	  53 &	  100 &	  129 &	  105 &	  307 &	  524 &	  923 &	  1557\\
70 &	  17 &	  38 &	  56 &	  107 &	  157 &	  149 &	  350 &	  530 &	  986 &	  1580\\
80 &	  18 &	  41 &	  68 &	  116 &	  161 &	  141 &	  394 &	  633 &	  1060 &	  1687\\
90 &	  21 &	  42 &	  73 &	  119 &	  167 &	  184 &	  399 &	  719 &	  1096 &	  1876\\
100 &	  29 &	  43 &	  90 &	  120 &	  179 &	  213 &	  465 &	  746 &	  1344 &	  1888\\
}\latencyWCVweightonresource

\pgfplotstableread[row sep=\\,col sep=&]{
noofTasks &	  fog_p5 &	  fog_p4 &	  fog_p3 &	  fog_p2 &	  fog_p1 &	  cloud_p5 &	  cloud_p4 &	  cloud_p3 &	  cloud_p2 &	  cloud_p1\\
50 &	  10 &	  30 &	  55 &	  93 &	  126 &	  63.1 &	  256.8 &	  515.1 &	  900.6 &	  1353.2\\
60 &	  10 &	  30 &	  60 &	  101 &	  128 &	  102.7 &	  312.7 &	  533.7 &	  940.8 &	  1599.3\\
70 &	  12 &	  34 &	  62 &	  105 &	  129 &	  143.5 &	  356.1 &	  540.1 &	  1022.7 &	  1648.4\\
80 &	  22 &	  37 &	  62 &	  109 &	  130 &	  148.5 &	  403.8 &	  645.7 &	  1088.7 &	  1708.6\\
90 &	  23 &	  37 &	  74 &	  114 &	  138 &	  182.7 &	  405.1 &	  731.8 &	  1111.6 &	  1894.4\\
100 &	  30 &	  45 &	  89 &	  116 &	  152 &	  217 &	  471.9 &	  765.7 &	  1367.5 &	  1936.4\\
}\latencyWCVweightonpriority

\pgfplotstableread[row sep=\\,col sep=&]{
noofTasks &	  fog_p5 &	  fog_p4 &	  fog_p3 &	  fog_p2 &	  fog_p1 &	  cloud_p5 &	  cloud_p4 &	  cloud_p3 &	  cloud_p2 &	  cloud_p1\\
50  &	  8  &	  28  &	  52  &	  83  &	  120  &	  91  &	  299  &	  535  &	  889  &	  1354\\
60  &	  15  &	  34  &	  54  &	  93  &	  127  &	  122  &	  304  &	  541  &	  969  &	  1537\\
70  &	  16  &	  34  &	  64  &	  97  &	  141  &	  158  &	  323  &	  556  &	  1066  &	  1659\\
80  &	  18  &	  41  &	  67  &	  100  &	  145  &	  173  &	  407  &	  570  &	  1135  &	  1659\\
90  &	  19  &	  42  &	  70  &	  108  &	  156  &	  180  &	  412  &	  571  &	  1211  &	  1703\\
100  &	  26  &	  44  &	  85  &	  116  &	  156  &	  193  &	  468  &	  634  &	  1219  &	  1794\\
}\latencyWCVweightonmakespan

\usepgfplotslibrary{groupplots}

\begin{figure}[t]
\begin{tikzpicture}
\pgfplotsset{
    width=0.25\textwidth,
    height=40mm,
    xtick={50,60,70,80,90,100},
    xticklabels={50,60,70,80,90,100},
    x tick label style={rotate=90,anchor=east,font=\tiny},
    xmin=50,xmax=100,
    ymin=5,ymax=200,
    ylabel style={align=center},
    every axis/.append style={font=\scriptsize},
    every axis title/.style={below right,at={(0.37,-0.11)}},
}     
    \begin{groupplot}[ 
        group style={
            group size=4 by 1,
            horizontal sep=15mm,
        },
    ]
    \nextgroupplot[
            title={(a)},            
            ylabel={Avg. latency (in ms)},            
    ]
        \addplot[smooth,mark=o,black] table[x=noofTasks,y=fog_p5]{\latencyWCVweightuniform}; 
        \addplot[smooth,mark=x,red] table[x=noofTasks,y=fog_p4]{\latencyWCVweightuniform}; 
        \addplot[smooth,mark=triangle*,blue] table[x=noofTasks,y=fog_p3]{\latencyWCVweightuniform}; 
        \addplot[smooth,mark=square*,green] table[x=noofTasks,y=fog_p2]{\latencyWCVweightuniform}; 
        \addplot[smooth,mark=diamond*,magenta] table[x=noofTasks,y=fog_p1]{\latencyWCVweightuniform}; 
    \nextgroupplot[
            title={(b)},
    ]
        
        \addplot[smooth,mark=o,black] table[x=noofTasks,y=fog_p5]{\latencyWCVweightonresource}; 
        \addplot[smooth,mark=x,red] table[x=noofTasks,y=fog_p4]{\latencyWCVweightonresource}; 
        \addplot[smooth,mark=triangle*,blue] table[x=noofTasks,y=fog_p3]{\latencyWCVweightonresource}; 
        \addplot[smooth,mark=square*,green] table[x=noofTasks,y=fog_p2]{\latencyWCVweightonresource}; 
        \addplot[smooth,mark=diamond*,magenta] table[x=noofTasks,y=fog_p1]{\latencyWCVweightonresource};
    \coordinate (c2) at (rel axis cs:0,0);
    \nextgroupplot[
            title={(c)},
            legend style={legend columns=5,fill=none,draw=black,anchor=center,align=center, font=\scriptsize},
            legend to name=fred
    ]
        
        \addplot[smooth,mark=o,black] table[x=noofTasks,y=fog_p5]{\latencyWCVweightonpriority}; 
        \addplot[smooth,mark=x,red] table[x=noofTasks,y=fog_p4]{\latencyWCVweightonpriority}; 
        \addplot[smooth,mark=triangle*,blue] table[x=noofTasks,y=fog_p3]{\latencyWCVweightonpriority}; 
        \addplot[smooth,mark=square*,green] table[x=noofTasks,y=fog_p2]{\latencyWCVweightonpriority}; 
        \addplot[smooth,mark=diamond*,magenta] table[x=noofTasks,y=fog_p1]{\latencyWCVweightonpriority};  
    \addlegendentry{p5};    
    \addlegendentry{p4};    
    \addlegendentry{p3};    
    \addlegendentry{p2};    
    \addlegendentry{p1};
    \coordinate (c3) at (rel axis cs:0,0);
    \nextgroupplot[
            title={(d)},
    ]             
        \addplot[smooth,mark=o,black] table[x=noofTasks,y=fog_p5]{\latencyWCVweightonmakespan}; 
        \addplot[smooth,mark=x,red] table[x=noofTasks,y=fog_p4]{\latencyWCVweightonmakespan}; 
        \addplot[smooth,mark=triangle*,blue] table[x=noofTasks,y=fog_p3]{\latencyWCVweightonmakespan}; 
        \addplot[smooth,mark=square*,green] table[x=noofTasks,y=fog_p2]{\latencyWCVweightonmakespan}; 
        \addplot[smooth,mark=diamond*,magenta] table[x=noofTasks,y=fog_p1]{\latencyWCVweightonmakespan};  
    \end{groupplot}
    \coordinate (c6) at ($(c2)!1.0!(c3)$);    
    \node[below] at (c6 |- current bounding box.south)
      {\footnotesize Number of tasks (in thousands)};    

    \coordinate (c6) at ($(c2)!1.0!(c3)$);    
    \node[above] at (c6 |- current bounding box.north)
      {\pgfplotslegendfromname{fred}};    
\end{tikzpicture}
\caption{Latency of different priority tasks in Fog environment under different weighted critical values\\ ($WV$) \scriptsize (a) $w_1=w_2=w_3$, (b) $w_1=0.2, w_2=0.3, w_3=0.5$, (c) $w_1=0.2, w_2=0.5, w_3=0.3$, (d) $w_1=0.5, w_2=0.3, w_3=0.2$} \vspace{-4mm}
\label{fig:sim:Ltncy_F}
\end{figure}

\begin{figure}[h]
\begin{tikzpicture}
\pgfplotsset{
    width=0.25\textwidth,
    height=40mm,
    xtick={50,60,70,80,90,100},
    xticklabels={50,60,70,80,90,100},
    x tick label style={rotate=90,anchor=east,font=\tiny},
    xmin=50,xmax=100,
    ymin=50,ymax=2000,
    ylabel style={align=center},
    every axis/.append style={font=\scriptsize},
    every axis title/.style={below right,at={(0.37,-0.11)}},
}     
    \begin{groupplot}[ 
        group style={
            group size=4 by 1,
            horizontal sep=15mm,
        },
    ]
    \nextgroupplot[
            title={(a)},            
            ylabel={Avg. latency (in ms)},            
    ]
        \addplot[smooth,mark=o,black] table[x=noofTasks,y=cloud_p5]{\latencyWCVweightuniform}; 
        \addplot[smooth,mark=x,red] table[x=noofTasks,y=cloud_p4]{\latencyWCVweightuniform}; 
        \addplot[smooth,mark=triangle*,blue] table[x=noofTasks,y=cloud_p3]{\latencyWCVweightuniform}; 
        \addplot[smooth,mark=square*,green] table[x=noofTasks,y=cloud_p2]{\latencyWCVweightuniform}; 
        \addplot[smooth,mark=diamond*,magenta] table[x=noofTasks,y=cloud_p1]{\latencyWCVweightuniform};
    \nextgroupplot[
            title={(b)},
    ]
        
        \addplot[smooth,mark=o,black] table[x=noofTasks,y=cloud_p5]{\latencyWCVweightonresource}; 
        \addplot[smooth,mark=x,red] table[x=noofTasks,y=cloud_p4]{\latencyWCVweightonresource}; 
        \addplot[smooth,mark=triangle*,blue] table[x=noofTasks,y=cloud_p3]{\latencyWCVweightonresource}; 
        \addplot[smooth,mark=square*,green] table[x=noofTasks,y=cloud_p2]{\latencyWCVweightonresource}; 
        \addplot[smooth,mark=diamond*,magenta] table[x=noofTasks,y=cloud_p1]{\latencyWCVweightonresource};
    \coordinate (c2) at (rel axis cs:0,0);
    \nextgroupplot[
            title={(c)},
            legend style={legend columns=5,fill=none,draw=black,anchor=center,align=center, font=\scriptsize},
            legend to name=fred
    ]
        
        \addplot[smooth,mark=o,black] table[x=noofTasks,y=cloud_p5]{\latencyWCVweightonpriority}; 
        \addplot[smooth,mark=x,red] table[x=noofTasks,y=cloud_p4]{\latencyWCVweightonpriority}; 
        \addplot[smooth,mark=triangle*,blue] table[x=noofTasks,y=cloud_p3]{\latencyWCVweightonpriority}; 
        \addplot[smooth,mark=square*,green] table[x=noofTasks,y=cloud_p2]{\latencyWCVweightonpriority}; 
        \addplot[smooth,mark=diamond*,magenta] table[x=noofTasks,y=cloud_p1]{\latencyWCVweightonpriority};  
    \addlegendentry{p5};    
    \addlegendentry{p4};    
    \addlegendentry{p3};    
    \addlegendentry{p2};    
    \addlegendentry{p1};
    \coordinate (c3) at (rel axis cs:0,0);
    \nextgroupplot[
            title={(d)},
    ]             
        \addplot[smooth,mark=o,black] table[x=noofTasks,y=cloud_p5]{\latencyWCVweightonmakespan}; 
        \addplot[smooth,mark=x,red] table[x=noofTasks,y=cloud_p4]{\latencyWCVweightonmakespan}; 
        \addplot[smooth,mark=triangle*,blue] table[x=noofTasks,y=cloud_p3]{\latencyWCVweightonmakespan}; 
        \addplot[smooth,mark=square*,green] table[x=noofTasks,y=cloud_p2]{\latencyWCVweightonmakespan}; 
        \addplot[smooth,mark=diamond*,magenta] table[x=noofTasks,y=cloud_p1]{\latencyWCVweightonmakespan};  
    \end{groupplot}
    \coordinate (c6) at ($(c2)!1.0!(c3)$);    
    \node[below] at (c6 |- current bounding box.south)
      {\footnotesize Number of tasks (in thousands)};  

    \coordinate (c6) at ($(c2)!1.0!(c3)$);    
    \node[above] at (c6 |- current bounding box.north)
      {\pgfplotslegendfromname{fred}};    
\end{tikzpicture}
\caption{Latency of different priority tasks in Cloud environment under different weighted critical values \\($WV$) \scriptsize (a) $w_1=w_2=w_3$, (b) $w_1=0.2, w_2=0.3, w_3=0.5$, (c) $w_1=0.2, w_2=0.5, w_3=0.3$, (d) $w_1=0.5, w_2=0.3, w_3=0.2$} \vspace{-4mm}
\label{fig:sim:Ltncy_C}
\end{figure}
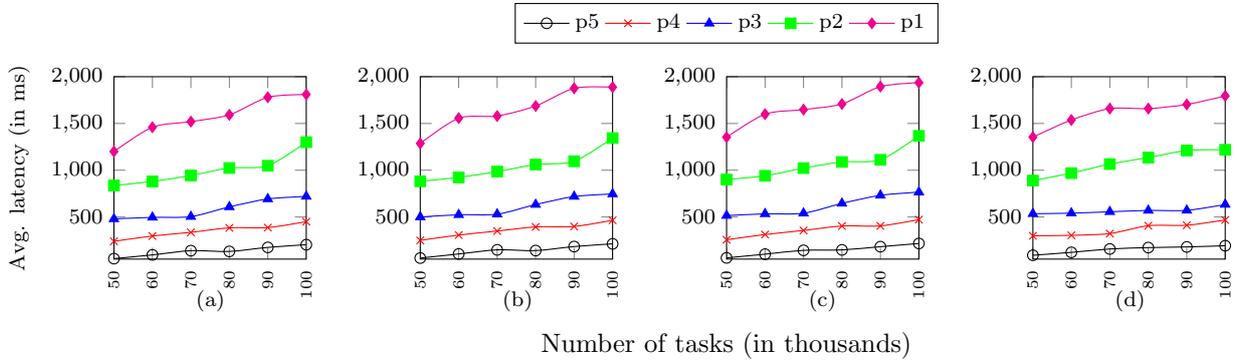

\revison{Under the first configuration, the average latency of tasks with $P_i=5$ and $P_i=1$ in only fog environment is approximately $11 ms$ and $93 ms$ when the number of tasks is $50$ thousand. The latency of those tasks increases to approx. $34 ms$ and $130ms$ when the number of tasks increases to $10000$, as shown in Figure }\ref{fig:sim:Ltncy_F}(a)\revison{. A similar pattern is observed under the forth configuration, where a higher weightage is given to the makespan parameter ($w_1=0.5, w_2=0.3, w_3=0.2$). The average latency of the tasks with priority ($P_i=5$) is approx. $8ms$, whereas the latency of $P_=1$ tasks is approx. $120ms$ when the number of tasks is $50$ thousand. From all the observations (Figure }\ref{fig:sim:Ltncy_F}(a) - \ref{fig:sim:Ltncy_F}(d)\revison{), we can conclude that under different configurations, the proposed HeRAFC algorithm gives a higher preference to the fog environment while fulfilling the resource demand and other objectives.}

\revison{Unlike Figure }\ref{fig:sim:Ltncy_F}\revison{, Figure }\ref{fig:sim:Ltncy_C}\revison{ shows the effect of weighted critical value ($WV$) and the priority of tasks upon the average latency in the cloud environment. Figure }\ref{fig:sim:Ltncy_C}(a)\revison{ represents the average latency when all the parameters are given equal weightage, i.e., $w_1=w_2=w_3$. Figure }\ref{fig:sim:Ltncy_C}(b)\revison{, }\ref{fig:sim:Ltncy_C}(c)\revison{, and }\ref{fig:sim:Ltncy_C}(d)\revison{ represent the average latency when a higher weightage is given to resource demand, priority, and makespan parameters, respectively. Similar to the results in Figure }\ref{fig:sim:Ltncy_F}\revison{, it is observed that the tasks with higher priority have less latency. For instance, as shown in Figure }\ref{fig:sim:Ltncy_C}(a)\revison{, the tasks with $P_i=5$ and that are sent to the cloud environment have an average latency of $55ms$ when the number of tasks is $50$ thousand. The latency increases to $204ms$ when the total number of tasks increases to $100$ thousand. On the contrary, the  have the average latency increases from $1200ms$ and $1810ms$ when the total number of tasks  with $P_i=1$ increases from $50$ thousand to $100$ thousand. The similar patterns can be seen in Figure }\ref{fig:sim:Ltncy_C}(b)\revison{ and }\ref{fig:sim:Ltncy_C}(c)\revison{, where the average latency of $P_i=5$ tasks hosted in cloud environment is approx. $61ms$ and $64ms$ when a higher weightage is given to resource demand and priority parameters, respectively. Moreover, when the $P_i=5$ tasks are increased to $100$ thousand, their average latency increases to approx. $213ms$ and $220ms$ under the same configurations, respectively. Tasks with $P_i=1$ have the maximum latency irrespective of the weightage value of the parameters.}



\pgfplotstableread[row sep=\\,col sep=&]{
noofApps &	  fog_p5 &	  fog_p4 &	  fog_p3 &	  fog_p2 &	  fog_p1 &	  cloud_p5 &	  cloud_p4 &	  cloud_p3 &	  cloud_p2 &	  cloud_p1\\
5 &	  80 &	  70 &	  60 &	  50 &	  25 &	  20 &	  30 &	  40 &	  50 &	  75\\
6 &	  79 &	  68 &	  55 &	  45 &	  20 &	  21 &	  32 &	  45 &	  55 &	  80\\
7 &	  76 &	  59 &	  50 &	  44 &	  19 &	  24 &	  41 &	  50 &	  56 &	  81\\
8 &	  72 &	  59 &	  48 &	  37 &	  14 &	  28 &	  41 &	  52 &	  63 &	  86\\
9 &	  71 &	  52 &	  45 &	  36 &	  14 &	  29 &	  48 &	  55 &	  64 &	  86\\
10 &	  70 &	  50 &	  42 &	  21 &	  13 &	  30 &	  50 &	  58 &	  79 &	  87\\
}\appTasksInFogAndCloud

\usepgfplotslibrary{groupplots}

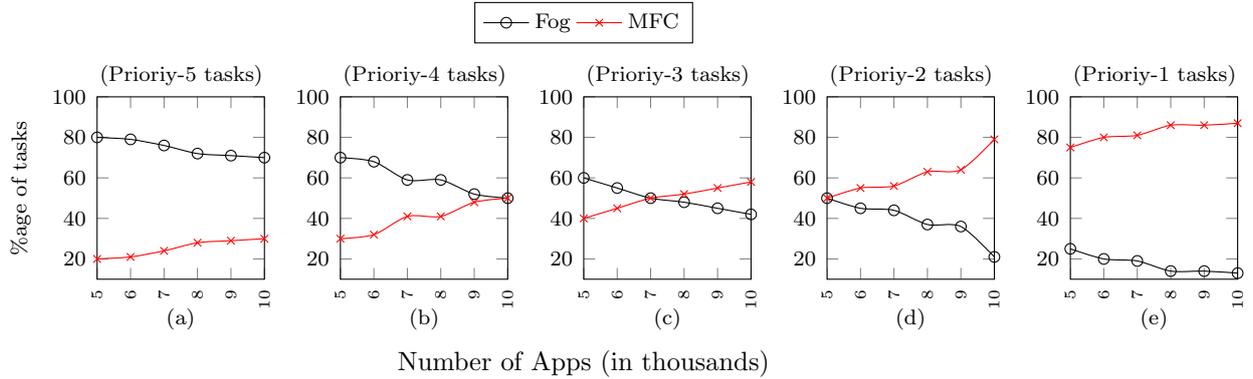
\begin{figure}[t]
\begin{tikzpicture}
\pgfplotsset{
    width=0.23\textwidth,
    height=40mm,
    xtick={5,6,7,8,9,10},
    xticklabels={5,6,7,8,9,10},
    x tick label style={rotate=90,anchor=east,font=\tiny},   
    xmin=5,xmax=10,
    ymin=10,ymax=100,
    ylabel style={align=center},
    xlabel style={below=-3mm},
    every axis/.append style={font=\scriptsize},
    every axis title/.style={above right,at={(-0.05,1)}},
}     
    \begin{groupplot}[ 
        group style={
            group size=5 by 1,     
        },
    ]
    \nextgroupplot[
            title={(Prioriy-5 tasks)},            
            xlabel={(a)},,
            ylabel={\%age of tasks},            
    ]
        \addplot[smooth,mark=o,black] table[x=noofApps,y=fog_p5]{\appTasksInFogAndCloud}; 
        \addplot[smooth,mark=x,red] table[x=noofApps,y=cloud_p5]{\appTasksInFogAndCloud}; 
    \nextgroupplot[
            title={(Prioriy-4 tasks)},
            xlabel={(b)},
    ]
        
        \addplot[smooth,mark=o,black] table[x=noofApps,y=fog_p4]{\appTasksInFogAndCloud}; 
        \addplot[smooth,mark=x,red] table[x=noofApps,y=cloud_p4]{\appTasksInFogAndCloud}; 
    \coordinate (c2) at (rel axis cs:0,0);
    \nextgroupplot[
            title={(Prioriy-3 tasks)},
            xlabel={(c)},
            legend style={legend columns=5,fill=none,draw=black,anchor=center,align=center, font=\scriptsize},
            legend to name=fred
    ]        
        \addplot[smooth,mark=o,black] table[x=noofApps,y=fog_p3]{\appTasksInFogAndCloud}; 
        \addplot[smooth,mark=x,red] table[x=noofApps,y=cloud_p3]{\appTasksInFogAndCloud}; 
    \addlegendentry{Fog};    
    \addlegendentry{MFC};    
    \coordinate (c3) at (rel axis cs:0,0);
    \nextgroupplot[
            xlabel={(d)},
            title={(Prioriy-2 tasks)},
    ]             
        \addplot[smooth,mark=o,black] table[x=noofApps,y=fog_p2]{\appTasksInFogAndCloud}; 
        \addplot[smooth,mark=x,red] table[x=noofApps,y=cloud_p2]{\appTasksInFogAndCloud}; 
    
    \nextgroupplot[
            xlabel={(e)},
            title={(Prioriy-1 tasks)},
    ]             
        \addplot[smooth,mark=o,black] table[x=noofApps,y=fog_p1]{\appTasksInFogAndCloud}; 
        \addplot[smooth,mark=x,red] table[x=noofApps,y=cloud_p1]{\appTasksInFogAndCloud}; 
    \end{groupplot}
    
    \coordinate (c6) at ($(c2)!1.0!(c3)$);    
    \node[below] at (c6 |- current bounding box.south)
      {\footnotesize Number of Apps (in thousands)};    

    \coordinate (c6) at ($(c2)!1.0!(c3)$);    
    \node[above] at (c6 |- current bounding box.north)
      {\pgfplotslegendfromname{fred}};    
\end{tikzpicture}
\caption{Percentage of tasks (of different priorities) allocated to only FNs and to MFC environments.}\vspace{-4mm}
\label{fig:sim:PerofTskInFogCld}
\end{figure}
\revison{It is essential to investigate the percentage of tasks allocated to fog and cloud environments; the same simulation results are presented in Figure }\ref{fig:sim:PerofTskInFogCld}\revison{. The X-axis represents the number of applications ranging from $5000$ to $10000$. The applications consist of different priority tasks. Y-axis represents the percentage of tasks (with specific priority values) assigned to fog or cloud environments. Figure }\ref{fig:sim:PerofTskInFogCld}\revison{ consists of five sub-figures, each for specific priority values. Figure }\ref{fig:sim:PerofTskInFogCld}(a)\revison{ shows the percentage of $P_i=5$ tasks processed by the FNs that might be in the nearby location or at a multi-hop distance and by the cloud environment. Similarly, Figure }\ref{fig:sim:PerofTskInFogCld}(b)\revison{ and }\ref{fig:sim:PerofTskInFogCld}(c)\revison{ represent the percentage of $P_i=4$ and $P_i=3$ tasks, respectively, that are processed by the FNs and by the cloud environment. The simulation result in all those sub-figures shows the tasks with higher priority are given higher preference to be processed by the FN. For instance, out of all the $P_i=5$ tasks (of $5000$ applications), approximately $80\%$ of the tasks are processed by FNs and rest $20\%$ are processed by the cloud. However, when the number of applications increases to $10000$, a decreasing percentage of the $P_i=5$ tasks are handled by FNs, i.e., $70\%$ of the tasks are processed by fog environment and the rest are handled by cloud, as shown in Figure }\ref{fig:sim:PerofTskInFogCld}(a). 

\revison{Similarly, in the case of $P_i=4$ tasks, it is observed that approx. $70\%$ of the tasks are assigned to the fog environment and the rest $30\%$ are assigned to the cloud when the total number of applications was $5000$. When the total number of applications increases to $10000$ (i.e. number of tasks is $100$ thousand), an approximately equal percentage of $P_i=4$ tasks are processed by FNs and by cloud environment, as shown in Figure }\ref{fig:sim:PerofTskInFogCld}(b)\revison{. Figure }\ref{fig:sim:PerofTskInFogCld}(e)\revison{ gives a result that is opposite to the result in }\ref{fig:sim:PerofTskInFogCld}(a)\revison{. In Figure }\ref{fig:sim:PerofTskInFogCld}(e)\revison{, the maximum number of tasks with $P_i=1$ are handled by the cloud. When the number of applications was $5000$, approx. $25\%$ of the $P_i=1$ tasks are processed by FNs and rest $75\%$ of the $P_i=1$ tasks are processed by the cloud. The percentage of $P_i=1$ tasks that are sent to cloud further increases to more than $85\%$ when the total number of applications increases to $10$ thousand. This is due to the fact that, when the number of applications increases, the number of tasks increases, and their resource demand increases. However, the resource availability in the fog environment is fixed to fulfill those increasing resource demands of all the tasks. As a result, there is a higher chance of tasks being sent to the cloud environment.} 

\pgfplotstableread[row sep=\\,col sep=&]{
noOfApps &	  WMDOrder &	  PriorityBasedOrder &	  RandomOrder\\
5 &	  30 &	  23 &	  21\\
5.5 &	  34 &	  26 &	  22\\
6 &	  38 &	  28 &	  25\\
6.5 &	  44 &	  35 &	  26\\
7 &	  57 &	  53 &	  29\\
7.5 &	  60 &	  56 &	  29\\
8 &	  72 &	  61 &	  32\\
8.5 &	  75 &	  62 &	  43\\
9 &	  81 &	  65 &	  45\\
9.5 &	  85 &	  71 &	  49\\
10 &	  86 &	  72 &	  50\\
}\taskOrderCompResrcUtil

\pgfplotstableread[row sep=\\,col sep=&]{
noOfApps &	  WMDOrder &	  PriorityBasedOrder &	  RandomOrder\\
5 &	  31 &	  26 &	  20\\
5.5 &	  36 &	  28 &	  26\\
6 &	  39 &	  32 &	  30\\
6.5 &	  42 &	  35 &	  30\\
7 &	  45 &	  36 &	  30\\
7.5 &	  48 &	  37 &	  31\\
8 &	  50 &	  41 &	  34\\
8.5 &	  55 &	  44 &	  36\\
9 &	  57 &	  44 &	  38\\
9.5 &	  58 &	  45 &	  40\\
10 &	  59 &	  45 &	  40\\
}\taskOrderNWResrcUtil
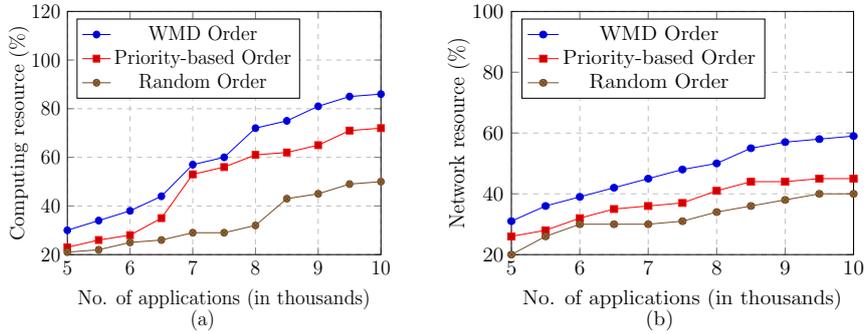
\begin{figure}[t]
	\centering
 \pgfplotsset{    
        every axis title/.style={below right,at={(0.37,-0.2)}},
    } 
\subfloat{\resizebox{0.35\textwidth}{45mm}{
    
    \begin{tikzpicture}
        \begin{axis}[
                width=0.5\textwidth,
                height=.4\textwidth,
                legend pos=north west,
                symbolic x coords={5, 5.5, 6, 6.5, 7, 7.5, 8, 8.5, 9, 9.5,  10},
                nodes near coords align={vertical},
                ymin=20,ymax=120,xmin=5,xmax=10,
                ylabel={Computing resource (\%)},
                y label style={at={(-0.1,0.5)}},
                title = {(a)},
                xlabel={No. of applications (in thousands)},ymajorgrids=true, 
                xmajorgrids=true, grid style=dashed,
            ]
            \addplot table[x=noOfApps,y=WMDOrder]{\taskOrderCompResrcUtil};
            \addplot table[x=noOfApps,y=PriorityBasedOrder]{\taskOrderCompResrcUtil};
            \addplot table[x=noOfApps,y=RandomOrder]{\taskOrderCompResrcUtil};
            \legend{WMD Order, Priority-based Order, Random Order}
        \end{axis}
    \end{tikzpicture} \hspace{5mm} \label{fig:sim:taskOrderCompResrcUtil}}}
\subfloat{\resizebox{0.35\textwidth}{45mm}{
    \begin{tikzpicture}
        \begin{axis}[
                width=0.5\textwidth,
                height=.4\textwidth,
                legend pos=north west,
                symbolic x coords={5, 5.5, 6, 6.5, 7, 7.5, 8, 8.5, 9, 9.5,  10},
                nodes near coords align={vertical},
                ymin=20,ymax=100,xmin=5,xmax=10,
                ylabel={Network resource (\%)},
                y label style={at={(-0.1,0.5)}},
                title = {(b)},
                xlabel={No. of applications (in thousands)},ymajorgrids=true, xmajorgrids=true, grid style=dashed
            ]
            \addplot table[x=noOfApps,y=WMDOrder]{\taskOrderNWResrcUtil};
            \addplot table[x=noOfApps,y=PriorityBasedOrder]{\taskOrderNWResrcUtil};
            \addplot table[x=noOfApps,y=RandomOrder]{\taskOrderNWResrcUtil};
            \legend{WMD Order, Priority-based Order, Random Order}
        \end{axis}
    \end{tikzpicture}\label{fig:sim:taskOrderNWResrcUtil}} }
	\caption{Impact of tasks' order on resource utilization: \protect \subref{fig:sim:taskOrderCompResrcUtil} average computing (CPU and memory) resource utilization and \protect \subref{fig:sim:taskOrderNWResrcUtil} average network resource utilization.}
	\label{fig:sim:TaskOrderRsrcUtil} 
    \vspace{-4mm}
\end{figure}

\revison{Algorithm }\ref{algo:HeRAFC:tskOrdr}\revison{ decides the order of tasks to be followed while allocating the resources. This paper also investigated the effect of tasks' order on resource utilization, as shown in Figure }\ref{fig:sim:TaskOrderRsrcUtil}\revison{. Along with the WMD (Weighted Multi-Dimensional)  approach, this paper investigated resource utilization when the order of tasks is decided randomly and solely based on the priority value. Figure }\ref{fig:sim:TaskOrderRsrcUtil}\subref{fig:sim:taskOrderCompResrcUtil}\revison{ and }\ref{fig:sim:TaskOrderRsrcUtil}\subref{fig:sim:taskOrderNWResrcUtil}\revison{ shows the computing and network resource utilization, respectively. The X-axis represents the number of applications (ranging from $5000$ to $10$ thousand) and the Y-axis represents the resource utilization in percentage. In the case of random order, the computing resource utilization ranges from approximately $20\%$ to $50\%$ with the number of applications increasing from $5000$ to $10000$ (}Figure \ref{fig:sim:TaskOrderRsrcUtil}\subref{fig:sim:taskOrderCompResrcUtil}\revison{). Similarly, the network resource utilization increases from $20\%$ to $40\%$, when the number of tasks increases from $5000$ to $10000$ (}\ref{fig:sim:TaskOrderRsrcUtil}\subref{fig:sim:taskOrderNWResrcUtil}\revison{). A better result can be observed when the tasks are ordered solely based on their priority values. The computing and network resource utilization increases from $23\%$ to $72\%$ and $26\%$ to $45\%$, when the number of tasks increases from $5000$ to $10000$. However, with the proposed WMD order, resource utilization can further be improved. The computing and network resource utilization increases from $30\%$ to $86\%$ and from $31\%$ to $59\%$, respectively, which is better than the other two approaches. The user decides the tasks' priority values, which do not include tasks' resource demand and makespan values. This affects the final resource utilization, as discussed before. Similarly, in the random order of tasks, none of the parameters are given any priority/importance, resulting in a degraded resource utilization compared to the WMD-based ordering tasks strategy.}   

\subsection{Uncertainty Analysis}
\revison{The impact of fluctuating resource availability at the FNs on the resource utilization is analyzed and the results are presented in }\ref{apndx:uncertainAna}.

\subsection{Simulation time analysis}
\revison{The time taken to handle different number of applications by both the algorithms is observed and the results are analyzed and presented in }\ref{apndx:simTimeAna}.

\section{Conclusions and future works}\label{sec:concls}
In this paper, we investigated the problem of resource allocation to users' applications consisting of multiple dependent tasks in MultiFog-Cloud environment. The proposed HeRAFC algorithm allows one FN to communicate with other FNs through Fog Cloud Interface. As a result, the service providers are allowed to choose a FN that is at multi-hop distance from the nearby FN to host and execute part of the application. Addressing the resource allocation problem in such environment, the aforementioned problem is first formulated as Integer Linear Programming problem and a novel heuristic HeRAFC algorithm is presented that facilitates the users to decide the importance of one task over other through priority values. For each task, the proposed algorithm explores the best hosting environment that is close to the environment where the adjacent tasks are deployed considering the makespan, latency and resource demand as the major parameters. To prove the efficiency, the simulation results of HeRAFC algorithm and other related algorithms are compared and presented. The performance of the proposed algorithm is evaluated in terms of different parameters such as average latency in fog and cloud, resource utilization in fog and cloud etc. 

In a similar manner to our previous work in \cite{MAHMUD2018_2018_j7}, a small experimental testbed will be used to implement the HeRAFC algorithm to make it more applicable to practical scenarios while minimizing energy consumption and improving QoS. We will consider applying the proposed HeRAFC to our proposed Indie Fog \cite{IndieFog} architecture. Additionally, HeRAFC should incorporate both FN mobility and end user applications. The development of a simulated testbed for fog/edge computing is currently underway using an opportunistic network emulator (STEP-ONE \cite{MASS2019100051}), which allows us to use different mobility models. As part of our future work, we will incorporate this dynamic behavior into HeRAFC.

\section*{Acknowledgment}

This research is supported by SERB, India, through grant CRG/2021/003888. We also thank financial support to UoH-IoE by MHRD, India (F11/9/2019-U3(A)).

\bibliographystyle{elsarticle-num}
\bibliography{references}

\begin{thebibliography}{10}
\expandafter\ifx\csname url\endcsname\relax
  \def\url#1{\texttt{#1}}\fi
\expandafter\ifx\csname urlprefix\endcsname\relax\def\urlprefix{URL }\fi
\expandafter\ifx\csname href\endcsname\relax
  \def\href#1#2{#2} \def\path#1{#1}\fi

\bibitem{kumar_dehury_efficient_2020}
C.~K. Dehury, S.~N. Srirama, An efficient service dispersal mechanism for fog
  and cloud computing using deep reinforcement learning, in: 2020 20th IEEE/ACM
  International Symposium on Cluster, Cloud and Internet Computing (CCGRID),
  2020, pp. 589--598.

\bibitem{8014362}
S.~Shekhar, A.~D. Chhokra, A.~Bhattacharjee, G.~Aupy, A.~Gokhale, Indices:
  Exploiting edge resources for performance-aware cloud-hosted services, in:
  2017 IEEE 1st International Conference on Fog and Edge Computing (ICFEC),
  2017, pp. 75--80.
\newblock \href {https://doi.org/10.1109/ICFEC.2017.16}
  {\path{doi:10.1109/ICFEC.2017.16}}.

\bibitem{MASS2019100051}
J.~Mass, C.~Chang, S.~N. Srirama, Edge process management: A case study on
  adaptive task scheduling in mobile iot, Internet of Things 6 (2019) 100051.

\bibitem{comp1}
Y.~{Liu}, F.~R. {Yu}, X.~{Li}, H.~{Ji}, V.~C.~M. {Leung}, {Distributed Resource
  Allocation and Computation Offloading in Fog and Cloud Networks With
  Non-Orthogonal Multiple Access}, IEEE Transactions on Vehicular Technology
  67~(12) (2018) 12137--12151.

\bibitem{comp2}
H.~{Shah-Mansouri}, V.~W.~S. {Wong}, {Hierarchical Fog-Cloud Computing for IoT
  Systems: A Computation Offloading Game}, IEEE Internet of Things Journal
  5~(4) (2018) 3246--3257.

\bibitem{delimitrou_quasar_2014}
C.~Delimitrou, C.~Kozyrakis, Quasar: Resource-efficient and qos-aware cluster
  management, SIGPLAN Not. 49~(4) (2014) 127–144.
\newblock \href {https://doi.org/10.1145/2644865.2541941}
  {\path{doi:10.1145/2644865.2541941}}.

\bibitem{IndieFog}
C.~{Chang}, S.~N. {Srirama}, R.~{Buyya}, Indie fog: An efficient fog-computing
  infrastructure for the internet of things, Computer 50~(9) (2017) 92--98.

\bibitem{tran-dang_frato_2021}
H.~Tran-Dang, D.-S. Kim, Frato: Fog resource based adaptive task offloading for
  delay-minimizing iot service provisioning, IEEE Transactions on Parallel and
  Distributed Systems 32~(10) (2021) 2491--2508.

\bibitem{MISHRA2019217_2019_j7}
M.~K. Mishra, N.~K. Ray, A.~R. Swain, G.~B. Mund, B.~S.~P. Mishra, An adaptive
  model for resource selection and allocation in fog computing environment,
  Computers \& Electrical Engineering 77 (2019) 217 -- 229.

\bibitem{8456519_2018_j4}
L.~{Zhang}, J.~{Li}, Enabling robust and privacy-preserving resource allocation
  in fog computing, IEEE Access 6 (2018) 50384--50393.

\bibitem{LI202224}
K.~Li, Distributed and individualized computation offloading optimization in a
  fog computing environment, Journal of Parallel and Distributed Computing 159
  (2022) 24--34.

\bibitem{comp3}
L.~{Ni}, J.~{Zhang}, C.~{Jiang}, C.~{Yan}, K.~{Yu}, {Resource Allocation
  Strategy in Fog Computing Based on Priced Timed Petri Nets}, IEEE Internet of
  Things Journal 4~(5) (2017) 1216--1228.

\bibitem{8449762_2018_c1}
A.~{Gomez-Cárdenas}, X.~{Masip-Bruin}, E.~{Marin-Tordera}, S.~{Kahvazadeh},
  J.~{Garcia}, {A Resource Identity Management Strategy for Combined
  Fog-to-Cloud Systems}, in: 2018 IEEE 19th International Symposium on "A World
  of Wireless, Mobile and Multimedia Networks" (WoWMoM), 2018, pp. 01--06.

\bibitem{SOUZA20181_2018_j5}
V.~Souza, X.~Masip-Bruin, E.~Marín-Tordera, S.~Sànchez-López, J.~Garcia,
  G.~Ren, A.~Jukan, A.~J. Ferrer, Towards a proper service placement in
  combined fog-to-cloud (f2c) architectures, Future Generation Computer Systems
  87 (2018) 1 -- 15.

\bibitem{MAHMUD2018_2018_j7}
R.~Mahmud, S.~N. Srirama, K.~Ramamohanarao, R.~Buyya, Quality of experience
  (qoe)-aware placement of applications in fog computing environments, Journal
  of Parallel and Distributed Computing (2018).

\bibitem{8399557_2018_j3}
L.~{Yin}, J.~{Luo}, H.~{Luo}, Tasks scheduling and resource allocation in fog
  computing based on containers for smart manufacturing, IEEE Transactions on
  Industrial Informatics 14~(10) (2018) 4712--4721.

\bibitem{8556474_2019_j3}
Z.~{Liu}, X.~{Yang}, Y.~{Yang}, K.~{Wang}, G.~{Mao}, {DATS: Dispersive Stable
  Task Scheduling in Heterogeneous Fog Networks}, IEEE Internet of Things
  Journal 6~(2) (2019) 3423--3436.

\bibitem{8673721_2019_j1}
Q.~{Li}, J.~{Zhao}, Y.~{Gong}, Q.~{Zhang}, {Energy-efficient computation
  offloading and resource allocation in fog computing for Internet of
  Everything}, China Communications 16~(3) (2019) 32--41.

\bibitem{ADHIKARI2019100053_2019_j4}
M.~Adhikari, H.~Gianey, Energy efficient offloading strategy in fog-cloud
  environment for iot applications, Internet of Things 6 (2019) 100053.

\bibitem{8437204_2018_j1}
B.~{Jia}, H.~{Hu}, Y.~{Zeng}, T.~{Xu}, Y.~{Yang}, Double-matching resource
  allocation strategy in fog computing networks based on cost efficiency,
  Journal of Communications and Networks 20~(3) (2018) 237--246.

\bibitem{8240666_2018_j2}
J.~{Du}, L.~{Zhao}, J.~{Feng}, X.~{Chu}, Computation offloading and resource
  allocation in mixed fog/cloud computing systems with min-max fairness
  guarantee, IEEE Transactions on Communications 66~(4) (2018) 1594--1608.

\bibitem{07448886_2019_j2}
M.~{Shojafar}, N.~{Cordeschi}, E.~{Baccarelli}, Energy-efficient adaptive
  resource management for real-time vehicular cloud services, IEEE Transactions
  on Cloud Computing 7~(1) (2019) 196--209.

\bibitem{8616989_2018_c2}
J.~{Akram}, Z.~{Najam}, A.~{Rafi}, Efficient resource utilization in cloud-fog
  environment integrated with smart grids, in: 2018 International Conference on
  Frontiers of Information Technology (FIT), 2018, pp. 188--193.

\bibitem{aranda_context-aware_2022}
J.~A.~S. Aranda, R.~{dos Santos Costa}, V.~W. {de Vargas}, P.~R. {da Silva
  Pereira}, J.~L.~V. Barbosa, M.~P. Vianna, Context-aware edge computing and
  internet of things in smart grids: A systematic mapping study, Computers and
  Electrical Engineering 99 (2022) 107826.

\bibitem{8450331_2018_c3}
S.~{Javaid}, N.~{Javaid}, S.~K. {Tayyaba}, N.~A. {Sattar}, B.~{Ruqia},
  M.~{Zahid}, Resource allocation using fog-2-cloud based environment for smart
  buildings, in: 2018 14th International Wireless Communications Mobile
  Computing Conference (IWCMC), 2018, pp. 1173--1177.

\bibitem{8450422_2018_c4}
I.~{Fatima}, N.~{Javaid}, M.~{Nadeem Iqbal}, I.~{Shafi}, A.~{Anjum}, U.~{Ullah
  Memon}, Integration of cloud and fog based environment for effective resource
  distribution in smart buildings, in: 2018 14th International Wireless
  Communications Mobile Computing Conference (IWCMC), 2018, pp. 60--64.

\bibitem{ROSENDO202271}
D.~Rosendo, A.~Costan, P.~Valduriez, G.~Antoniu, Distributed intelligence on
  the edge-to-cloud continuum: A systematic literature review, Journal of
  Parallel and Distributed Computing 166 (2022) 71--94.

\bibitem{WANG201911_2019_j5}
T.~Wang, Y.~Liang, W.~Jia, M.~Arif, A.~Liu, M.~Xie, Coupling resource
  management based on fog computing in smart city systems, Journal of Network
  and Computer Applications 135 (2019) 11 -- 19.

\bibitem{ALAMGIRHOSSAIN2018226}
S.~{Alamgir Hossain}, M.~{Anisur Rahman}, M.~A. Hossain, Edge computing
  framework for enabling situation awareness in iot based smart city, Journal
  of Parallel and Distributed Computing 122 (2018) 226--237.

\bibitem{8450410_2018_c5}
A.~{Yasmeen}, N.~{Javaid}, O.~U. {Rehman}, H.~{Iftikhar}, M.~F. {Malik}, F.~J.
  {Muhammad}, Efficient resource provisioning for smart buildings utilizing fog
  and cloud based environment, in: 2018 14th International Wireless
  Communications Mobile Computing Conference (IWCMC), 2018, pp. 811--816.

\end{thebibliography}

\begin{wrapfigure}{i}{1in}
\includegraphics[width=1in,height=1.25in,clip,keepaspectratio]{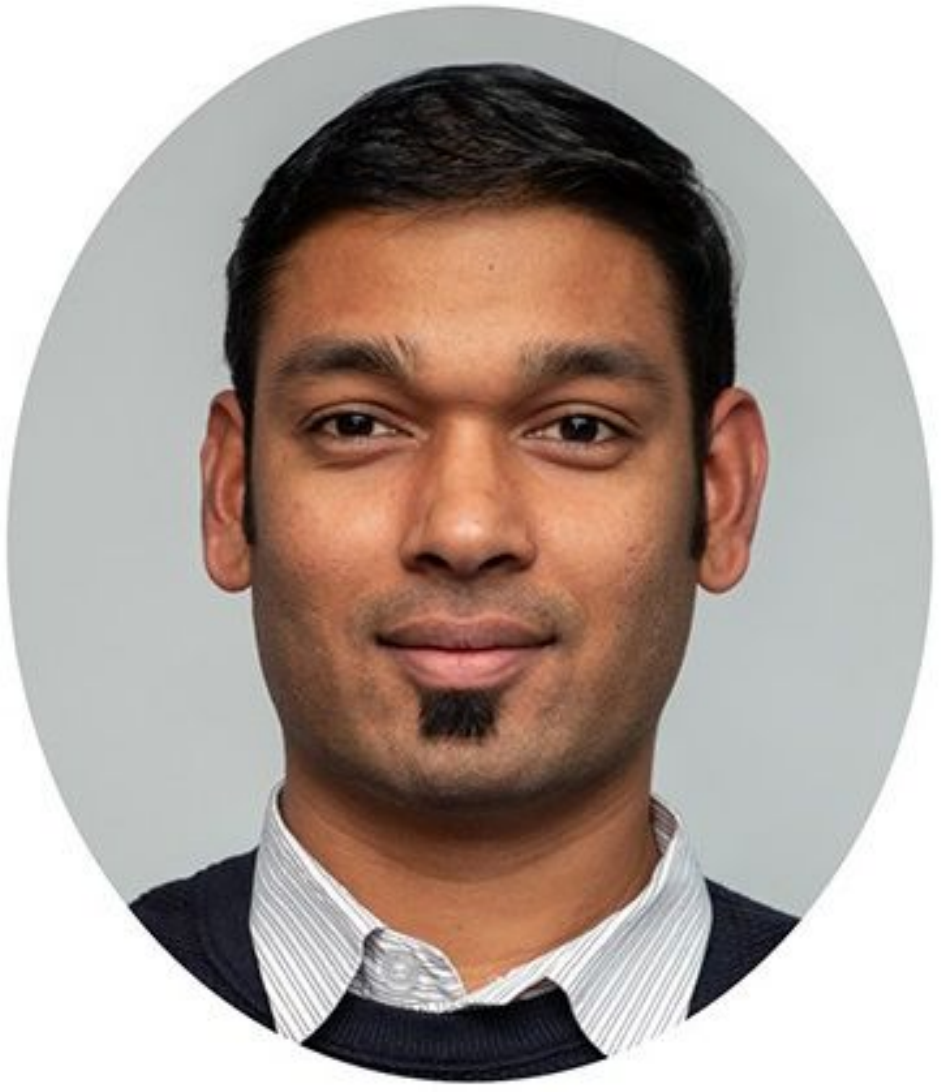}
\end{wrapfigure}
\textbf{Chinmaya Kumar Dehury} received a Bachelor’s degree from Sambalpur University, India, in June 2009 and a Master’s degree from Biju Pattnaik University of Technology, India, in June 2013. He received a Ph.D. Degree in the Department of Computer Science and Information Engineering, Chang Gung University, Taiwan. He is currently a Lecturer of Distributed System, member of Mobile \& Cloud Lab in the Institute of Computer Science, University of Tartu, Estonia. His research interests include scheduling, resource management and fault tolerance problems of Cloud and fog Computing, the application of artificial intelligence in cloud management, edge intelligence, Internet of Things, and data management frameworks. His research results are published by top-tier journals and transactions such as IEEE TCC, JSAC, TPDS, FGCS, etc. He is a member of IEEE and ACM India. He is also serving as a PC member of several conferences and reviewer to several journals and conferences, such as IEEE TPDS, IEEE JSAC, IEEE TCC, IEEE TNNLS, Wiley Software: Practice and Experience, etc. 

\begin{wrapfigure}{i}{1in}
\includegraphics[width=1in,height=1.25in,clip,keepaspectratio]{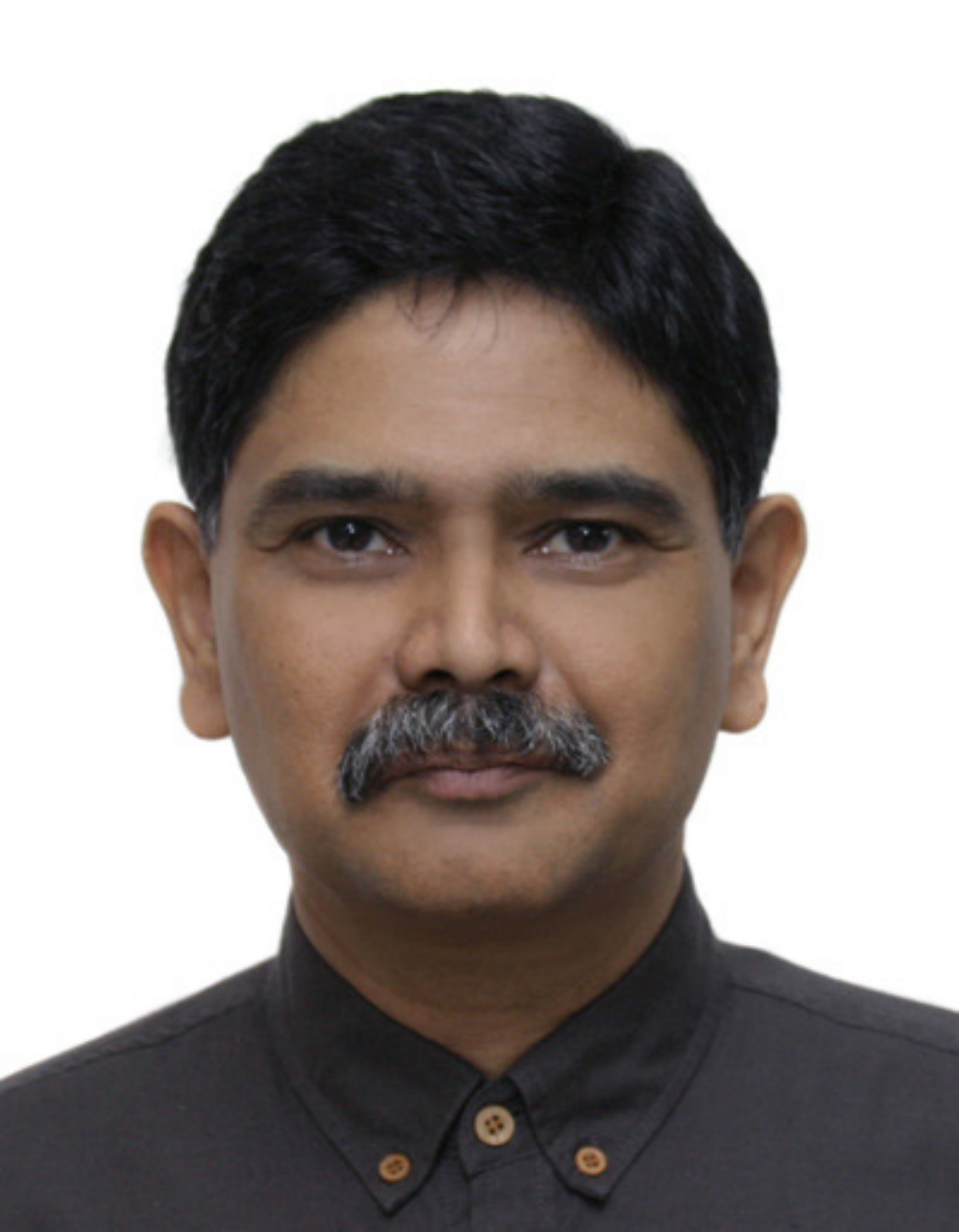}
\end{wrapfigure}
\textbf{Bharadwaj Veeravalli} received his BSc degree in Physics, from Madurai-Kamaraj University, India, in 1987, the Master's degree in Electrical Communication Engineering from the Indian Institute of Science, Bangalore, India in 1991, and the PhD degree from the Department of Aerospace Engineering, Indian Institute of Science, Bangalore, India, in 1994. He received gold medals for his bachelor degree overall performance and for an outstanding PhD thesis (IISc, Bangalore India) in the years 1987 and 1994, respectively. He is currently with the Department of Electrical and Computer Engineering, Communications and Information Engineering (CIE) division, at The National University of Singapore, Singapore, as a tenured Associate Professor. His main stream research interests include cloud/grid/cluster computing (big data processing, analytics and resource allocation), scheduling in parallel and distributed systems, Cybersecurity, and multimedia computing. He is one of the earliest researchers in the field of Divisible Load Theory (DLT). He is currently serving the editorial board of IEEE Transactions on Parallel and Distributed Systems as an associate editor. He is a senior member of the IEEE and the IEEE-CS.
	
\begin{wrapfigure}{i}{1in}
\includegraphics[width=1in,height=1.25in,clip,keepaspectratio]{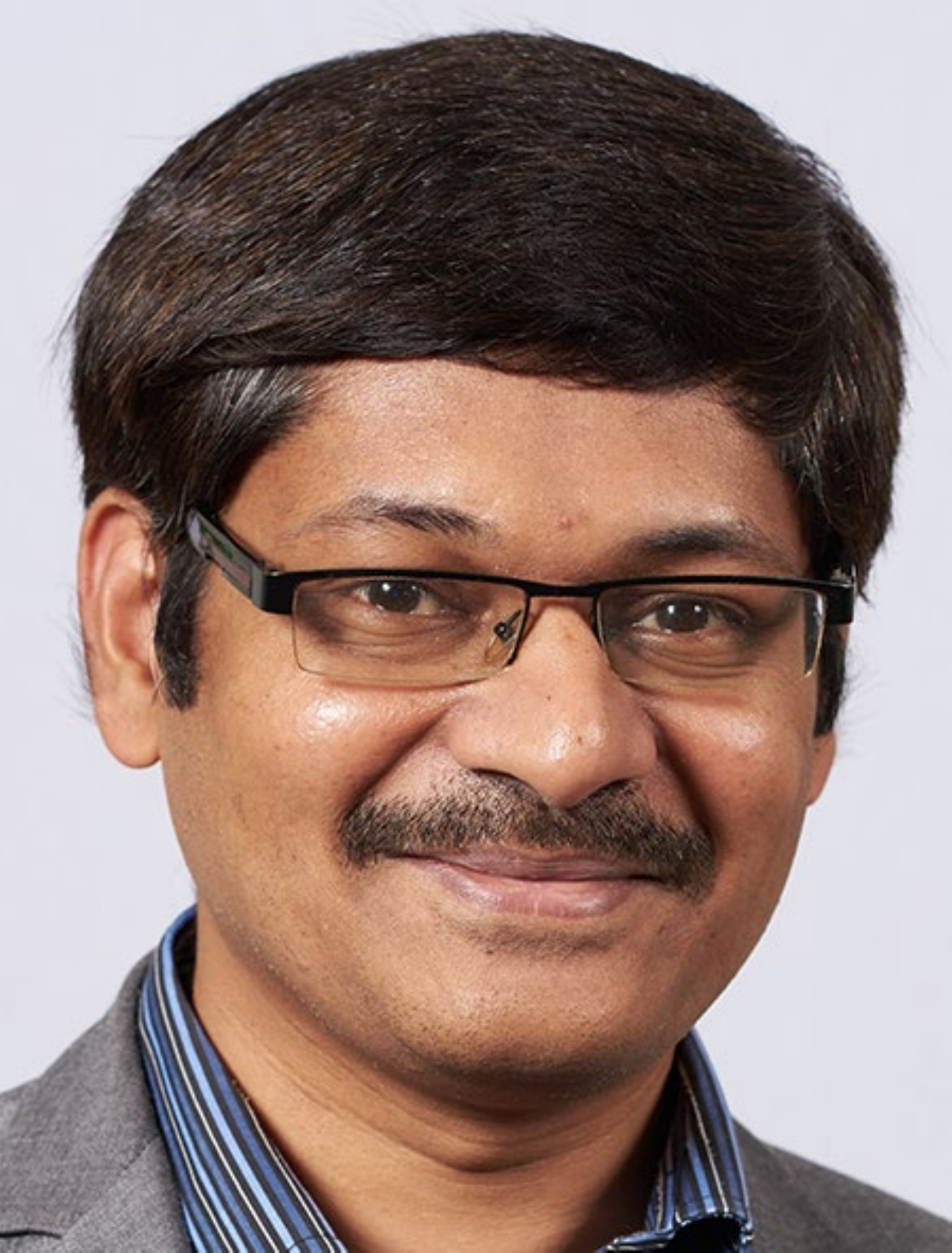}
\end{wrapfigure}
\textbf{Satish Narayana Srirama}
	is an Associate Professor at the School of Computer and Information Sciences, University of Hyderabad, India. He is also a Visiting Professor and the honorary head of the Mobile \& Cloud Lab at the Institute of Computer Science, University of Tartu, Estonia, which he led as a Research Professor until June 2020. He received his PhD in computer science from RWTH Aachen University, Germany in 2008. His current research focuses on cloud computing, mobile web services, mobile cloud, Internet of Things, fog computing, migrating scientific computing and enterprise applications to the cloud and large-scale data analytics on the cloud. He is IEEE Senior Member, an Editor of Wiley Software: Practice and Experience, a 52 year old Journal, was an Associate Editor of IEEE Transactions in Cloud Computing and a program committee member of several international conferences and workshops. Dr. Srirama has co-authored over 170 refereed scientific publications in international conferences and journals. 


%
\newpage
\setcounter{page}{1}
\appendix
\section{Time complexity of HeRAFC}
\subsection{Running time of Algorithm \ref{algo:HeRAFC:tskOrdr}}\label{appen:TimeCompAlgo1}
\begin{theorem}
	The time complexity of the Algorithm \ref{algo:HeRAFC:tskOrdr} is $ \mathcal{O}(|V'| \log{}|V'|)$ , where $V'$ be the set of tasks.
\end{theorem}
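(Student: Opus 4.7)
The plan is to bound the work of Algorithm \ref{algo:HeRAFC:tskOrdr} by (i) the cost of the initial sort on Line \ref{algo:HeRAFC:tskOrdr:sortTask} and (ii) the total cost of the \textbf{while} loop, then show each is at most $\mathcal{O}(|V'|\log|V'|)$.

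First I would handle the setup. Line \ref{algo:HeRAFC:tskOrdr:sortTask} performs a comparison sort on the $|V'|$ tasks by out-degree, which costs $\mathcal{O}(|V'|\log|V'|)$ using any standard sort; Line \ref{algo:HeRAFC:tskOrdr:extrctTaskLst} scans the sorted list to pull out the leaf tasks in $\mathcal{O}(|V'|)$. These are both absorbed into the target bound.

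Next I would analyze the \textbf{while} loop (Lines \ref{algo:HeRAFC:tskOrdr:LoopStart}--\ref{algo:HeRAFC:tskOrdr:LoopEnd}). The key structural observation is that the task graph is a DAG processed level-by-level in reverse topological order: the current frontier $TL$ is appended to $ProcessQ$, then replaced by the set of parent tasks (Lines \ref{algo:HeRAFC:tskOrdr:extractPrntTask}--\ref{algo:HeRAFC:tskOrdr:repeatExtrctTskLst}). Consequently every task enters $TL$ exactly once over the lifetime of the algorithm. Let $n_\ell=|TL_\ell|$ at iteration $\ell$, so $\sum_\ell n_\ell = |V'|$. Inside a single iteration: computing $WV(t)$, $OD(t)$ and $MCV(t)$ in Lines \ref{algo:HeRAFC:tskOrdr:MCVstart}--\ref{algo:HeRAFC:tskOrdr:MCVend} costs $\mathcal{O}(n_\ell)$ (each value is evaluated once per task from the normalized quantities already available); the sort on Line \ref{algo:HeRAFC:tskOrdr:sortExtTskLst} costs $\mathcal{O}(n_\ell \log n_\ell)$; the appends on Lines \ref{algo:HeRAFC:tskOrdr:append2PrcsQ}--\ref{algo:HeRAFC:tskOrdr:apnd-1} and the parent extraction on Line \ref{algo:HeRAFC:tskOrdr:extractPrntTask} are $\mathcal{O}(n_\ell)$ assuming constant-time access to the parent lists of the DAG. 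The per-iteration cost is therefore $\mathcal{O}(n_\ell \log n_\ell)$.

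Summing over all iterations and using $\log n_\ell \le \log |V'|$,
\begin{equation}
\sum_{\ell} \mathcal{O}(n_\ell \log n_\ell) \;\le\; \mathcal{O}\!\left(\log|V'| \cdot \sum_{\ell} n_\ell\right) \;=\; \mathcal{O}(|V'|\log|V'|).
\end{equation}
Combining the initial sort with the loop bound gives the claimed $\mathcal{O}(|V'|\log|V'|)$. The only delicate points are two bookkeeping issues I expect to be the main obstacle: (a) arguing that each task is touched only once across iterations (otherwise the sum telescopes badly), which follows from the DAG being acyclic so a parent is never re-inserted once processed; and (b) ensuring the parent-extraction step on Line \ref{algo:HeRAFC:tskOrdr:extractPrntTask} is linear in $n_\ell$ rather than in $|V'|$, which requires maintaining adjacency lists on the reverse edges. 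Both are mild assumptions on the DAG representation and fit into the stated bound.
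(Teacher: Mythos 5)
Your proof is correct and follows essentially the same route as the paper's: bound the initial out-degree sort by $\mathcal{O}(|V'|\log|V'|)$, then charge the per-iteration work of the \textbf{while} loop (critical-value computation, sorting $TL$, parent extraction) to the tasks in the current frontier and sum over iterations. Your explicit bound $\sum_{\ell} n_\ell \log n_\ell \le \log|V'|\sum_{\ell} n_\ell = |V'|\log|V'|$ is simply a cleaner rendering of the step the paper handles informally by noting the loop runs at most $|V'|$ times with per-iteration cost governed by $|TL|$.
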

\begin{proof}
	Algorithm \ref{algo:HeRAFC:tskOrdr} focuses on ordering of the tasks for deployment purpose. $ST$ contains $|V'|$ number of tasks. Sorting of $|V'|$ tasks based on the number of out-edges depends on the sorting algorithm. This sorting time can be reduced to $\mathcal{O}(|V'| \log{}|V'|)$ with QUICKSORT algorithm, as in Line \ref{algo:HeRAFC:tskOrdr:sortTask}. Following this, extraction of the tasks in Line \ref{algo:HeRAFC:tskOrdr:extrctTaskLst} would need a constant time as the list of the tasks is already in the previous step. The looping block from Line \ref{algo:HeRAFC:tskOrdr:MCVstart}-\ref{algo:HeRAFC:tskOrdr:MCVend} would depend on the size of $TL$ set. Furthermore, the sorting process in Line \ref{algo:HeRAFC:tskOrdr:sortExtTskLst} would consume the time-complexity of $\mathcal{O}(|TL| \log{}|TL|)$. The extraction process in Line \ref{algo:HeRAFC:tskOrdr:extractPrntTask} would consume a running time of $\mathcal{O}(|TL|)$. The outer \emph{while} loop from Line \ref{algo:HeRAFC:tskOrdr:LoopStart}-\ref{algo:HeRAFC:tskOrdr:LoopEnd} would depend on the maximum depth of the graph and hence would iterate for $|V'|$ (which is nothing but the size of set $ST$) number of times. This situation would occur if all the tasks run in a sequential manner, resulting only one task in set $TL$ in each iteration. The running time of Algorithm \ref{algo:HeRAFC:tskOrdr} can be calculated as $\mathcal{O}(|V'| \log{}|V'|) + \mathcal{O}(|V'|))$, which can be written as $\mathcal{O}(|V'| \log{}|V'|)$, where $|V'|$ is the number of tasks.
\end{proof}

\subsection{Running time of Algorithm \ref{algo:HeRAFC}}\label{appen:TimeCompAlgo2}
\begin{theorem}
	The time complexity of the proposed HeRAFC algorithm (Algorithm \ref{algo:HeRAFC}) is $\mathcal{O}(|E'|\left[\log{}|E'| + |E|log(|V|)\right]$ , where $E$ and $E'$ are the set of physical edges and task edges, respectively and $V$ is the number of cloud and FNs.
\end{theorem}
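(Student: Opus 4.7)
The plan is to bound the running time of Algorithm \ref{algo:HeRAFC} by partitioning it into four phases and summing the costs, showing that two terms dominate. The four phases are: (i) the invocation of Algorithm \ref{algo:HeRAFC:tskOrdr} on Line \ref{algo:HeRAFC:callAlgo1}; (ii) the resource-matrix initialization on Line \ref{algo:HeRAFC:calcAvaMat}; (iii) the node-mapping block of Lines \ref{algo:HeRAFC:nodeMap:innerLoop:start}--\ref{algo:HeRAFC:nodeMap:innerLoop:end}, executed across all iterations of the outer \emph{while}; and (iv) the edge-sorting and edge-mapping block of Lines \ref{algo:HeRAFC:EdgeSorting}--\ref{algo:HeRAFC:EdgeMapping:end}, also executed across all outer iterations.

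For phase (i), I would invoke the bound $\mathcal{O}(|V'|\log|V'|)$ proved in \ref{appen:TimeCompAlgo1}. For phase (ii), the resource-availability matrix $RM$ is built once in time polynomial in $|V|$, which will be absorbed by the dominant terms. For phase (iii), the key amortization argument is that the level-by-level extraction from $ProcessQ$ (delimited by the $-1$ markers placed in Algorithm \ref{algo:HeRAFC:tskOrdr}, Line \ref{algo:HeRAFC:tskOrdr:apnd-1}) ensures that each task of $V'$ enters some $ETL$ exactly once across all outer iterations. Per task, the work consists of computing the child-host list $CFL$, then probing nearby, one-hop, and two-hop FN sets $nf$, $oh$, $th$; in the worst case each probe scans $\mathcal{O}(|V|)$ candidate hosts, yielding $\mathcal{O}(|V'|\cdot|V|)$ over the full algorithm. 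This contribution is strictly dominated by the edge-mapping cost derived next and therefore disappears in the stated bound.

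For phase (iv), I would again use the amortization that each task edge in $E'$ appears in exactly one level's adjacent-edge set across all outer iterations. The sort on Line \ref{algo:HeRAFC:EdgeSorting}, taken over all levels, accounts for $\mathcal{O}(|E'|\log|E'|)$ total comparisons (even if per-level sorting is done separately, the sum $\sum_\ell |E'_\ell|\log|E'_\ell| \le |E'|\log|E'|$). The dominant term comes from the per-edge shortest-path invocation on Line \ref{algo:HeRAFC:shrtstPath}; realizing this subroutine with Dijkstra's algorithm on the resource graph $G=(V,E)$ using a binary-heap priority queue gives $\mathcal{O}(|E|\log|V|)$ per call, hence $\mathcal{O}(|E'|\cdot|E|\log|V|)$ in total. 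Summing phases (i)--(iv) and absorbing the $\mathcal{O}(|V'|\log|V'|)$ and $\mathcal{O}(|V'|\cdot|V|)$ terms into the edge-driven cost (since every task is an endpoint of at least one edge in any non-isolated user application, giving $|V'| = \mathcal{O}(|E'|)$) yields the stated $\mathcal{O}(|E'|[\log|E'| + |E|\log|V|])$.

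The main obstacle is the amortization argument: I need to establish cleanly that the outer \emph{while} loop, despite being unbounded in its code form, performs total work proportional to the sum over levels (not a product), and that each task and each task edge is charged $\mathcal{O}(1)$ times across all outer iterations. A secondary subtlety is pinning down the shortest-path complexity, since the algorithm statement merely says ``find the shortest path'' on Line \ref{algo:HeRAFC:shrtstPath}; I would make explicit the assumption that Dijkstra with a binary heap is used, which is the implementation implicit in achieving the $\mathcal{O}(|E|\log|V|)$ factor in the theorem. Once these two points are settled, the rest is arithmetic consolidation.
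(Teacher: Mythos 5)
Your proposal is correct and follows essentially the same route as the paper's proof in \ref{appen:TimeCompAlgo2}: the same phase decomposition, the $\mathcal{O}(|V'|\cdot|V|)$ bound for node mapping, $\mathcal{O}(|E'|\log|E'|)$ for edge sorting, a Dijkstra-based $\mathcal{O}(|E|\log|V|)$ per edge for path finding, and the final absorption of the task-driven terms into the edge-driven ones. Your explicit amortization over the $-1$-delimited levels and the justification $|V'|=\mathcal{O}(|E'|)$ via the no-isolated-task assumption are slightly more careful than the paper's appeal to ``larger task graphs,'' but the argument is the same.
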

\begin{proof}
	Algorithm \ref{algo:HeRAFC} depends on the output from the Algorithm \ref{algo:HeRAFC:tskOrdr} (Line \ref{algo:HeRAFC:callAlgo1}), which is having a running time of $\mathcal{O}(|V'| \log{}|V'|)$, where $|V'|$ is the number of tasks. Calculation of availability matrix takes a constant running time. The \emph{for} loop at Line \ref{algo:HeRAFC:nodeMap:innerLoop:start} iterate for each task in the set $ETL$. For each task, finding the FNs (Line \ref{algo:HeRAFC:nodeMap:innerLoop:start}-\ref{algo:HeRAFC:nodeMap:innerLoop:end}) that are in the communication range of user, at 1-hop distance, and at 2-hop distance would take the running time of $\mathcal{O}(|V|)$, where $|V|$ is the set of cloud and FNs as defined in Section \ref{sec:MFCmodel}. This process will be repeated for $|V'|$ number of times, the total running time can be calculated as $\mathcal{O}(|V'| * |V|)$.
	
	Sorting of edges of the task graph, as in Line \ref{algo:HeRAFC:EdgeSorting}, would take a running time of $\mathcal{O}(|E'| \log{}|E'|)$, where $|E'|$ is the number of task edges. For every task edge, an existing shortest path algorithm can be implemented. In Line \ref{algo:HeRAFC:shrtstPath}, assuming that Dijkstra shortest path algorithm is implemented to find the shortest path from FN $sh$ to $th$, the worst case running time for every task edge would be $\mathcal{O}((|V|+|E|)log(|V|))$, where $|E|$ is the number of physical edges among cloud and FNs. The looping block in Line \ref{algo:HeRAFC:EdgeMapping:start} would iterate for $|E'|$ number of times and hence the total running time can be calculated as $\mathcal{O}(|E'|*(|V|+|E|)log(|V|))$. The total running time of Algorithm \ref{algo:HeRAFC} can be calculated as follows.
	
	\begin{equation}
	\begin{aligned}
	\mathcal{O}(|V'| \log{}|V'|) + \mathcal{O}(|V'|*|V|) + \mathcal{O}(|E'| \log{}|E'|) & \\ \nonumber 
	+ \mathcal{O}(|E'|*(|V|+|E|)log(|V|)) & \\
	= \mathcal{O}(|V'| \log{}|V'|) + \mathcal{O}(|V'| * |V|) + \mathcal{O}(|E'| \log{}|E'|) & \\ 
	+ \mathcal{O}(|E'|*|V|log(|V|)) + \mathcal{O}(|E'|*|E|log(|V|)) \nonumber
	\end{aligned}
	\end{equation}
	In case of a larger task graph, the number of task edges, $|E'|$, increases significantly as compared to the number of tasks, $|V'|$. Similarly, when the number of cloud and FNs increases, relatively the number of physical edges, $|E|$, increases significantly. Considering this scenario, the total running time can be calculated as follows.
	\begin{equation}
	\begin{aligned}
	\mathcal{O}(|E'| \log{}|E'|) + \mathcal{O}(|E'|*|E|log(|V|)) & \\ 
	=  \mathcal{O}(|E'|\left[\log{}|E'| + |E|log(|V|)\right])  & \nonumber
	\end{aligned}
	\end{equation}
\end{proof}

\section{Simulated MFC environment and Applications}\label{appen:simEnv}
\revison{This section provides an in-depth understanding of programmatically generated application and MFC environment that following several distribution functions. The entire simulation environment basically consists of two primary components: Applications and the fog-Cloud environment. The resource configuration of the FCIs, fog, cloud and the users' applications are presented in JSON file. Separate JSON files are created for each user's app, including the tasks list, edges list, resource demand of tasks and edges, and other related information.} 

\subsection{Fog-Cloud Environment}
\pgfplotstableread[row sep=\\,col sep=&]{
noofFNs &	  avgCPUcapa &	  avgMEMcapa\\
50 &	  74.04280945 &	  147.5285839\\
100 &	  71.50642447 &	  145.2479015\\
150 &	  74.627629 &	  148.3136586\\
200 &	  70.85510558 &	  147.544122\\
250 &	  72.46060638 &	  150.1420846\\
300 &	  71.25753588 &	  146.6953972\\
350 &	  74.23895694 &	  147.1088198\\
400 &	  73.06415461 &	  146.7105056\\
450 &	  71.53028302 &	  145.2307475\\
500 &	  73.5263397 &	  147.7836599\\
}\envAnalysisFNresCapacity

\pgfplotstableread[row sep=\\,col sep=&]{
noofFNs &	  avgBWwithFCI &	  avgLatencywithFCI\\
50 &	  348.394 &	  74.807\\
100 &	  349.6455 &	  75.4174\\
150 &	  351.2589333 &	  75.89713333\\
200 &	  352.2171 &	  75.82075\\
250 &	  351.27576 &	  76.2922\\
300 &	  351.4957 &	  76.25306667\\
350 &	  351.4740571 &	  76.28311429\\
400 &	  351.36105 &	  76.049725\\
450 &	  351.3246667 &	  75.60968889\\
500 &	  350.77276 &	  75.61882\\
}\envAnalysisFNNWCapacity

\pgfplotstableread[row sep=\\,col sep=&]{
noofFCIs &	  totalFNs & avgFNs   \\
20 &	  57 &	  2.45\\
40 &	  102 &	  2.55\\
60 &	  150 &	  2.5\\
80 &	  205 &	  2.56\\
100 &	  259 &	  2.59\\
120 &	  306 &	  2.55\\
140 &	  350 &	  2.5\\
160 &	  396 &	  2.47\\
180 &	  452 &	  2.51\\
200 &	  498 &	  2.49\\
}\envAnalysisFCIwithFNs
\pgfplotstableread[row sep=\\,col sep=&]{
noofFCIs &	  avgBWwithCloud &	  avgLatencyWithCloud\\
20 &	  739.794 &	  147.0825\\
40 &	  737.909 &	  145.0325\\
60 &	  724.7745 &	  147.8718333\\
80 &	  725.080125 &	  145.947125\\
100 &	  732.9384 &	  148.0655\\
120 &	  724.6948333 &	  148.9795833\\
140 &	  721.0546429 &	  148.1541429\\
160 &	  730.1338125 &	  149.2205625\\
180 &	  723.9213889 &	  148.7714444\\
200 &	  725.701 &	  148.52455\\
}\envAnalysisFCIwithCloudNWres
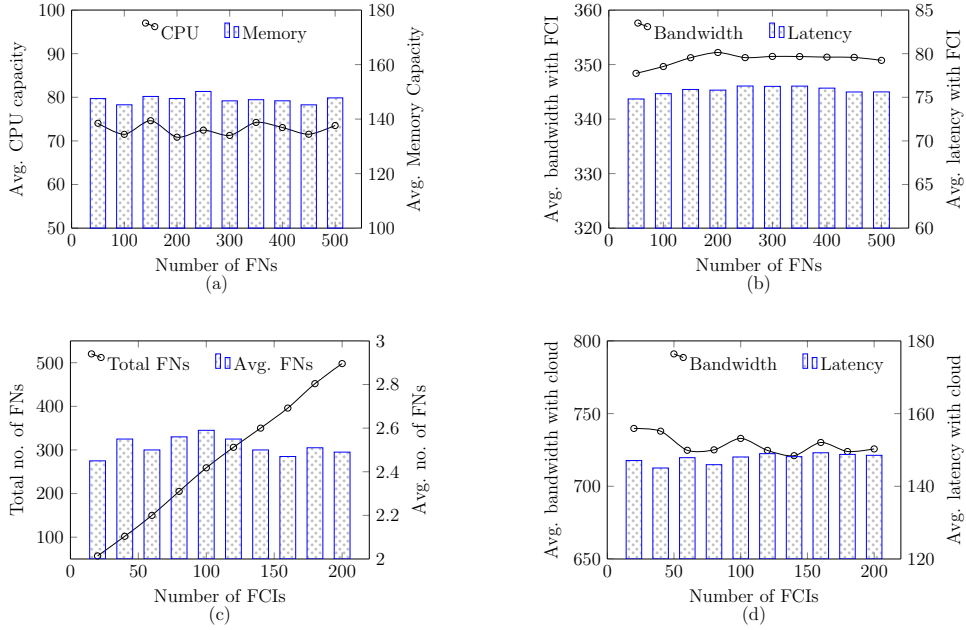
\begin{figure}[t]
	\centering
\subfloat{\resizebox{0.35\textwidth}{40mm}{
    \begin{tikzpicture}
        \begin{axis}[
                width=0.5\textwidth,
                height=.4\textwidth,                
                ymin=50,ymax=100,xmin=0,xmax=550,
                axis y line*=left,
                ylabel={Avg. CPU capacity},                
                xlabel={Number of FNs},
                title style={at={(0.5,-0.23)},anchor=north},
                title = (a)
            ]
            \addplot[smooth,mark=o,black] table[x=noofFNs,y=avgCPUcapa]{\envAnalysisFNresCapacity}; 
            \label{CPU} 
        \end{axis}
        \begin{axis}[
              ybar,            
              yticklabel pos=right,              
              axis y line*=right,
              width=0.5\textwidth,
              height=.4\textwidth,
              axis x line=none,
              ymin=100, ymax=180,
              xmin=0,xmax=550,
              ylabel={Avg. Memory Capacity},
              legend columns=2,
              legend style={at={(0.84,0.97)}, draw=none,  /tikz/every even column/.append style={column sep=5mm}},
        ]
        \addlegendimage{/pgfplots/refstyle=CPU}\addlegendentry{CPU}
        \addplot[fill opacity=0.5, color=blue, pattern=crosshatch dots] table[x=noofFNs,y=avgMEMcapa]{\envAnalysisFNresCapacity};  
        \addlegendentry{Memory}
        \end{axis}
    \end{tikzpicture} \label{fig:sim:envAnalysis:FNresCapacity}}}\hspace{1cm}
\subfloat{
    \resizebox{0.35\textwidth}{40mm}{
    \begin{tikzpicture}
        \begin{axis}[              
                width=0.5\textwidth,
                height=.4\textwidth,                
                ymin=320,ymax=360,xmin=0,xmax=550,
                axis y line*=left,
                ylabel={Avg. bandwidth with FCI},                
                xlabel={Number of FNs},
                title style={at={(0.5,-0.23)},anchor=north},
                title = (b)
            ]
            \addplot[smooth,mark=o,black] table[x=noofFNs,y=avgBWwithFCI]{\envAnalysisFNNWCapacity}; 
            \label{BW} 
        \end{axis}
        \begin{axis}[
              ybar,            
              yticklabel pos=right,              
              axis y line*=right,
              width=0.5\textwidth,
              height=.4\textwidth,
              axis x line=none,
              ymin=60, ymax=85,
              xmin=0,xmax=550,
              ylabel={Avg. latency with FCI},
              legend columns=2,
              legend style={at={(0.84,0.97)}, draw=none, /tikz/every even column/.append style={column sep=5mm}},
        ]
        \addlegendimage{/pgfplots/refstyle=BW}\addlegendentry{Bandwidth}
        \addplot[fill opacity=0.5, color=blue, pattern=crosshatch dots] table[x=noofFNs,y=avgLatencywithFCI]{\envAnalysisFNNWCapacity};  
        \addlegendentry{Latency}
        \end{axis}        
    \end{tikzpicture}\label{fig:sim:envAnalysis:FNNWCapacity}} } \\
\subfloat{\resizebox{0.35\textwidth}{40mm}{
    \begin{tikzpicture}
        \begin{axis}[              
                width=0.5\textwidth,
                height=.4\textwidth,                
                ymin=50,ymax=550,xmin=0,xmax=220,
                axis y line*=left,
                ylabel={Total no. of FNs},                
                xlabel={Number of FCIs},
                title style={at={(0.5,-0.23)},anchor=north},
                title = (c)
            ]
            \addplot[smooth,mark=o,black] table[x=noofFCIs,y=totalFNs]{\envAnalysisFCIwithFNs}; 
            \label{FNs} 
        \end{axis}
        \begin{axis}[
              ybar,            
              yticklabel pos=right,              
              axis y line*=right,
              width=0.5\textwidth,
              height=.4\textwidth,
              axis x line=none,
              ymin=2, ymax=3,xmin=0,xmax=220,
              ylabel={Avg. no. of FNs},
              legend columns=2,
              legend style={at={(0.84,0.97)}, draw=none, /tikz/every even column/.append style={column sep=5mm}},
        ]
        \addlegendimage{/pgfplots/refstyle=FNs}\addlegendentry{Total FNs}
        \addplot[fill opacity=0.5, color=blue, pattern=crosshatch dots] table[x=noofFCIs,y=avgFNs]{\envAnalysisFCIwithFNs};  
        \addlegendentry{Avg. FNs}
        \end{axis} 
    \end{tikzpicture}\label{fig:sim:envAnalysis:FCIwithFNs}} } \hspace{1cm}
\subfloat{\resizebox{0.35\textwidth}{40mm}{
    \begin{tikzpicture}
        \begin{axis}[              
                width=0.5\textwidth,
                height=.4\textwidth,                
                ymin=650,ymax=800,xmin=0,xmax=220,
                axis y line*=left,
                ylabel={Avg. bandwidth with cloud},                
                xlabel={Number of FCIs},
                title style={at={(0.5,-0.23)},anchor=north},
                title = (d)
            ]
            \addplot[smooth,mark=o,black] table[x=noofFCIs,y=avgBWwithCloud]{\envAnalysisFCIwithCloudNWres}; 
            \label{BW} 
        \end{axis}
        \begin{axis}[
              ybar,            
              yticklabel pos=right,              
              axis y line*=right,
              width=0.5\textwidth,
              height=.4\textwidth,
              axis x line=none,
              ymin=120, ymax=180,xmin=0,xmax=220,
              ylabel={Avg. latency with cloud},
              legend pos=north east,
              legend columns=2,
              legend style={draw=none, /tikz/every even column/.append style={column sep=5mm}}
        ]
        \addlegendimage{/pgfplots/refstyle=BW}\addlegendentry{Bandwidth}
        \addplot[fill opacity=0.5, color=blue, pattern=crosshatch dots] table[x=noofFCIs,y=avgLatencyWithCloud]{\envAnalysisFCIwithCloudNWres};  
        \addlegendentry{Latency}
        \end{axis} 
    \end{tikzpicture}\label{fig:sim:envAnalysis:FCIwithCloudNWres}} }
    \caption{Analysis of the MFC environment (i.e. FNs, FCIs, and cloud) and their resource availability and connectivity.}
	\label{fig:sim:envAnalysis} 
\end{figure}

\revison{The MFC environment consists of $500$ FNs that are connected to $200$ FCIs. Each FCI is connected to at least one FN, whereas a single FN is connected to precisely one FCI. For each FN, the total amount of CPU capacity ranges from 50 through 100 and the values are assigned by following a random distribution. The total memory (RAM) for each FN is assigned randomly ranging from $200 GB$ to $400 GB$, where the unit $GB$ represents GigaByte. The approximate computation power of a FN is calculated as MIPS (Million instructions per  second), which ranges from $3000$ to $5000$.} 
\revison{As discussed before, each FN is connected to exactly one FCI. The maximum network bandwidth available between a FN and the corresponding FCI ranges from $300 Mbps$ through $400 Mbps$. In this paper, it is considered that a FCI may directly communicate with another FCI. In such a situation, there exist a network bandwidth capacity among FCIs, which ranges from $400 Mbps$ through $1000 Mbps$. Similarly, the bandwidth between FCIs and the cloud range from $400 Mbps$ through $1000 Mbps$. The network bandwidth unit is in Mbps (Mega bits Per Second). It is assumed that latency value between any FN and the corresponding FCI is small than the latency among FCIs and between FCIs and cloud. Latency values are assigned to the physical links by following a uniform distribution. The latency (in millisecond) between FN and FCIs ranges from $50 ms$ through $100 ms$. Similarly, the latency values for rest of the physical links range from $100 ms$ through $200 ms$. Technically, a task can be hosted on 1-hop or at a multi-hop distance. However, in the current implementation maximum number of hops is set to $2$. This means there are a maximum of two FCIs and no cloud are present in the physical path between any two FNs that are used to host user's tasks.}

\revison{Figure }\ref{fig:sim:envAnalysis}(a)\revison{ and }\ref{fig:sim:envAnalysis}(b)\revison{ present the computing resource and network resource distribution, respectively, among the FNs. Figure }\ref{fig:sim:envAnalysis}(a)\revison{ shows the average CPU and memory resource availability among the FNs. Though the minimum and maximum CPU resource assigned to a FN is $50$ and $100$, respectively, the average CPU resource available is approximately $75$ and $73$ when there are $50$ and $500$ FNs, respectively. Similarly, the average memory resource available ranges between $145GB$ and $150GB$ with the same number of FNs, as shown in Figure }\ref{fig:sim:envAnalysis}(a)\revison{. Figure }\ref{fig:sim:envAnalysis}(b)\revison{ represents the relationship between the number of FNs and the average bandwidth and latency with the connected nearby FCI. The x-axis of the figure represents the number of FNs, while the y-axis represents the average bandwidth and average latency with FCI as two separate series. The figure shows that as the number of FNs increases, the average bandwidth with FCI remains relatively constant, fluctuating around the value between $348Mbps$ and $353Mbps$. However, the average latency with FCI also shows a slightly increasing trend, from $75.4 ms$ for $50$ FNs to $75.6 ms$ for $500$ FNs, with a maximum of $76.2$ for $350$ FNs. This means that as the number of FNs increases, the average latency with FCI slightly increases, but it remains relatively low overall. Figure }\ref{fig:sim:envAnalysis}(b)\revison{ indicates that increasing the number of FNs has little impact on the average bandwidth with FCI, while it slightly increases the average latency with FCI.}

\revison{Figure }\ref{fig:sim:envAnalysis}(c)\revison{ and }\ref{fig:sim:envAnalysis}(d)\revison{ show the relationship of the FCIs with the number of FNs and the network resource availability with the cloud. In Figure }\ref{fig:sim:envAnalysis}(c)\revison{, the x-axis denotes the number of FCIs, while the left y-axis represents the total number of FNs and the right-side y-axis represents the average number of FNs that are connected to the FCI. This shows how the FCIs are distributed and connected with all the $500$ FNs in the simulated environment. The total number of FNs connected to the FCIs is constantly increasing and as expected, $498$ number of FNs are connected to at least one FCI. In the simulated environment, only two FNs are directly connected with cloud. In other words, two FNs have no nearby FCI. The average number of FNs that are connected to the FCIs ranges between $2.4$ and $2.6$, as shown in Figure }\ref{fig:sim:envAnalysis}(c)\revison{. Each connection between an FCI and the cloud is associated with a certain amount of network bandwidth and latency, as shown in Figure }\ref{fig:sim:envAnalysis}(d)\revison{. The minimum and maximum network bandwidth available with any pair of FCI and cloud connection is $400 Mbps$ and $1000 Mbps$, respectively. However, the average bandwidth available among FCIs and the cloud ranges between $720Mbps$ and $740Mbps$. Similarly, in the simulated environment, the average latency between FCI and cloud ranges between $145ms$ and $150ms$, as shown in Figure }\ref{fig:sim:envAnalysis}(d). 

\subsection{Users Application}
\revison{The maximum number of applications is set to $10000$. For each application, random number of tasks are generated ranging from $4$ through $12$. However, the total maximum number of tasks for the entire simulation is set to $100,000$. For each task, the CPU and memory demand range from $1-4 CPUs$ and $100 MB - 1000 MB$, respectively. The average makespan of each task is $350 ms$ with minimum and maximum makespan of $10 ms$ and $1000 ms$, respectively. The network bandwidth demand of edges is ranging from $100 Mbps$ and $200 Mbps$. Similarly, the average latency demand among tasks ranges between $10 ms$ and $50 ms$. The unit of latency is millisecond (ms). The minimum and maximum priority values of a task are $1$ and $5$, respectively. The bandwidth and latency demand of a task ranges between $100-200 Mbps$ and $10-300 ms$, respectively. The number of edges within an application is decided by a link probability of $0.6$. This indicates the connectivity probability among two task. }

\pgfplotstableread[row sep=\\,col sep=&]{
priority &	  totalNoOfTasks &	  avgTasksPerApp\\
1 &	  19154 &	  1.599398776\\
2 &	  20463 &	  2.214058334\\
3 &	  21184 &	  2.647084372\\
4 &	  17123 &	  1.506067589\\
5 &	  22076 &	  3.755280649\\
}\AppAnalysisPriorityTasks

\pgfplotstableread[row sep=\\,col sep=&]{
priority &	  avgCPUReq &	  avgMemReq &	  AvgMakespan\\
1 &	  1.891995851 &	  546.0494996 &	  503.8287706\\
2 &	  2.695035332 &	  545.5481142 &	  505.8583308\\
3 &	  3.085399336 &	  550.0472737 &	  503.4208021\\
4 &	  2.296676673 &	  549.1707227 &	  505.9124278\\
5 &	  2.36234532 &	  547.1823416 &	  503.0895224\\
}\AppAnalysisPriorityResourceReq

\pgfplotstableread[row sep=\\,col sep=&]{
NoOfApps &	  avgPriority &	  avgNoOfTasks\\
5 &	  3.434 &	  7.513\\
5.5 &	  3.127 &	  7.625\\
6 &	  2.933 &	  6.801\\
6.5 &	  3.081 &	  6.983\\
7 &	  2.739 &	  7.822\\
7.5 &	  2.948 &	  6.828\\
8 &	  3.192 &	  7\\
8.5 &	  2.803 &	  7.265\\
9 &	  3.253 &	  7.512\\
9.5 &	  2.858 &	  7.041\\
10 &	  3.009 &	  6.909\\
}\AppAnalysisAppTasksPriority
\pgfplotstableread[row sep=\\,col sep=&]{
NoOfApps &	  avgCPUReq &	  avgMemReq &	  avgMakespan\\
5 &	  2.824 &	  541.052 &	  502.227\\
5.5 &	  2.9 &	  546.304 &	  500.563\\
6 &	  1.739 &	  540.847 &	  500.918\\
6.5 &	  3.024 &	  544.11 &	  502.741\\
7 &	  2.132 &	  547.031 &	  501.629\\
7.5 &	  2.428 &	  547.946 &	  502.551\\
8 &	  2.731 &	  541.993 &	  502.793\\
8.5 &	  2.238 &	  541.878 &	  502.023\\
9 &	  2.825 &	  549.786 &	  502.622\\
9.5 &	  3.063 &	  547.29 &	  500.542\\
10 &	  2.098 &	  542.094 &	  500.674\\
}\AppAnalysisAppResourceReq
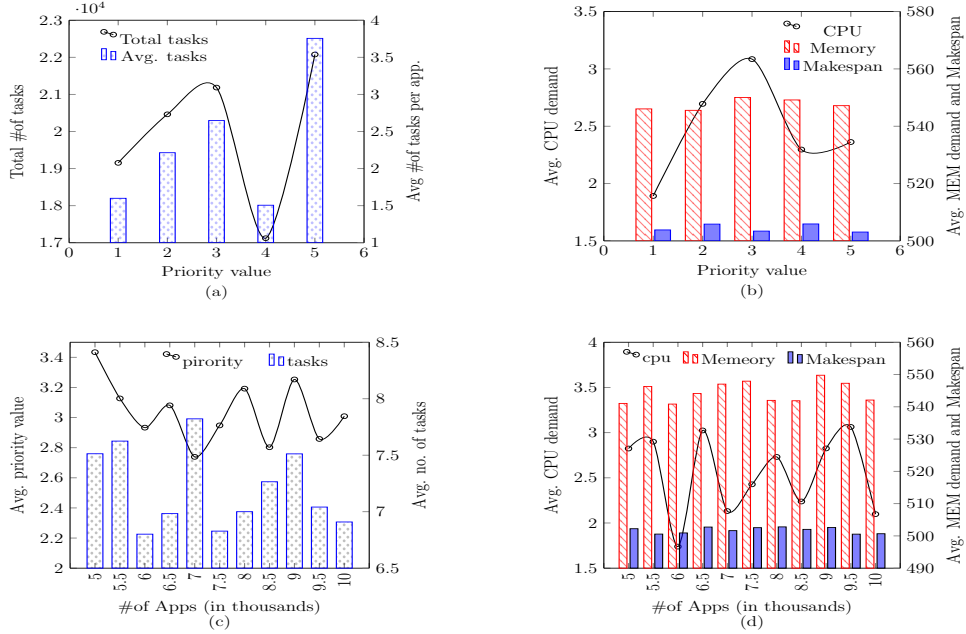
\begin{figure}[t]
	\centering
\subfloat{\resizebox{0.35\textwidth}{40mm}{
    \begin{tikzpicture}
        \begin{axis}[
                width=0.5\textwidth,
                height=0.5\textwidth,                
                ymin=17000,ymax=23000,xmin=0,xmax=6,
                axis y line*=left,
                ylabel={Total \#of tasks},                
                xlabel={Priority value},
                title style={at={(0.5,-0.2)},anchor=north},
                title = (a),
            ]
            \addplot[smooth,mark=o,black] table[x=priority,y=totalNoOfTasks]{\AppAnalysisPriorityTasks}; 
            \label{totalNoOfTasks} 
        \end{axis}
        \begin{axis}[
              ybar,            
              yticklabel pos=right,              
              axis y line*=right,
              width=0.5\textwidth,
              height=0.5\textwidth,
              axis x line=none,
              ymin=1, ymax=4,
              xmin=0,xmax=6,
              ylabel={Avg \#of tasks per app.},
              legend style={at={(0.5,0.97)}, draw=none, /tikz/every even column/.append style={column sep=5mm}},
        ]
        \addlegendimage{/pgfplots/refstyle=totalNoOfTasks}\addlegendentry{Total tasks}
        \addplot[fill opacity=0.5, color=blue, pattern=crosshatch dots,pattern color=blue] table[x=priority,y=avgTasksPerApp]{\AppAnalysisPriorityTasks};  
        \addlegendentry{Avg. tasks}
        \end{axis}
    \end{tikzpicture} \label{fig:sim:appAnalysis:PriorityTasks}}}\hspace{1cm}
\subfloat{
    \resizebox{0.35\textwidth}{40mm}{
    \begin{tikzpicture}
        \begin{axis}[              
                width=0.5\textwidth,
                height=.5\textwidth,                
                ymin=1.5,ymax=3.5,xmin=0,xmax=6,
                axis y line*=left,
                ylabel={Avg. CPU demand},                
                xlabel={Priority value},
                title style={at={(0.5,-0.2)},anchor=north},
                title = (b),
            ]
            \addplot[smooth,mark=o,black] table[x=priority,y=avgCPUReq]{\AppAnalysisPriorityResourceReq}; 
            \label{cpu} 
        \end{axis}        
        \begin{axis}[
              ybar,            
              yticklabel pos=right,              
              axis y line*=right,
              width=0.5\textwidth,
              height=.5\textwidth,
              axis x line=none,
              ymin=500,ymax=580,
              xmin=0,xmax=6,
              ylabel={Avg. MEM demand and Makespan},
              legend style={at={(0.97,0.97)}, draw=none, /tikz/every even column/.append style={column sep=5mm}},
        ]
        \addlegendimage{/pgfplots/refstyle=CPU}\addlegendentry{CPU}
        \addplot[fill opacity=0.5, color=red, pattern=north west lines,pattern color=red] table[x=priority,y=avgMemReq]{\AppAnalysisPriorityResourceReq};
        \addlegendentry{Memory}
        \addplot[fill opacity=0.5, color=blue, fill=blue] table[x=priority,y=AvgMakespan]{\AppAnalysisPriorityResourceReq};  
        \addlegendentry{Makespan}
        \end{axis}        
    \end{tikzpicture}\label{fig:sim:TasksPriority:PriorityResourceReq}} } \\
\subfloat{\resizebox{0.35\textwidth}{40mm}{
    \begin{tikzpicture}
        \begin{axis}[              
                width=0.5\textwidth,
                height=.5\textwidth,                
                ymin=2.0,ymax=3.5,
                axis y line*=left,
                ylabel={Avg. priority value},                
                xlabel={\#of Apps (in thousands)},
                xtick={4.5, 5, 5.5, 6, 6.5, 7, 7.5, 8, 8.5, 9, 9.5, 10, 10.5},
                xticklabels={, 5, 5.5, 6, 6.5, 7, 7.5, 8, 8.5, 9, 9.5, 10, },
                x tick label style={rotate=90,anchor=east},
                title style={at={(0.5,-0.22)},anchor=north},
                title = (c),
            ]
            \addplot[smooth,mark=o,black] table[x=NoOfApps,y=avgPriority]{\AppAnalysisAppTasksPriority}; 
            \label{pirority} 
        \end{axis}
        \begin{axis}[
              ybar,            
              yticklabel pos=right,              
              axis y line*=right,
              width=0.5\textwidth,
              height=.5\textwidth,
              axis x line=none,
              ymin=6.5, ymax=8.5, 
              xtick={4.5, 5, 5.5, 6, 6.5, 7, 7.5, 8, 8.5, 9, 9.5, 10, 10.5},
              xticklabels={, 5, 5.5, 6, 6.5, 7, 7.5, 8, 8.5, 9, 9.5, 10, },
              x tick label style={rotate=90,anchor=east},
              ylabel={Avg. no. of tasks},
              legend columns=2,
              legend style={at={(0.89,0.97)}, draw=none, /tikz/every even column/.append style={column sep=5mm}},
        ]
        \addlegendimage{/pgfplots/refstyle=pirority}\addlegendentry{pirority}
        \addplot[fill opacity=0.5, color=blue, pattern=crosshatch dots] table[x=NoOfApps,y=avgNoOfTasks]{\AppAnalysisAppTasksPriority};  
        \addlegendentry{tasks}
        \end{axis} 
    \end{tikzpicture}\label{fig:sim:appAnalysis:TasksPriority}} } \hspace{1cm}
\subfloat{\resizebox{0.35\textwidth}{40mm}{
    \begin{tikzpicture}
        \begin{axis}[              
                width=0.5\textwidth,
                height=.5\textwidth,                
                ymin=1.5,ymax=4,                
                xtick={4.5, 5, 5.5, 6, 6.5, 7, 7.5, 8, 8.5, 9, 9.5, 10, 10.5},
                xticklabels={, 5, 5.5, 6, 6.5, 7, 7.5, 8, 8.5, 9, 9.5, 10, },
                x tick label style={rotate=90,anchor=east},
                axis y line*=left,
                ylabel={Avg. CPU demand},                
                xlabel={\#of Apps (in thousands)},
                title style={at={(0.5,-0.22)},anchor=north},
                title = (d),
            ]
            \addplot[smooth,mark=o,black] table[x=NoOfApps,y=avgCPUReq]{\AppAnalysisAppResourceReq}; 
            \label{cpu} 
        \end{axis}
        \begin{axis}[
              ybar,            
              yticklabel pos=right,              
              axis y line*=right,
              width=0.5\textwidth,
              height=.5\textwidth,
              axis x line=none,
              ymin=490, ymax=560,            
                xtick={4.5, 5, 5.5, 6, 6.5, 7, 7.5, 8, 8.5, 9, 9.5, 10, 10.5},
                xticklabels={, 5, 5.5, 6, 6.5, 7, 7.5, 8, 8.5, 9, 9.5, 10, },
                x tick label style={rotate=90,anchor=east},
              ylabel={Avg. MEM demand and Makespan},
              legend columns=3,              
              legend style={draw=none, /tikz/every even column/.append style={column sep=2mm}}
        ]
        \addlegendimage{/pgfplots/refstyle=cpu}\addlegendentry{cpu}
        \addplot[bar width=5, fill opacity=0.5, color=red,pattern=north west lines,pattern color=red] table[x=NoOfApps,y=avgMemReq]{\AppAnalysisAppResourceReq};  \addlegendentry{Memeory}
        \addplot[bar width=5, fill opacity=0.5, fill=blue] table[x=NoOfApps,y=avgMakespan]{\AppAnalysisAppResourceReq};  
        \addlegendentry{Makespan}
        \end{axis} 
    \end{tikzpicture}\label{fig:sim:appAnalysis:resourceReq}} }
    \caption{Analysis of the applications and their tasks including the priorities and resource demand. }
	\label{fig:sim:appAnalysis} 
\end{figure}

\revison{Figure }\ref{fig:sim:appAnalysis}\revison{ gives an insight into the applications and their tasks, including the priorities and the resource demand. Figure B.11a shows the total and average number of tasks with different priority values. The priority values are assigned to the tasks in a programmatic manner. From Figure }\ref{fig:sim:appAnalysis}(a)\revison{, it can be observed that there are a total of $19154$, $20463$, and $21184$ number of tasks having priority $1$, $2$, and $3$. With a priority value $4$ there are a total of $17123$ number of tasks. In the entire simulation, the highest number of tasks have a priority value $5$, i.e. $22076$ tasks. The average number of tasks per application with a specific priority value is also analyzed and presented in the same figure. An average of $1.5$ number of tasks per application are with priority $4$, which is the lowest. On the other hand, an average of $3.75$ number of tasks per application have the priority $5$, which is the maximum, as shown in Figure }\ref{fig:sim:appAnalysis}(a)\revison{. Figure }\ref{fig:sim:appAnalysis}(b)\revison{ gives the tasks' resource configuration and makespan information with their priority. It can be seen that the tasks with priority $3$ are having highest CPU and memory demand, which is $3$ and $550MB$, respectively. Moreover, such tasks have comparatively lower makespan requirement, which is $503 ms$. The tasks with priority $2$ and $4$ have the higher makespan requirement of $505.8 ms$ and $505.9 ms$, respectively.}

\revison{The applications with average priority and the average number of tasks are also analyzed in Figure }\ref{fig:sim:appAnalysis}(c)\revison{. It can be observed that the average priority of the applications ranges between $2.5$ and $3.5$. The average number of tasks of $5000$ applications is $7.5$, whereas this number is reduced to $6.9$ when all the $10$ thousand applications are taken into account. The applications are further analyzed in terms of their resource (CPU, memory) demand and makespan requirement in Figure }\ref{fig:sim:appAnalysis}(d)\revison{. The average CPU demand with a different number of applications ranges between $1.7$ and $3.0$, whereas the memory demand ranges between $540 MB$ and $550 MB$. The average makespan requirement of all $10$  thousand applications is $500.6 ms$. The pattern shows that the distribution of CPU demand, memory demand, and makespan requirements are independent of each other.}

\section{Uncertainty Analysis}\label{apndx:uncertainAna}
\subsection{Uncertain resource availability impact on utilization}

\revison{Figure }\ref{fig:sim:uncertaintyAnalysis}\revison{ represents the relationship of fluctuation of resource availability, including network bandwidth and latency among FNs, FCIs, and cloud with their utilization. In other simulation instances, the HeRAFC algorithm takes the resource availability values of the entire environment, which remains unchanged until the algorithm handles the last application. For example, suppose the CPU availability of a FN is $75\%$ of its maximum capacity before HeRAFC starts processing the first application. In that case, CPU availability remain unchanged until the last application, provided no task is assigned to that FN.} 

\revison{It is implemented in a way that when the proposed algorithm starts, the resource availability at FNs and at the cloud is not $100\%$ to represent a real-life scenario. In this simulation, the distribution of resource availability follows a random distribution. As shown in Figure }\ref{fig:sim:simulationTime:for_both_algos}\revison{, the total simulation time required to assign $10000$ applications is approximately $230 sec$. However, in this simulation, we randomly modify the resource availability of FNs and cloud at an interval of $20 sec$, as shown in Figure }\ref{fig:sim:uncertaintyAnalysis}\revison{. The X-axis of all subfigures represents the time interval at which the resource availability of the entire environment is modified. }

\revison{Figure }\ref{fig:sim:uncertaintyAnalysis}(a)\revison{ shows the modified CPU and memory resource availability at different time intervals. It can be seen that the initial CPU and memory resource availability at FNs was $45\%$ and $74\%$. After $20 sec$, the CPU and memory configurations of the FNs were changed to $67\%$ and $79\%$, respectively. A similar pattern can be seen in Figure }\ref{fig:sim:uncertaintyAnalysis}(b)\revison{, where bandwidth among FNs and FCIs (Blue solid line) and bandwidth among FCI and cloud (Red solid line) resource availability is modified. Bandwidth availability among FNs and FCI ranges between $35\%$ and $89\%$ across the entire simulation. On the other hand, bandwidth among FCIs and cloud ranges between $45\%$ and $70\%$. As shown in Figure }\ref{fig:sim:uncertaintyAnalysis}(c)\revison{, the latency value, calculated in milliseconds, among FNs and FCI is modified at every $20 sec$, and the value varies between $70 ms$ and $80 ms$. Similarly, the latency among FCIs and cloud, at every $20 sec$, ranges between $140 ms$ and $150 ms$.}  

\pgfplotstableread[row sep=\\,col sep=&]{
timeInstance &	cpuAvailableAtFog &	memAvailableAtFog\\
0	&	44.6	&	74.83	\\
20	&	66.99	&	79.21	\\
40	&	27.87	&	58.38	\\
60	&	43.52	&	67.83	\\
80	&	52.11	&	82.86	\\
100	&	42.63	&	32.19	\\
120	&	49.26	&	55.32	\\
140	&	73.92	&	57.81	\\
160	&	41.25	&	56.62	\\
180	&	55.73	&	62.91	\\
200	&	54.85	&	67.59	\\
}\uncertaintyAnalysisCpuMem

\pgfplotstableread[row sep=\\,col sep=&]{
timeInstance &	BWAmongFNsFCI &	BWAmongFCICloud & LatencyAmongFNsFCI &	LatencyAmongFCICloud\\
0	&	63.37	&	59.67	&	76.26	&	143.49	\\
20	&	35.53	&	65.79	&	74.82	&	144.74	\\
40	&	67.28	&	46.36	&	73.49	&	149.05	\\
60	&	68.27	&	55.77	&	76.31	&	145.56	\\
80	&	42.96	&	63.64	&	70.88	&	145.53	\\
100	&	84.76	&	67.61	&	79.41	&	146.57	\\
120	&	88.81	&	58.41	&	76.7	&	144.08	\\
140	&	79.78	&	58.98	&	70.34	&	144.16	\\
160	&	61.95	&	71.4	&	72.59	&	144.97	\\
180	&	66.3	&	45.17	&	71.23	&	144.58	\\
200	&	74.29	&	50.43	&	77.58	&	150.37	\\
}\uncertaintyAnalysisNetwork

\pgfplotstableread[row sep=\\,col sep=&]{
timeInstance &	cpuUtilAtFog &	memUtilAtFog\\
0	&	30.45	&	23.14	\\
20	&	33.44	&	24.64	\\
40	&	37.38	&	30.6	\\
60	&	41.85	&	38.72	\\
80	&	47.32	&	44.5	\\
100	&	48.16	&	45.76	\\
120	&	54.6	&	51.04	\\
140	&	55.68	&	54.4	\\
160	&	56.12	&	55.18	\\
180	&	57	&	58.62	\\
200	&	57.74	&	59.74	\\
}\uncertaintyCpuMemUtil

\pgfplotstableread[row sep=\\,col sep=&]{
timeInstance &	BWUtilAmongFNsFCI &	BWUtilAmongFCICloud & LatencyUtilAmongFNsFCI &	LatencyUtilAmongFCICloud\\
0	&	42	&	23	&	155	&	1568	\\
20	&	45	&	23	&	162	&	1588	\\
40	&	46	&	23	&	184	&	1594	\\
60	&	46	&	25	&	189	&	2022	\\
80	&	49	&	25	&	198	&	2144	\\
100	&	49	&	26	&	199	&	2305	\\
120	&	57	&	27	&	215	&	2349	\\
140	&	61	&	27	&	217	&	2349	\\
160	&	63	&	28	&	277	&	2476	\\
180	&	64	&	29	&	292	&	2558	\\
200	&	67	&	30	&	299	&	2576	\\
}\uncertaintyAnalysisNetworkUtil
\begin{figure}[t]
	\centering
\pgfplotsset{    
        every axis title/.style={below right,at={(0.37,-0.2)}},
        symbolic x coords={0, 20, 40, 60, 80, 100, 120, 140, 160, 180, 200},
        ymin=0,ymax=100,  xmin=0,xmax=200,
        title style={at={(0.5,-0.22)},anchor=north},
        width=0.5\textwidth,
        height=.4\textwidth,
        nodes near coords align={vertical},
        y label style={at={(-0.1,0.5)}},
        legend pos=south west,
        legend columns=2,
        xlabel={Time (in sec)},
    }
\subfloat{\resizebox{0.32\textwidth}{40mm}{
    \begin{tikzpicture}
        \begin{axis}[
                ylabel={CPU-mem availability (in \%)},
                title = (a),
            ]
            \addplot table[x=timeInstance,y=cpuAvailableAtFog]{\uncertaintyAnalysisCpuMem};
            \addplot table[x=timeInstance,y=memAvailableAtFog] {\uncertaintyAnalysisCpuMem};
            \legend{CPU, Memory}
        \end{axis}
    \end{tikzpicture} \label{fig:sim:uncertaintyAnalysisCpuMem}}}
\subfloat{
    \resizebox{0.32\textwidth}{40mm}{
    \begin{tikzpicture}
        \begin{axis}[
                ylabel={Bandwidth availability (\%)},
                title = (b),
            ]
            \addplot table[x=timeInstance,y=BWAmongFNsFCI]{\uncertaintyAnalysisNetwork};
            \addplot table[x=timeInstance,y=BWAmongFCICloud] {\uncertaintyAnalysisNetwork};
            \legend{FNs-FCI, FCI-Cloud}
        \end{axis}
    \end{tikzpicture}\label{fig:sim:uncertaintyAnalysisBW}} }
\subfloat{\resizebox{0.32\textwidth}{40mm}{
    \begin{tikzpicture}
        \begin{axis}[
                ylabel={Latency (in ms)},
                ymin=0,ymax=200,
                title = (c),
            ]
             \addplot table[x=timeInstance,y=LatencyAmongFNsFCI]{\uncertaintyAnalysisNetwork};
            \addplot table[x=timeInstance,y=LatencyAmongFCICloud] {\uncertaintyAnalysisNetwork};
            \legend{FNs-FCI, FCI-Cloud}
        \end{axis}
    \end{tikzpicture}\label{fig:sim:uncertaintyAnalysisNetworkLat}} }
    \\ 
\pgfplotsset{    
        every axis title/.style={below right,at={(0.37,-0.2)}},
        symbolic x coords={0, 20, 40, 60, 80, 100, 120, 140, 160, 180, 200},
        ymin=0,ymax=100,  xmin=0,xmax=200,
        title style={at={(0.5,-0.22)},anchor=north},
        width=0.5\textwidth,
        height=.4\textwidth,
        nodes near coords align={vertical},
        y label style={at={(-0.1,0.5)}},
        legend pos=south west,
        legend columns=2,
        xlabel={Time (in sec)},
    }
\subfloat{\resizebox{0.32\textwidth}{40mm}{
    \begin{tikzpicture}
        \begin{axis}[
                ylabel={CPU-mem. utilization (in \%)},
                title = (d),
            ]
            \addplot table[x=timeInstance,y=cpuUtilAtFog]{\uncertaintyCpuMemUtil};
            \addplot table[x=timeInstance,y=memUtilAtFog] {\uncertaintyCpuMemUtil};
            \legend{CPU, Memory}
        \end{axis}
    \end{tikzpicture} \label{fig:sim:uncertaintyCpuMemUtil}}}
\subfloat{
    \resizebox{0.32\textwidth}{40mm}{
    \begin{tikzpicture}
        \begin{axis}[
                ylabel={Bandwidth Utilization (\%)},
                title = (e),
            ]
            \addplot table[x=timeInstance,y=BWUtilAmongFNsFCI]{\uncertaintyAnalysisNetworkUtil};
            \addplot table[x=timeInstance,y=BWUtilAmongFCICloud] {\uncertaintyAnalysisNetworkUtil};
            \legend{FNs-FCI, FCI-Cloud}
        \end{axis}
    \end{tikzpicture}\label{fig:sim:uncertaintyAnalysisBWUtil}} }
\subfloat{\resizebox{0.32\textwidth}{40mm}{
    \begin{tikzpicture}
        \begin{axis}[
                ylabel={Average latency (in ms)},
                ymin=100,ymax=5400,
                title = (f),
                scaled y ticks=base 10:-2,
                legend style={at={(0.1,0.25)}}
            ]
            \addplot table[x=timeInstance,y=LatencyUtilAmongFNsFCI]{\uncertaintyAnalysisNetworkUtil};
            \addplot table[x=timeInstance,y=LatencyUtilAmongFCICloud] {\uncertaintyAnalysisNetworkUtil};
            \legend{FNs-FCI, FCI-Cloud}
        \end{axis}
    \end{tikzpicture}\label{fig:sim:uncertaintyAnalysisLatUtil}} }
    \caption{Comparison of only CLOUD resource utilization (a) CPU and memory availability, (b) Bandwidth availability, (c)  Latency (in ms), (d) CPU-memory utilization (in \%), (e) Bandwidth Utilization (\%), (f) Average latency of tasks (in ms).}	\label{fig:sim:uncertaintyAnalysis} 
\end{figure}
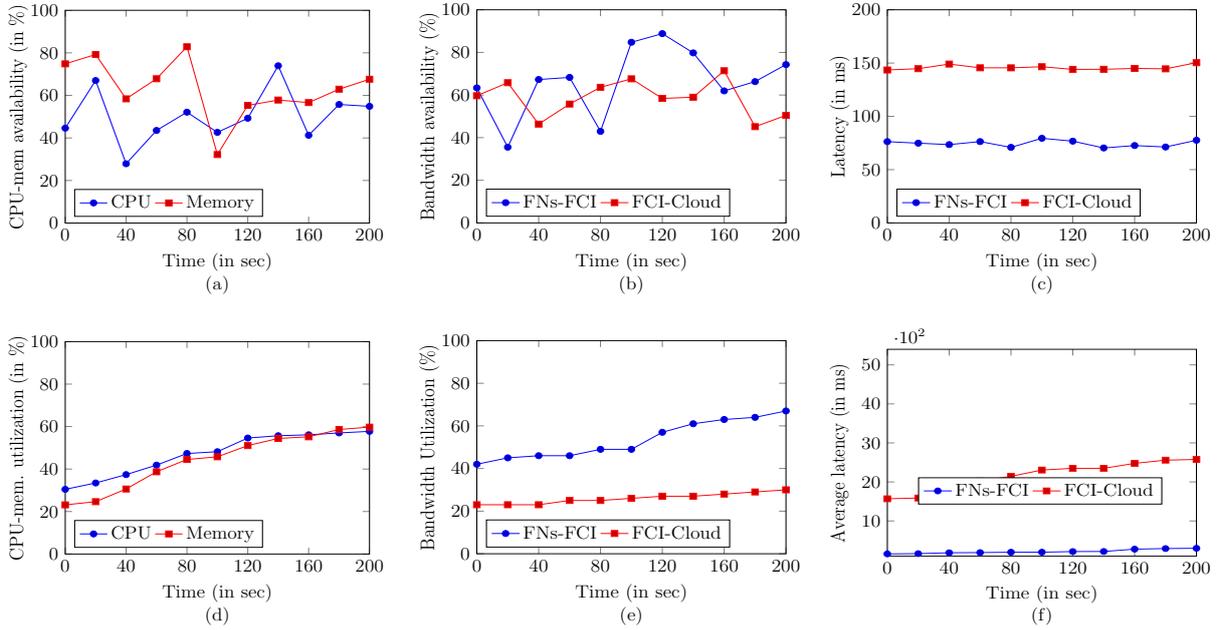

\revison{Under the above uncertain resource availability, Figures }\ref{fig:sim:uncertaintyAnalysis}(d)\revison{ - }\ref{fig:sim:uncertaintyAnalysis}(f)\revison{ represent the impact on resource utilization. It is observed that such uncertain conditions and fluctuated resource availability have a higher negative impact on utilization. Figure }\ref{fig:sim:uncertaintyAnalysis}(d)\revison{ shows the CPU and memory utilization of the FNs. In comparison with the results in Figure }\ref{fig:sim:compCPUUtil_F}\revison{ and }\ref{fig:sim:compMemUtil_F}\revison{, it is observed that the CPU and memory resources are being underutilized by approximately $10\%$. This can be observed in the figure where the maximum CPU and memory utilization are less than $60\%$ each. At the same time, this utilization is observed to be more than $65\%$, as in Figure }\ref{fig:sim:compCPUUtil_F}\revison{ and }\ref{fig:sim:compMemUtil_F}.

\revison{A similar pattern is observed in the case of network bandwidth utilization. The maximum bandwidth utilization among FNs and among FCI and cloud is observed as $78\%$ and $40\%$, in Figure }\ref{fig:sim:compBWUtil_F}\revison{ and }\ref{fig:sim:compBWUtil_C}\revison{, respectively. However, this utilization is reduced by more than $10\%$ when the bandwidth resource availability fluctuates, as shown in Figure }\ref{fig:sim:uncertaintyAnalysis}(e)\revison{. The bandwidth utilization is less than $67\%$ among FNs and FCI, and the utilization among FCI and cloud is less than $30\%$. The uncertain latency among FNs, FCI, and cloud is also analyzed and presented the results in Figure }\ref{fig:sim:uncertaintyAnalysis}(f)\revison{. The latency of the network communications among FNs and FCIs increases from $155 ms$ to $300 ms$ while processing all the applications by HeRAFC algorithm, which is more than $30\%$ increase from the previous result, shown in Figure }\ref{fig:sim:Ltncy_F}\revison{. Similarly, the minimum latency among FCIs and cloud is $1570\%$, which is more than approximately $30\%$ of the previous result, as shown in Figure }\ref{fig:sim:Ltncy_C}\revison{. Overall this observation indicates that latency is the worst-performing parameter among others when the network communication becomes very unstable and fluctuates over time.}

\section{Simulation time analysis}\label{apndx:simTimeAna}
\revison{Figure }\ref{fig:sim:simulationTime}\revison{ helps in visualizing the performance of Algorithm }\ref{algo:HeRAFC:tskOrdr}\revison{ and }\ref{algo:HeRAFC}\revison{ in terms of simulation time taken for different numbers of applications, ranging from $5000$ to $10000$. The number of tasks ranges from $4$ to $12$ per application. However, the maximum number of tasks is set to $100$ thousand. The x-axis represents the number of applications, while the Y-axis in all the sub-figures shows the simulation time. The results suggest that the total simulation time for both algorithms increases as the number of applications increases. Figure }\ref{fig:sim:simulationTime}(a)\revison{ represents the time taken by Algorithm }\ref{algo:HeRAFC:tskOrdr}\revison{ and }\ref{algo:HeRAFC}\revison{ to process a certain number of applications. For instance, as in Figure }\ref{fig:sim:simulationTime:totaltime}\revison{, when there were $5000$ applications, Algorithm }\ref{algo:HeRAFC:tskOrdr}\revison{ takes a total of approximately $5$ seconds, while Algorithm }\ref{algo:HeRAFC}\revison{ takes more than $100$ seconds. Similarly, when the number of applications increases to  $10000$, Algorithm }\ref{algo:HeRAFC:tskOrdr}\revison{ takes approx. $10$ seconds, while Algorithm }\ref{algo:HeRAFC:tskOrdr}\revison{ takes more than $220$ seconds. It is observed (also expected) the total simulation time taken by Algorithm }\ref{algo:HeRAFC}\revison{ is consistently higher than that of Algorithm }\ref{algo:HeRAFC:tskOrdr}\revison{, across all tested scenarios.}   

\pgfplotstableread[row sep=\\,col sep=&]{
noofApp &	  total_algo1 &	  total_algo2 &	  avg_algo1 &	  avg_algo2 &	  total_algo1_algo2 &	  avg_algo1_algo2\\
5 &	  5.036248836 &	  109.744 &	  1.52 &	  20.29 &	  114.7802488 &	  21.81\\
5.5 &	  5.523605199 &	  120.782 &	  0.72 &	  22.4 &	  126.3056052 &	  23.12\\
6 &	  6.00564462 &	  131.454 &	  0.83 &	  20.26 &	  137.4596446 &	  21.09\\
6.5 &	  6.522686131 &	  142.388 &	  1.61 &	  22.83 &	  148.9106861 &	  24.44\\
7 &	  7.014720569 &	  153.29 &	  0.76 &	  20.65 &	  160.3047206 &	  21.41\\
7.5 &	  7.521068268 &	  164.203 &	  0.84 &	  20.4 &	  171.7240683 &	  21.24\\
8 &	  8.012624086 &	  175.164 &	  0.62 &	  21.07 &	  183.1766241 &	  21.69\\
8.5 &	  8.511775842 &	  186.118 &	  1.08 &	  21.85 &	  194.6297758 &	  23.83\\
9 &	  8.998254172 &	  196.953 &	  1.02 &	  21.9 &	  205.9512542 &	  23.66\\
9.5 &	  9.504412117 &	  208.162 &	  1.18 &	  22 &	  217.6664121 &	  23.18\\
10 &	  10.01442243 &	  218.831 &	  1.25 &	  21.9 &	  228.8454224 &	  23.15\\
}\simulationTime
\begin{figure}[t]
	\centering
\subfloat{\resizebox{0.32\textwidth}{40mm}{
    \begin{tikzpicture}
    \pgfplotsset{every axis/.append style={font=\footnotesize}}
        \begin{axis}[                  
                width=0.5\textwidth,
                height=.4\textwidth,                
                 ymin=0,ymax=250,
                xtick={4.5, 5, 5.5, 6, 6.5, 7, 7.5, 8, 8.5, 9, 9.5, 10, 10.5},
                xticklabels={, 5, 5.5, 6, 6.5, 7, 7.5, 8, 8.5, 9, 9.5, 10, },
                x tick label style={rotate=90,anchor=east},
                axis y line*=left,
                ylabel={By Algo 1 (in sec)},                
                xlabel={No. of apps. (in thousands)},
                title style={at={(0.5,-0.3)},anchor=north,yshift=-0.1},
                title = (a) Total simulation time by Algo 1 and 2,
            ]
            \addplot[smooth,mark=o,red] table[x=noofApp,y=total_algo1]{\simulationTime}; 
            \label{simulationTime_total_algo1} 
        \end{axis}
        \begin{axis}[
              ybar=0.3, 
              yticklabel pos=right,              
              axis y line*=right,
              width=0.5\textwidth,
              height=.4\textwidth,
              axis x line=none,
              ymin=0, ymax=250,
              xtick={4.5, 5, 5.5, 6, 6.5, 7, 7.5, 8, 8.5, 9, 9.5, 10, 10.5},
              xticklabels={, 5, 5.5, 6, 6.5, 7, 7.5, 8, 8.5, 9, 9.5, 10, },
              x tick label style={rotate=90,anchor=east},
              ylabel={By Algo 2 (in sec)},
              legend columns=2,
              legend style={at={(0.6,0.97)}, draw=none,  /tikz/every even column/.append style={column sep=5mm}},
        ]
        \addlegendimage{/pgfplots/refstyle=simulationTime_total_algo1}\addlegendentry{Algo 1}
        \addplot[fill opacity=0.5, color=blue, pattern=crosshatch dots, pattern color=blue] table[x=noofApp,y=total_algo2]{\simulationTime};  
        \addlegendentry{Algo 2}
        \end{axis}
    \end{tikzpicture} \label{fig:sim:simulationTime:totaltime}}}
\subfloat{
    \resizebox{0.32\textwidth}{40mm}{
    \begin{tikzpicture}
    \pgfplotsset{every axis/.append style={font=\footnotesize}}
        \begin{axis}[              
                width=0.5\textwidth,
                height=.4\textwidth,                
                ymin=0,ymax=30,
                xtick={4.5, 5, 5.5, 6, 6.5, 7, 7.5, 8, 8.5, 9, 9.5, 10, 10.5},
                xticklabels={, 5, 5.5, 6, 6.5, 7, 7.5, 8, 8.5, 9, 9.5, 10, },
                x tick label style={rotate=90,anchor=east},
                axis y line*=left,
                ylabel={By Algo 1 (in ms)},                
                xlabel={No. of apps. (in thousands)},
                title style={at={(0.5,-0.3)},anchor=north,yshift=-0.1},
                title = (b) Average simulation time by Algo 1 and 2,                
            ]
            \addplot[smooth,mark=o,red] table[x=noofApp,y=avg_algo1]{\simulationTime}; 
            \label{simulationTime_avg_algo1} 
        \end{axis}
        \begin{axis}[
              ybar=0.3,            
              yticklabel pos=right,              
              axis y line*=right,
              width=0.5\textwidth,
              height=.4\textwidth,
              axis x line=none,
              ymin=0.00, ymax=30.00,
              xtick={4.5, 5, 5.5, 6, 6.5, 7, 7.5, 8, 8.5, 9, 9.5, 10, 10.5},
                xticklabels={, 5, 5.5, 6, 6.5, 7, 7.5, 8, 8.5, 9, 9.5, 10, },
                x tick label style={rotate=90,anchor=east},
              ylabel={By Algo 2 (in ms)},                
              legend columns=2,
              legend style={at={(0.84,0.97)}, draw=none, /tikz/every even column/.append style={column sep=5mm}},
        ]
        \addlegendimage{/pgfplots/refstyle=simulationTime_avg_algo1}\addlegendentry{Algo 1}
        \addplot[fill opacity=0.5, color=blue, pattern=crosshatch dots, pattern color=blue] table[x=noofApp,y=avg_algo2]{\simulationTime};  
        \addlegendentry{Algo 2}
        \end{axis}        
    \end{tikzpicture} \label{fig:sim:simulationTime:avgtime}} }
\subfloat{\resizebox{0.32\textwidth}{40mm}{
    \begin{tikzpicture}
    \pgfplotsset{every axis/.append style={font=\footnotesize}}
        \begin{axis}[              
                width=0.5\textwidth,
                height=.4\textwidth,                
                ymin=0, ymax=250,
                xtick={4.5, 5, 5.5, 6, 6.5, 7, 7.5, 8, 8.5, 9, 9.5, 10, 10.5},
                xticklabels={, 5, 5.5, 6, 6.5, 7, 7.5, 8, 8.5, 9, 9.5, 10, },
                x tick label style={rotate=90,anchor=east},
                axis y line*=left,
                ylabel={Total time (in sec)},                
                xlabel={No. of apps. (in thousands)},
                title style={at={(0.5,-0.3)},anchor=north,yshift=-0.1},
                title = (c) Simulation time by both algorithms,
            ]
            \addplot[smooth,mark=o,red] table[x=noofApp,y=avg_algo1_algo2]{\simulationTime}; 
            \label{simulationTime_Average} 
        \end{axis}
        \begin{axis}[
              ybar=0.3,            
              yticklabel pos=right,              
              axis y line*=right,
              width=0.5\textwidth,
              height=.4\textwidth,
              axis x line=none,
              ymin=0, ymax=250,
              xtick={4.5, 5, 5.5, 6, 6.5, 7, 7.5, 8, 8.5, 9, 9.5, 10, 10.5},
              xticklabels={, 5, 5.5, 6, 6.5, 7, 7.5, 8, 8.5, 9, 9.5, 10, },
              x tick label style={rotate=90,anchor=east},
              ylabel={Avg. time (in ms)},
              legend columns=2,
              legend style={at={(0.7,0.97)}, draw=none, /tikz/every even column/.append style={column sep=5mm}},
        ]
        \addlegendimage{/pgfplots/refstyle=simulationTime_Average}\addlegendentry{Average}
        \addplot[fill opacity=0.5, color=blue, pattern=crosshatch dots, pattern color=blue] table[x=noofApp,y=total_algo1_algo2]{\simulationTime};  
        \addlegendentry{Total}
        \end{axis} 
    \end{tikzpicture}\label{fig:sim:simulationTime:for_both_algos}} } 
    \caption{Simulation time by Algorithm 1, Algorithm 2 and by both combined.}
	\label{fig:sim:simulationTime} 
\end{figure}
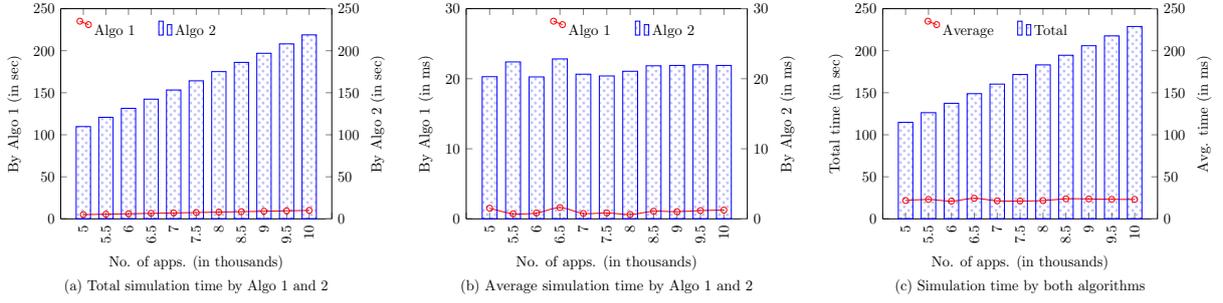

\revison{Similar to the above results, the average time to handle an application by Algorithm }\ref{algo:HeRAFC:tskOrdr}\revison{ and }\ref{algo:HeRAFC}\revison{ is also observed and presented in Figure }\ref{fig:sim:simulationTime}(b)\revison{. Algorithm }\ref{algo:HeRAFC:tskOrdr}\revison{ takes an average of $1.5ms$ when the number of applications is $5000$, whereas Algorithm }\ref{algo:HeRAFC}\revison{ takes an average of more than $20ms$ for the same number of applications, as shown in Figure }\ref{fig:sim:simulationTime}(b)\revison{. Algorithm }\ref{algo:HeRAFC:tskOrdr}\revison{ and }\ref{algo:HeRAFC}\revison{ took an average of $1.25ms$ and $21.9ms$ per application, respectively, when the number of applications increases to $10000$. 
Unlike Figure }\ref{fig:sim:simulationTime}(a)\revison{, presenting the time taken by individuals algorithms, Figure }\ref{fig:sim:simulationTime}(c)\revison{ presents the average and total time taken by both the Algorithms }\ref{algo:HeRAFC:tskOrdr}\revison{ and }\ref{algo:HeRAFC}\revison{. Figure }\ref{fig:sim:simulationTime}(c)\revison{ shows that approximately $114 sec$ is required to process $5000$ applications, with an average of $21ms$ per application. When the number of applications increases to $100000$, a total of $228sec$ (with an average of $23ms$ per application) is required to assign those applications to fog and cloud environments, as shown in Figure }\ref{fig:sim:simulationTime}(c)\revison{. Overall, the Figure }\ref{fig:sim:simulationTime}\revison{ provides a comparison of the simulation time performance of two algorithms for different application scenarios.}

\end{document}